\newcommand{\tab}{\hspace*{1em}}
\theoremstyle{plain}
\newtheorem{thm}{Theorem$\!$}
\newenvironment{theorem}
{\begin{thm}\hspace*{-1ex}{\bf.}}{\end{thm}}
\newtheorem{clm}[thm]{Claim$\!$}
\newtheorem{lem}[thm]{Lemma$\!$}
\newtheorem{prop}[thm]{Proposition$\!$}
\newtheorem{cor}[thm]{Corollary$\!$}
\newtheorem{defn}[thm]{Definition$\!$}
\newtheorem{xmpl}{Example$\!$}
\newtheorem{cnstr}{Construction$\!$}
\newtheorem{alg}{Algorithm$\!$}
\newcounter{enumrom}
\renewcommand{\theenumrom}{(\roman{enumrom})}
\renewcommand{\@endtheorem}{\endtrivlist}
\renewcommand{\thefigure}{{\@arabic\c@figure}}
\renewcommand{\fnum@figure}{{\bf Figure\,\thefigure}}
\newcommand{\cC}{{\cal C}}
\newcommand{\cM}{{\cal M}}
\newcommand{\cW}{{\cal W}}
\DeclareMathOperator{\sgn}{sgn}
\DeclareMathOperator{\spun}{span}
\begin{document}


\title{\textbf{Zigzag Codes: MDS Array Codes with Optimal Rebuilding}
\vspace*{-0.2ex}}

\author{\IEEEauthorblockN{Itzhak Tamo\IEEEauthorrefmark{1}\IEEEauthorrefmark{2}, Zhiying Wang\IEEEauthorrefmark{1}, and Jehoshua Bruck\IEEEauthorrefmark{1}\\}
\IEEEauthorblockA{\IEEEauthorrefmark{1}Electrical Engineering Department,
California Institute of Technology,
Pasadena, CA 91125, USA \\}
\IEEEauthorblockA{\IEEEauthorrefmark{2}Electrical and Computer Engineering,
Ben-Gurion University of the Negev,
Beer Sheva 84105, Israel\\}
{\it \{tamo, zhiying, bruck\}@caltech.edu}\vspace*{-2.0ex}}

\maketitle

\begin{abstract}
MDS array codes are widely used in storage systems to protect data against erasures. We address the \emph{rebuilding ratio} problem, namely, in the case of erasures, what is the fraction of the remaining information that needs to be accessed in order to rebuild \emph{exactly} the lost information? It is clear that when the number of erasures equals the maximum number of erasures that an MDS code can correct then the rebuilding ratio is $1$ (access all the remaining information). However, the interesting and more practical case is when the number of erasures is smaller than the erasure correcting capability of the code. For  example, consider an MDS code that can correct two erasures: What is the smallest amount of information that one needs to access in order to correct a single erasure? Previous work showed that the rebuilding ratio is bounded between $\frac{1}{2}$ and $\frac{3}{4}$, however, the exact value was left as an open problem. In this paper, we solve this open problem and prove that for the case of a single erasure with a $2$-erasure correcting code, the rebuilding ratio is $\frac{1}{2}$. In general, we construct a new family of $r$-erasure correcting MDS array codes that has optimal rebuilding ratio of $\frac{e}{r}$ in the case of $e$ erasures, $1 \le e \le r$. Our array codes have efficient encoding and decoding algorithms (for the case $r=2$ they use a finite field of size $3$) and an optimal update property.
\end{abstract}



\section{Introduction}
Erasure-correcting codes are the basis of the ubiquitous RAID schemes for storage systems, where disks correspond to symbols in the code. Specifically, RAID schemes are based on MDS (maximum distance separable) array codes that enable optimal storage and efficient encoding and decoding algorithms. With $r$ redundancy symbols, an MDS code is able to reconstruct the original information if no more than $r$ symbols are erased. An array code is a two dimensional array, where each column corresponds to a symbol in the code and is stored in a disk in the RAID scheme. We are going to refer to a disk/symbol as a node or a column interchangeably, and an entry in the array as an element. Examples of MDS array codes are EVENODD \cite{Shuki-evenodd,Blaum96mdsarray}, B-code \cite{B-code}, X-code \cite{ x-code}, RDP \cite{RDP-code}, and STAR-code \cite{star-code}.

Suppose that some nodes are erased in a systematic MDS array code, we will rebuild them by accessing (reading) some information in the surviving nodes, all of which  are  assumed to be accessible. The fraction of the accessed information in the surviving nodes is called the \emph{rebuilding  ratio}. If $r$ nodes are erased, then the rebuilding ratio is $1$ since we need to read all the remaining information. Is it possible to lower this ratio for less than $r$ erasures? Apparently, it is possible: Figure \ref{fig:firstFigure} shows an
example of our new MDS code with $2$ information nodes and $2$ redundancy nodes, every node has $2$ elements, and operations are over a finite field of size $3$. Consider the rebuilding of the first information node, it requires access to $3$ elements out of $6$ (a rebuilding ratio of $\frac{1}{2}$), because $a= (a+b)-b$ and $c= (c+b)-b$. In practice, there is a difference between erasures of the information (also called systematic) and the parity nodes. An erasure of the former will affect the information access time since part of the raw information is missing, however erasure of the latter does not have such effect, since the entire information is accessible.  Moreover, in most storage systems the number of parity nodes is negligible compared to the number of systematic nodes. Therefore our constructions focus on the optimally of the rebuilding ratio related to the systematic nodes.

\begin{figure}[htp]
	\centering
	\includegraphics[width=.47\textwidth]{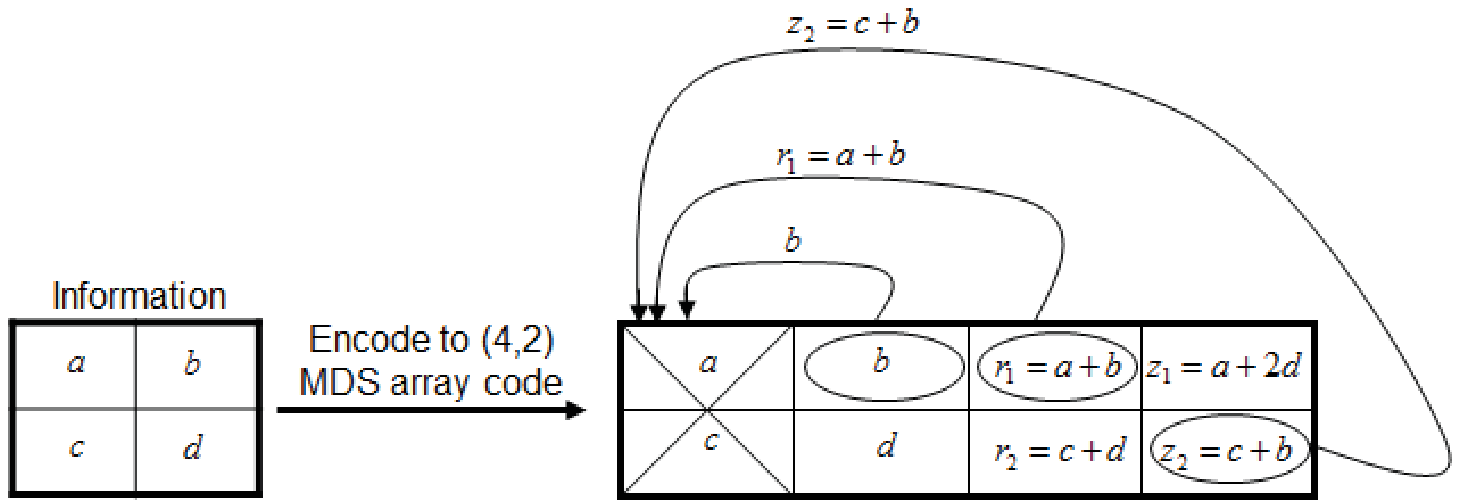}
\caption{Rebuilding of a $(4,2)$ MDS array code over $\mathbb{F}_{3}$. Assume the first node (column) is erased. }
	\label{fig:firstFigure}
	\vspace{-0.1cm}
	\end{figure}

In \cite{Dimakis2010,Wu07deterministicregenerating}, a related problem is discussed: The nodes are assumed to be distributed and fully connected in a network, and the concept of a \emph{repair bandwidth} is defined as the minimum amount of data that needs to be transmitted in the network in order to rebuild the erased nodes. In contrast to our concept of the \emph{rebuilding ratio} a transmitted element of data can be a function of a number of elements that are accessible on the same node. In addition, in their general framework, an acceptable rebuilding is one that retains the MDS property and not necessarily rebuilds the original erased node, whereas, we restrict our solutions to \emph{exact} rebuilding. It is clear that our framework is a special case of the general framework, hence, the repair bandwidth is a lower bound on the rebuilding ratio. What is known about lower bounds on the repair bandwidth? In \cite{Dimakis2010} it was proved that a lower bound on the repair bandwidth for an $(n,k)$ MDS code is:
\begin{equation} \label{eq:tradeoff}
\frac{\cM}{k}\cdot \frac{n-1}{n-k} .
\end{equation}
Here the code has a total of $n$ nodes with $k$ nodes of information and $r=n-k$ nodes of redundancy/parity, where $\cM$ is the total amount of information.
Also all the surviving nodes are assumed to be accessible. It can be verified that Figure \ref{fig:firstFigure} matches this lower bound. Note that Equation (\ref{eq:tradeoff}) represents the amount of information, it should be normalized to reach the ratio. A number of researchers addressed the repair bandwidth problem \cite{Dimakis2010,5206008,Dimakis-interference-alignment,Kumar09,Suh-alignment,Kumar2009,Changho-Suh2010,Cadambe2010,Wu07deterministicregenerating,Rashmi11}, however the constructed code achieving the lower bound all have low code rate, i.e., $k/n<1/2$.
And it was shown by interference alignment in \cite{Cadambe2010,Changho-Suh2010} that this bound is asymptotically achievable for exact repair.

Instead of trying to construct MDS codes that can be easily rebuilt, a different approach \cite{zhiying10, XiangXLC10} was used by trying to find ways to rebuild existing families of MDS array codes. The ratio  of rebuilding a single systematic node was shown to be $\frac{3}{4}+o(1)$ for EVENODD or RDP codes\cite{Shuki-evenodd,RDP-code}, both of which have 2 parities. However, based on the lower bound of \eqref{eq:tradeoff} the ratio can be as small as $1/2$. Moreover, related work on constructing codes with optimal rebuilding appeared independently in \cite{viveck,Papailiopoulos}. Their constructions are similar to this work, but only single erasure is considered.

Our main goal in this paper is to design $(n,k)$ MDS array codes with \emph{optimal rebuilding ratio, for arbitrary number of parities.} We first consider the case of $2$ parities. We assume that the code is systematic. In addition, we consider codes with \emph{optimal update}, namely, when an information element is written, only the element itself and one element from each parity node needs update, namely, there is optimal reading/writing during writing of information. Hence, in the case of a code with $2$ parities only $3$ elements are updated. Under such assumptions, we will prove that every parity element is a linear combination of exactly one information element from each systematic column. We call this set of information elements a \emph{parity set}. Moreover, the parity sets of a parity node form a partition of the information array.

\begin{figure}
	\centering
\begin{tabular}{|c|c|c|c|c|c|}
	\hline
	& 0 & 1 & 2 & R & Z \\
	\hline
	0 & \cellcolor[gray]{0.8}{$\clubsuit$} & $\spadesuit$ & \cellcolor[gray]{0.8}{$\heartsuit$} & \cellcolor[gray]{0.8}{ } & \cellcolor[gray]{0.8}{$\clubsuit$} \\
	\hline
	1 & \cellcolor[gray]{0.8}{$\heartsuit$} & $\diamondsuit$ & \cellcolor[gray]{0.8}{$\clubsuit$} & \cellcolor[gray]{0.8}{ }& \cellcolor[gray]{0.8}{$\heartsuit$} \\
	\hline
	2 & $\spadesuit$ & $\clubsuit$ & $\diamondsuit$ & & $\spadesuit$ \\
	\hline
	3 & $\diamondsuit$ & $\heartsuit$ & $\spadesuit$ & & $\diamondsuit$ \\
	\hline
\end{tabular}
\caption{Permutations for zigzag sets in a $(5,3)$ code with $4$ rows. Columns 0, 1, and 2 are systematic nodes  and columns R, and Z are parity nodes. Each element in column R is a linear combination of the systematic elements in the same row. Each element in column Z is a linear combination of the systematic elements with the same symbol. The shaded elements are accessed to rebuild column 1.}
\label{fig:shapes}
\end{figure}

For example, Figure \ref{fig:shapes} shows a code with $3$ systematic nodes and $2$ parity nodes. The parity sets corresponding to parity node $R$ are the sets of information elements in the same row. The parity sets that correspond to the parity node $Z$ are the sets of information elements with the same symbol. For instance the first element in column $R$ is a linear combination of the elements in the first row and in columns $0,1$, and $2$.
And the $\clubsuit$ in column Z is a linear combination of all the $\clubsuit$ elements in columns $0, 1$, and $2$. We can see that each systematic column corresponds to a permutation of the four symbols. In general, we will show that each parity relates to a set of a permutations of the systematic columns. Without loss of generality,  we assume that the first parity node corresponds to identity permutations, namely, it is linear combination of rows. In the case of codes with $2$ parities, we call the first parity the \emph{row parity} and the second parity the \emph {zigzag parity}. The corresponding sets of information elements for a parity element are called \emph{row} and \emph{zigzag sets}, respectively.

It should be noted that in contrast to existing MDS array codes such as EVENODD and X-code, the parity sets in our codes are not limited to elements that correspond to straight lines in the array, but can also include elements that correspond to zigzag lines. We will demonstrate that this property is essential for achieving an optimal rebuilding ratio.

If a single systematic node is erased, we will rebuild each element in the erased node either by its corresponding row parity or zigzag parity, referred to as \emph{rebuild by row (or by zigzag)}. In particular, we access the row (zigzag) parity element, and all the elements in this row (zigzag) set, except the erased element. For example, consider Figure \ref{fig:shapes}, suppose that the column labeled $1$ is erased, one can access the $8$ shaded elements and rebuild its first two elements by rows, and the rest by zigzags. Namely, only half of the remaining elements are accessed. It can be verified that for the code in  Figure \ref{fig:shapes}, all the three systematic columns can be rebuilt by accessing half of the remaining elements. Thus the rebuilding ratio is $1/2$, which is the lower bound expressed in \eqref{eq:tradeoff}.

The key idea in our construction is that for each erased node, the row sets and the zigzag sets have a large intersection - resulting in
a small number of accesses. So the question is: How do we find permutations such that the row sets and zigzag sets intersect as much as possible? In this paper, we will present an optimal solution to this question by constructing permutations that are derived from binary vectors. This construction provides an optimal rebuilding ratio of $1/2$ for any erasure of a systematic node.
How do we define permutations on integers from a binary vector? We simply add to each integer the binary vector and use the sum as the image of this integer. Here each integer is expressed as its binary expansion. For example, in order to define the permutation on integers $\{0,1,2,3\}$ from the binary vector $v=(1,1)$, we express each integer in binary: $(0,0),(0,1),(1,0),(1,1)$. Then add (mod $2$) the vector $v=(1,1)$ to each integer, and get $(1,1),(1,0),(0,1),(0,0)$. At last change each binary expansion back to integer and define it as the image of the permutation: $3,2,1,0$. Hence, $(0,1,2,3)$ are mapped to $(3,2,1,0)$ in this permutation, respectively.
This simple technique for generating permutations is the key in our construction. We can generalize our construction for arbitrary $r$ (number of parity nodes) by generating permutations using $r$-ary vectors. Our constructions are optimal in the sense that  we can construct codes with $r$ parities and a rebuilding ratio of $1/r$.

So far we focused on the optimal rebuilding ratio, however, a code with two parity nodes should be able to correct two erasures, namely, it needs to be an MDS code. We will present that for large enough field size the code can be made MDS. In particular, another key result in this paper is that for the case of a code with two parity nodes, the field size is $3$, and this field size is optimal.

In addition, our codes have an optimal array size in the sense that for a given number of rows, we have the maximum number of columns among all systematic codes with optimal ratio and update. However, the length of the array is exponential in the width. We introduce techniques for making the array wider while having a rebuilding ratio that is very close to $1/r$.

 We also considered the following generalization: Suppose that we have an MDS code with three parity nodes, if we have a single erasure, using our codes, we can rebuild the erasure with rebuilding  ratio of $1/3$. What happens if we have two erasures? What is the rebuilding ratio in this case?
Our codes can achieve the optimal rebuilding ratio of $2/3$. In general, if we have $r \ge 3$ parity nodes and $e$ erasures happen, $1 \le e \le r$,  we will prove that the lower bound of repair bandwidth is $e/r$ (normalized by the size of the remaining array), and so is the rebuilding ratio. And the code we constructed achieves this lower bound for any $e$.

In summary, the main contribution of this paper is the first explicit construction of systematic $(n,k)$ MDS array codes for any constant $r=n-k$, which achieves optimal rebuilding ratio of $\frac{1}{r}$. Moreover, our codes achieve optimal rebuilding ratio of $\frac{e}{r}$ when $e$ systematic erasures occur, $1 \le e \le r$. The parity symbols are constructed by linear combinations of a set of information symbols, such that each information symbol is contained exactly once in each parity node. These codes have a variety of advantages: 1) they are systematic codes, hence it is easy to retrieve information;
2) they have high code rate $k/n$, which is commonly required in storage systems; 3) the encoding and decoding of the codes can be easily implemented (for $r=2$, the code uses finite field of size 3); 4) they match the lower bound of the ratio when rebuilding $e$ systematic nodes; 5) the rebuilding of a failed node requires simple computation and access to only $1/r$ of the data in each node (no linear combination of data); 6) they have \emph{optimal update}, namely, when an information element is updated, only $r+1$ elements in the array need update; and 7) they have optimal array size.

The remainder of the paper is organized as follows. Section \ref{sec3} constructs $(k+2,k)$ MDS array codes with optimal rebuilding ratio. Section \ref{sec2} gives formal definitions and some general observations on MDS array codes. Section \ref{code-duplication} introduces code duplication and thus generates $(k+2,k)$ MDS array codes for arbitrary number of columns. We discuss the size of the finite field needed for these constructions in Section \ref{section 5}. Decoding algorithms for erasures and errors are discussed in Section \ref{sec:dec}. Section \ref{generalization} generalizes the MDS code construction to arbitrary number of parity columns. These generalized codes have properties that are similar to the $(k+2,k)$ MDS array codes, likewise some of them has optimal rebuilding ratio. Rebuilding of multiple erasures and generalization of the rebuilding algorithms are presented in Section \ref{sec:multi}. Finally we provide concluding remarks in Section \ref{summary}.

%
%
\section{$(k+2,k)$ MDS array code constructions} \label{sec3}
%
%
In the rest of the paper, we are going to use $[i,j]$ to denote $\{i,i+1,\dots,j\}$ and $[i]$ to denote $\{1,2,\dots,i\}$, for integers $i \le j$. And denote the complement of a subset $X\subseteq M$ as $\overline{X}=M\backslash X$. For a matrix $A$, $A^T$ denotes the transpose of $A$.
 For a binary vector $v=(v_1,...,v_n)$ we denote by $\overline{v}=(v_1+1\mod 2 ,...,v_n+1\mod 2)$ its complement vector. The standard vector basis of dimension $m$ will be denoted as $\{e_i\}_{i=1}^m$ and the zero vector will be denoted as $e_0.$ For an integer $n$ denote by $S_n$ the set of permutations over the integers $[0,n-1]$, namely the symmetric group. For two functions $f,g$, denote their composition by $fg$ or $f \circ g$.

Recall that \emph{rebuilding ratio} is the average fraction of accesses in the surviving systematic and parity nodes while rebuilding one systematic node. More specific definition will be given in the next section.
In this section we give the construction of MDS array code with two parities and optimal rebuilding ratio $1/2$ for one erasure, which uses an optimal finite field of only size $3$.

We mentioned in the introduction that each $(k+2,k)$ MDS array code with optimal update can be constructed by defining the row and zigzag parities (proofs are given in Section \ref{sec2}). More specifically, the row parity corresponds to identity permutation in each systematic column, and the zigzag parity corresponds to a set of permutations $\{f_0,f_1,\dots,f_{k-1}\}$ for the systematic columns $\{0,1,\dots,k\}$. From the example in Figure \ref{fig:shapes}, we know that in order to get low rebuilding ratio, we need to find $f_0,...,f_{k-1}$ such that the row and zigzag sets used in rebuilding intersect as much as possible.
In addition, since each parity element is a linear combination of elements in its parity set, we need to define the coefficients of the linear combination such that the code is MDS. Noticing that all elements and all coefficients are from some finite field, we would like to choose the coefficients such that the finite field size is as small as possible. So our construction of the code includes two steps:
\begin{enumerate}
\item
Find zigzag permutations to minimize the ratio.
\item
Assign the coefficients such that the code is MDS.
\end{enumerate}

The following construction constructs a family of MDS array codes with $2$ parities using binary vectors.
From any set $T\subseteq \mathbb{F}_2^m$, $|T|=k$, we construct a $(k+2,k)$ MDS array code of size $2^m\times (k+2).$
We will show that some of these codes have the optimal ratio of $\frac{1}{2}$.

In this section all the calculations are done over $\mathbb{F}_2$. By abuse of notation we use $x\in [0,2^{m}-1]$ both to represent the integer and its binary representation. It will be clear from the context which meaning is in use.

\begin{cnstr}
Let $A=(a_{i,j})$ be an array of size $2^m\times k$ for some integers $k,m$ and $k < 2^m$. Let $T\subseteq \mathbb{F}^m_2$ be a set of vectors of size $k$ which does not contain the zero vector. For $v\in T$ we define the permutation $f_v:[0,2^m-1]\to [0,2^m-1] $ by $f_v(x)=x+v$, where $x$ is represented in its binary representation. One can check that this is actually a permutation. For example when $m=2,v=(1,0),x=3$, $$f_{(1,0)}(3)=3+(1,0)=(1,1)+(1,0)=(0,1)=1.$$
One can check that the permutation $f_v$ in vector notation is $[2, 3, 0, 1]$.
In addition, we define $X_v=\{x \in [0,2^m-1]:x\cdot v=0\}$ as the set of integers orthogonal to $v$. For example, $X_{(1,0)}=\{0,1\}$. The construction of the two parity columns is as follows:
The first parity is simply the row sums.
The second parity is the linear combination of elements in the zigzag set.
The zigzag sets $Z_0,...,Z_{2^m-1}$ are defined by the permutations $\{f_{v_j}:v_j\in T\}$ as $a_{i,j}\in Z_l$ if $f_{v_j}(i)=l.$ We will denote the permutation $f_{v_j}$ as $f_j$ and the set $X_{v_j}$ as $X_j$.
Assume column $j$ is erased, and define $S_r=\{a_{i,j}:i\in X_j\}$ and $S_z=\{a_{i,j}:i\notin X_j\}$. Rebuild the elements in $S_r$ by rows and the elements in $S_z$ by zigzags.
\label{cnstr1}
\end{cnstr}

\begin{thm}
\label{orthogonal-permutations}
Construct permutations $f_0,...,f_m$ and sets $X_0,...,X_m$ by the vectors $\{e_i\}_{i=0}^m$ as in Construction \ref{cnstr1}, where $X_0$ is modified to be $X_0=\{x\in \mathbb{F}_2^m:x\cdot (1,1,...,1)=0\}$. Then the corresponding $(m+3,m+1)$ code has \emph{optimal} ratio of $\frac{1}{2}.$
\end{thm}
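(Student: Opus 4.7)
The plan is to verify directly that, whenever a systematic column $j\in[0,m]$ is erased, the rebuilding prescription of Construction \ref{cnstr1} reads exactly $2^{m-1}$ rows in each of the $m+2$ surviving columns (systematic or parity), so that the ratio of accessed to surviving elements is $1/2$. Optimality of this ratio then follows by comparison with the repair-bandwidth bound (\ref{eq:tradeoff}): for $n-k=2$ the bound on the total number of bits read is $(\cM/k)\cdot(n-1)/2=(n-1)2^{m-1}$, and dividing by the $(n-1)2^m$ surviving elements gives exactly $1/2$, so no $(n,k)$ MDS code with two parities can do better.

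I would handle the generic case $j\in[1,m]$ first. Here $X_j=\{x:x_j=0\}$ has size $2^{m-1}$. The rows indexed by $X_j$ are rebuilt by row parity, which by itself requires reading row $i$ of every surviving column for each $i\in X_j$. The rows in $\overline{X_j}$ are rebuilt by zigzags, which forces reading row $f_{j'}^{-1}f_j(i)=i+e_j+e_{j'}$ of each surviving systematic column $j'\ne j$ and row $f_j(i)=i+e_j$ of the zigzag parity column. The key observation is that, because $e_j+e_{j'}$ has $j$-th coordinate equal to $1$, the translation $i\mapsto i+e_j+e_{j'}$ is a bijection from $\overline{X_j}$ onto $X_j$; hence the zigzag-induced reads in each systematic column $j'$ coincide with $X_j$, so the total access set in every surviving systematic column collapses to the single set $X_j$ of size $2^{m-1}$. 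An analogous bijection $i\mapsto i+e_j$ on $\overline{X_j}$ gives the same count for the zigzag parity column, and the row parity column is accessed on $X_j$ by definition.

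The case $j=0$ is where the modification of $X_0$ is essential, since $v_0=e_0=0$ makes $f_0$ the identity permutation, and the unmodified set $\{x:x\cdot 0=0\}=[0,2^m-1]$ would force rebuilding every row by rows and hence ratio $1$. With $X_0$ redefined as the even-weight vectors, $|X_0|=2^{m-1}$, and for any $i\in\overline{X_0}$ and any $j'\ge 1$ the zigzag access $f_{j'}^{-1}f_0(i)=i+e_{j'}$ flips a single coordinate of $i$ and hence changes its weight parity, landing in $X_0$. Once again $i\mapsto i+e_{j'}$ is a bijection $\overline{X_0}\to X_0$, so the accesses in each surviving systematic column collapse to $X_0$, while the row parity and zigzag parity columns are accessed exactly on $X_0$ and on $\overline{X_0}$ respectively, each of size $2^{m-1}$. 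The hard part of the argument is recognising the right modification for $X_0$: one needs a balanced subset whose complement is mapped onto it by translation by any single-coordinate vector $e_{j'}$, and the even-weight code is essentially the unique natural choice with that property. The remainder of the proof is straightforward bookkeeping of the two bijections just described.
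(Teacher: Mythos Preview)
Your proof is correct and follows essentially the same route as the paper: both arguments reduce to showing that when rebuilding column $j$, the zigzag-induced accesses in each surviving column $j'$ land exactly in the row-index set $X_j$, via the observation that translation by $e_j+e_{j'}$ (respectively $e_{j'}$ when $j=0$) is a bijection $\overline{X_j}\to X_j$. The paper packages this as the ``orthogonal permutations'' condition $f_i(X_i)\cap f_j(X_i)=\emptyset$ and invokes Lemma~\ref{lemma 3}(i) for the access-count formula $2^{m-1}+|f_v(X_v)\cap f_u(X_v)|$, whereas you compute the access sets directly; the two formulations are equivalent since $f_j(X_i)\cap f_i(X_i)=\emptyset$ holds iff $f_{j}^{-1}f_i(\overline{X_i})=X_i$.
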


Before proving the theorem, we first give an example. Actually, this example is the code in Figure \ref{fig:shapes} with more details.

\begin{xmpl}
Let $A$ be an array of size $4\times 3$. We construct a $(5,3)$ MDS array code for $A$ as in Theorem \ref{orthogonal-permutations} that accesses  $\frac{1}{2}$ of the remaining information in the array to rebuild any systematic node (see Figure \ref{fig2}).
\begin{figure*}[t]
	\centering
	\includegraphics[scale=0.65]{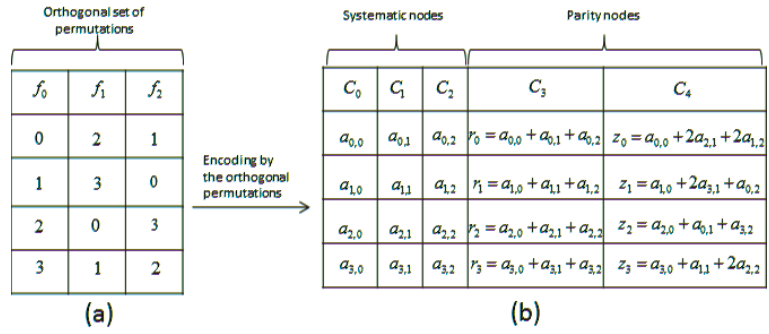}
\caption{$(a)$ The set of orthogonal permutations as in Theorem \ref{orthogonal-permutations} with sets $X_0=\{0,3\},X_1=\{0,1\},X_2=\{0,2\}$. $(b)$ A $(5,3)$ MDS array code generated by the orthogonal permutations. The first parity column $C_3$ is the row sum and the second parity column $C_4$ is generated by the zigzags. For example, zigzag $z_0$ contains the elements $a_{i,j}$ that satisfy $f_j(i)=0$.}
	\label{fig2}
\vspace{-0.5cm}
\end{figure*}
For example, $X_1=\{0,1\}$, and for rebuilding  of node $1$ (column $C_1$) we  access the elements $a_{0,0},a_{0,2},a_{1,0},a_{1,2}$, and the following four parity elements
\begin{align*}
&r_0=a_{0,0}+a_{0,1}+a_{0,2}\\
&r_1=a_{1,0}+a_{1,1}+a_{1,2}\\
&z_{f_1(2)}=z_0=a_{0,0}+2a_{2,1}+2a_{1,2}\\
&z_{f_1(3)}=z_1=a_{1,0}+2a_{3,1}+a_{0,2}.
\end{align*}
It is trivial to rebuild node $1$ from the accessed information. Note that each of the surviving node accesses exactly $\frac{1}{2}$ of its elements. It can be easily verified that the other systematic nodes can be rebuilt the same way. Rebuilding a parity node is easily done by accessing all the information elements.
\end{xmpl}

In order to prove Theorem \ref{orthogonal-permutations}, we first prove the following lemma. We use a binary vector to represent its corresponding systematic node. And define $|v\backslash u|= \sum_{i:v_i=1,u_i=0}1$ as the number of coordinates at which $v$ has a $1$ but $u$ has a $0$.

\begin{lem}
\label{lemma 3}
(i) For any $v,u \in T$, to rebuild node $v$, the number of accessed elements in node $u$ is
$$2^{m-1}+|f_v(X_v)\cap f_u(X_v)|.$$
(ii) For any $0 \neq v,u\in T$,
\begin{equation}
|f_v(X_v)\cap f_u(X_v)|=
\begin{cases}
|X_v|, & |v \backslash u|\mod 2 =0\\
0,     & |v \backslash u|\mod 2 =1.
\end{cases}
\end{equation}
\end{lem}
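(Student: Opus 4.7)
The plan is to treat the two parts separately, with part (i) being a counting argument about which rows are accessed in a surviving column, and part (ii) being a linear-algebra computation over $\mathbb{F}_2$ that exploits the fact that $X_v$ is a hyperplane.

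For part (i), I would first identify the set of row-indices of node $u$ that the rebuilding procedure touches. When column $v$ is erased, Construction \ref{cnstr1} rebuilds the rows in $X_v$ by rows, and the rows in $\overline{X_v}$ by zigzags. The row-rebuild contributes exactly $X_v$ rows of column $u$. The zigzag-rebuild, which recovers $a_{i,v}$ for $i \in \overline{X_v}$ via zigzag $Z_{f_v(i)}$, contributes the row-indices $\{j : f_u(j) \in f_v(\overline{X_v})\} = f_u^{-1}(f_v(\overline{X_v}))$ of column $u$. Thus the number of elements accessed in column $u$ is $|X_v \cup f_u^{-1}(f_v(\overline{X_v}))|$. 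Since $v \ne 0$, $|X_v| = 2^{m-1}$, and since $f_u, f_v$ are permutations, $|f_u^{-1}(f_v(\overline{X_v}))| = 2^{m-1}$. Inclusion–exclusion gives
\[
|X_v \cup f_u^{-1}(f_v(\overline{X_v}))| \;=\; 2^m - |f_u(X_v) \cap f_v(\overline{X_v})|.
\]
Because $f_v$ is a bijection on $[0,2^m-1]$, $f_v(\overline{X_v})$ is the set-theoretic complement of $f_v(X_v)$, so $|f_u(X_v) \cap f_v(\overline{X_v})| = |f_u(X_v)| - |f_u(X_v)\cap f_v(X_v)| = 2^{m-1} - |f_v(X_v)\cap f_u(X_v)|$. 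Substituting yields the claimed $2^{m-1} + |f_v(X_v)\cap f_u(X_v)|$.

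For part (ii), I would rewrite the intersection as a translate of $X_v$. Since $f_w(x) = x+w$ over $\mathbb{F}_2^m$, we have $f_v(X_v) = X_v + v$ and $f_u(X_v) = X_v + u$. Translating by $u$ (which preserves cardinality),
\[
|(X_v + v) \cap (X_v + u)| \;=\; |X_v \cap (X_v + v + u)|.
\]
Now $X_v = \{x : x \cdot v = 0\}$ is an $\mathbb{F}_2$-linear hyperplane, so for any $w \in \mathbb{F}_2^m$ either $w \in X_v$, in which case $X_v + w = X_v$ and the intersection has size $|X_v|$, or $w \notin X_v$, in which case $X_v + w$ is the other coset of $X_v$ and the intersection is empty. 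It therefore remains to decide when $v + u \in X_v$, i.e., when $(v+u)\cdot v = 0$. Expanding, $(v+u)\cdot v = v\cdot v + u \cdot v = |v| + |v \cap u| \pmod 2$, where $|v \cap u|$ denotes the number of positions where both are $1$. Since $|v \setminus u| = |v| - |v \cap u|$, this parity equals $|v \setminus u| \bmod 2$, matching the two cases in the statement.

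The main obstacle, to the extent there is one, is the bookkeeping in step one: correctly identifying that row-rebuild accesses rows $X_v$ and zigzag-rebuild accesses rows $f_u^{-1}(f_v(\overline{X_v}))$ of column $u$, and then converting the inclusion–exclusion correction from $|f_u(X_v)\cap f_v(\overline{X_v})|$ into the stated form $|f_v(X_v)\cap f_u(X_v)|$ via complementation. Everything else follows from the hyperplane structure of $X_v$ and the affine nature of the permutations $f_v(x) = x+v$.
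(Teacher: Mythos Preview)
Your proof is correct and follows essentially the same approach as the paper. For part (i), the paper computes the number of \emph{extra} rows accessed as $|f_u^{-1}(f_v(\overline{X_v}))\setminus X_v|$ and reduces this to $|f_v(X_v)\cap f_u(X_v)|$ via the same complementation/bijection steps you use, while you equivalently compute the full union via inclusion--exclusion; for part (ii), both you and the paper observe that $X_v+v$ and $X_v+u$ are cosets of the hyperplane $X_v$ and compute $(v+u)\cdot v\equiv |v\setminus u|\pmod 2$, with only cosmetic differences in how that parity identity is unpacked.
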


\begin{IEEEproof}
(i) In rebuilding of node $v$ we rebuild the elements in rows $X_v$ by rows, thus the row parity column accesses the values of the sum of rows $X_v$. Moreover, the surviving node $u$ also accesses its elements in rows $X_v$. Hence, by now $|X_v|=2^{m-1}$ elements are accessed. The elements of node $v$ in rows $\overline{X_v}$ are rebuilt by zigzags, thus the zigzag parity column accesses the values of the zigzags sums $\{z_{f_v(l)}:l\in \overline{X_v}\}$, and each surviving systematic node accesses the elements of these zigzags from its column, unless these elements are already included in the rebuilding by rows. The zigzag elements in $\{Z_{f_v(l)}:l\in \overline{X_v}\}$ of node $u$ are in rows $f_u^{-1}(f_v(\overline{X_v}))$, where $f_u^{-1}$ is the inverse function of $f_u$. Thus the extra elements node $u$ needs to access are  in rows $f_u^{-1}(f_v(\overline{X_v}))\backslash X_v.$ But,
\begin{eqnarray*}
&&|f_u^{-1}(f_v(\overline{X_v}))\backslash X_v| \\
&=&|\overline{f_u^{-1}(f_v(X_v))}\cap \overline{X_v}|\\
&=&|\overline{f_u^{-1}(f_v(X_v))\cup X_v}|\\
&=&2^m-|f_u^{-1}(f_v(X_v))\cup X_v|\\
&=&2^m-(|f_u^{-1}(f_v(X_v))|+|X_v|-|f_u^{-1}(f_v(X_v))\cap X_v|)\\
&=&|f_u^{-1}(f_v(X_v))\cap X_v|\\
&=&|f_v(X_v)\cap f_u(X_v)|,
\end{eqnarray*}
where we used the fact that $f_v,f_u$ are bijections, and $|X_v|=2^{m-1}.$

(ii) Consider the group $(\mathbb{F}_2^m,+)$. Recall that $f_v(X)=X+v=\{x+v:x\in X\}$.
The sets $f_v(X_v)=X_v+v$ and $f_u(X_v)=X_v+u$ are cosets of the subgroup $X_v=\{w\in \mathbb{F}_2^m:w\cdot v=0\}$, and they are either identical or disjoint. Moreover, they are identical iff $v-u\in X_v$, namely $(v-u)\cdot v=\sum_{i:v_i=1,u_i=0}1 \equiv 0 \mod 2.$ However, by definition $|v \backslash u| \equiv \sum_{i:v_i=1,u_i=0}1\mod 2$, and the result follows.
\end{IEEEproof}

Let $\{f_0,...,f_{k-1}\}$ be a set of permutations over the set $[0,2^{m}-1]$ with associated subsets $X_0,...,X_{k-1}\subseteq [0,2^{m}-1]$, where each $|X_i|=2^{m-1}$. We say that this set is a set of \emph{orthogonal permutations} if for any $i,j\in [0,k-1]$, $$\frac{|f_i(X_i)\cap f_j(X_i)|}{2^{m-1}}=\delta_{i,j},$$ where $\delta_{i,j}$ is the Kronecker delta. For a set of orthogonal permutations, in order to rebuild any systematic node, only $2^{m-1}$ elements are accessed from each surviving systematic node by Lemma \ref{lemma 3}. And only $2^{m-1}$ elements are accessed from each parity node, too. Hence codes generated by orthogonal permutations has optimal rebuilding ratio $1/2$.
Now we are ready to prove Theorem \ref{orthogonal-permutations}.

\begin{IEEEproof}[Proof of Theorem \ref{orthogonal-permutations}]
Since $|{e_i} \backslash {e_j}|=1$ for any $i\neq j\neq 0$, we get by lemma \ref{lemma 3} $$f_i(X_i)\cap f_j(X_i)=\emptyset.$$
Now consider $e_i$ and $e_0$, for $i \neq 0$. Note that
$f_i(X_i)=\{x+e_i:x \cdot e_i=0\} = \{y:y \cdot e_i=1\},$
so
$$f_i(X_i)\cap f_0(X_i)=\{y:y \cdot e_i=1\} \cap \{x:x \cdot e_i=0\}=\emptyset.$$
Similarly,
$f_i(X_0)=\{x+e_i:x \cdot (1,1,\dots,1)=0\}
=\{y:y \cdot (1,1,\cdots,1)=1\},$
and
\begin{eqnarray*}
&& f_0(X_0) \cap f_i(X_0) \\
&=&\{x:x \cdot (1,\cdots,1)=0\} \cap \{y:y \cdot (1,\cdots,1)=1\}\\
&=&\emptyset.
\end{eqnarray*}
Hence the permutations $f_0,\dots,f_m$ are orthogonal permutations, and the ratio is $1/2$.
\end{IEEEproof}

Note that the optimal code can be shortened by removing some systematic columns and still retain an optimal ratio, i.e., for any $k\leq m+1$ we have a code with optimal rebuilding.

Having found the set of orthogonal permutations, we need to specify the coefficients in the parities such that the code is MDS.

Consider the $(m+3,m+1)$ code $\mathcal{C}$ constructed by Theorem \ref{orthogonal-permutations} and the vectors $\{e_i\}_{i=0}^{m}$.
Let $\mathbb{F}$ be the finite field we use.
Let the information in row $i$, column $j$ be $a_{i,j} \in \mathbb{F}$. Let its row and zigzag coefficients be $\alpha_{i,j},\beta_{i,j} \in \mathbb{F}$.
For a row set $R_u=\{a_{u,0},a_{u,1},\dots,a_{u,m}\}$, the row parity is $r_u=\sum_{j=0}^{m} \alpha_{u,j}a_{u,j}$. For a zigzag set $Z_u=\{a_{u, 0},a_{u+e_1,1},\dots,a_{u+e_m,m}\}$, the zigzag parity is
$z_u = \sum_{j=0}^{m}\beta_{u+e_j,j}a_{u+e_j,j}$.

Recall that the $(m+3,m+1)$ code is MDS iff we can recover the information from up to $2$ columns erasures.
It is clear that none of the coefficients $\alpha_{i,j},\beta_{i,j}$ can be zero. Moreover, if we assign all the coefficients as $\alpha_{i,j}=\beta_{i,j}=1$ we get that in an erasure of two systematic columns the set of equations derived from the parity columns are linearly dependent and thus not solvable (the sum of the equations from the row parity and the sum of those from the zigzag parity will both be the sum of the entire information array). Therefore the coefficients need to be from a field with more than $1$ nonzero element, thus a field of size at least $3$ is necessary. The construction below surprisingly shows that in fact $\mathbb{F}_3$ is sufficient.

\begin{cnstr} \label{cons3}
For the code $\mathcal{C}$ in Theorem \ref{orthogonal-permutations} over $\mathbb{F}_3$, define $u_j=\sum_{l=0}^j e_l$ for $0 \le j \le m$. Assign row coefficients as $\alpha_{i,j}=1$ for all $i,j$, and zigzag coefficients as
$$\beta_{i,j}=2^{i \cdot u_j }$$
where $i=(i_1,\dots,i_m)$ is represented in binary and the calculation in the exponent is done over $\mathbb{F}_2$.
\end{cnstr}

The coefficients in Figure \ref{fig2} are assigned by Construction \ref{cons3}. The following theorem shows that the code is MDS.

\begin{thm} \label{thm0506}
Construction \ref{cons3} is an $(m+3,m+1)$ MDS code with optimal finite field size of $3$.
\end{thm}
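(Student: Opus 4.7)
Let $n = m+3$, $k = m+1$. The code has $k$ systematic columns and $2$ parities, so to be MDS I must show recovery from every pattern of at most $2$ column erasures. The case where both erasures are parities is trivial (just recompute), and the case of one parity plus one systematic column reduces to solving, row by row (or zigzag by zigzag), a single linear equation in one unknown with a nonzero coefficient $\alpha_{i,j}$ or $\beta_{i,j}$. Both of these cases succeed over any field of size $\ge 2$. The content of the theorem is therefore the case of two erased systematic columns $j_1 < j_2$, together with the optimality of the field size; the latter was already established in the paragraph preceding the theorem (any all-ones assignment gives dependent parity equations, so $|\mathbb{F}| \ge 3$ is necessary).

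For the main case, set $w = e_{j_1} + e_{j_2} \in \mathbb{F}_2^m$. The key structural observation is that the $2^m$ unknowns split into $2^{m-1}$ independent $4\times 4$ subsystems, one per pair of rows $\{i,\,i+w\}$. Indeed, because the zigzag permutation for column $j$ is $f_j(x) = x + e_j$, the element $a_{i,j_1}$ sits in zigzag $z_{i+e_{j_1}}$ and $a_{i+w,j_2}$ sits in zigzag $z_{i+w+e_{j_2}} = z_{i+e_{j_1}}$; symmetrically, $a_{i,j_2}$ and $a_{i+w,j_1}$ both lie in zigzag $z_{i+e_{j_2}}$. So the four unknowns $a_{i,j_1}, a_{i,j_2}, a_{i+w,j_1}, a_{i+w,j_2}$ are coupled only to two rows (rows $i$ and $i+w$) and two zigzags ($z_{i+e_{j_1}}$ and $z_{i+e_{j_2}}$), yielding a closed $4\times 4$ linear system with known right-hand sides.

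The coefficient matrix is
\[
M \;=\; \begin{pmatrix} 1 & 1 & 0 & 0 \\ 0 & 0 & 1 & 1 \\ \beta_{i,j_1} & 0 & 0 & \beta_{i+w,j_2} \\ 0 & \beta_{i,j_2} & \beta_{i+w,j_1} & 0 \end{pmatrix},
\]
and a short cofactor expansion (along the first column, say) gives
\[
\det M \;=\; \beta_{i,j_2}\,\beta_{i+w,j_2} \;-\; \beta_{i,j_1}\,\beta_{i+w,j_1}.
\]
Plugging in $\beta_{i,j} = 2^{i\cdot u_j}$ and using that $2$ has multiplicative order $2$ in $\mathbb{F}_3$, exponents add in $\mathbb{F}_2$; since $i + (i+w) = w$ in $\mathbb{F}_2$, each product collapses to $\beta_{i,j}\beta_{i+w,j} = 2^{w\cdot u_j}$. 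Hence $\det M = 2^{w\cdot u_{j_2}} - 2^{w\cdot u_{j_1}}$ in $\mathbb{F}_3$.

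It remains to check that this determinant is nonzero, i.e.\ $w\cdot(u_{j_1}+u_{j_2}) \ne 0$ in $\mathbb{F}_2$. Because of the telescoping definition $u_j = e_0 + e_1 + \dots + e_j$ with $e_0 = 0$, I have $u_{j_1}+u_{j_2} = e_{j_1+1} + \dots + e_{j_2}$. The $e_{j_1}$ part of $w$ contributes $0$ (the index $j_1$ is absent from this sum), while the $e_{j_2}$ part contributes exactly $1$, so $w\cdot(u_{j_1}+u_{j_2}) = 1$ and $\det M = \pm 1 \ne 0$. This solves every $4\times 4$ block and completes the MDS verification; combined with the lower bound $|\mathbb{F}|\ge 3$ already noted, the field size $3$ is optimal. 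I expect the main subtlety to lie in justifying the row pairing — i.e.\ checking that the four unknowns attached to rows $i$ and $i+w$ really are isolated inside exactly two zigzags, which hinges crucially on the translation form $f_j(x)=x+e_j$ and on $e_{j_1}+e_{j_2}=w$.
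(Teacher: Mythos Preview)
Your proof is correct and follows essentially the same approach as the paper: reduce to the two-systematic-erasure case, pair rows $i$ and $i+w$ with $w=e_{j_1}+e_{j_2}$ into closed $4\times4$ blocks, and verify the invertibility condition $\beta_{i,j_1}\beta_{i+w,j_1}\neq\beta_{i,j_2}\beta_{i+w,j_2}$ by collapsing the exponents mod $2$. The paper computes each side separately (obtaining $2$ and $1$) whereas you phrase it as $w\cdot(u_{j_1}+u_{j_2})=1$, but these are the same calculation.
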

\begin{IEEEproof}
It is easy to see that if at least one of the two erased columns is a parity column then we can recover the information. Hence we only need to show that we can recover from any erasure of two systematic columns. In an erasure of two systematic columns $i,j\in [0,m],i<j$, we access the entire remaining information in the array. For $r\in [0,2^m-1]$ set $r'=r+e_i+e_j$, and recall that $a_{r,i}\in Z_{l}$ iff $l=r+e_i$, thus $a_{r,i},a_{r',j}\in Z_{r+e_i}$ and $a_{r,j},a_{r',i}\in Z_{r+e_j}.$ From the two parity columns we need to solve the following equations (for some $y_1,y_2,y_3,y_4 \in \mathbb{F}_3$)
$$
\left[\begin{array}{cccc}
1 & 1 & 0 & 0 \\
0 & 0 & 1 & 1 \\
\beta_{r,i} & 0 & 0 & \beta_{r',j} \\
0 & \beta_{r,j} & \beta_{r',i} & 0
\end{array} \right]
\left[ \begin{array}{c}
a_{r,i} \\
a_{r,j} \\
a_{r',i} \\
a_{r',j} \\
\end{array} \right]
= \left[ \begin{array}{c}
y_1 \\
y_2 \\
y_3 \\
y_4
\end{array} \right].
$$
This set of equations is solvable iff
\begin{equation} \label{eq1}
\beta_{r,i} \beta_{r',i} \neq \beta_{r,j}  \beta_{r',j}.
\end{equation}
Note that the multiplicative group of $\mathbb{F}_3 \backslash{0}$  is isomorphic to the additive group of $\mathbb{F}_2$, hence multiplying two elements in $\mathbb{F}_3 \backslash{0}$ is equivalent to summing up their exponent in $\mathbb{F}_2$ when they are represented as a power of the primitive element of the field $\mathbb{F}_3$. For columns $0 \le i<j \le m$ and rows $r,r'$ defined above, we have
$$\beta_{r,i} \beta_{r',i}=2^{r \cdot u_i  +  r'\cdot u_i }=2^{ (r+ r')\cdot u_i}=2^{(e_i+e_j)\cdot\sum_{l=0}^{i}e_l}=2^{e_i^2}=2.$$
However in the same manner we derive that
$$\beta_{r,j}  \beta_{r',j}= 2^{(r+ r') \cdot u_j} = 2^{(e_i+e_j)\cdot \sum_{l=0}^{j}e_l} =  2^{e_i^2+e_j^2}=2^0=1.$$
Hence \eqref{eq1} is satisfied and the code is  MDS.
\end{IEEEproof}

\textbf{Remark:}
The above proof shows that $\beta_{r,i} \neq \beta_{r',i}$, and $\beta_{r,j}=\beta_{r',j}$ for $i<j$. And \eqref{eq1} is a necessary and sufficient condition for a MDS code for vectors $v_i \neq v_j$.

In addition to optimal ratio and optimal field size, we will show in the next section that the code in Theorem \ref{orthogonal-permutations} is also of optimal array size, namely, it has the maximum number of columns, given the number of rows.

%
%
\section{Formal Problem Settings and Constructions} \label{sec2}
%
%
In this section, we first give some observations of an arbitrary MDS array code with optimal update. Then we prove some properties and give some examples of our code in Construction \ref{cnstr1}.

Let us define an MDS array code with 2 parities.
Let $A=(a_{i,j})$ be an array of size $p\times k$ over a finite field $\mathbb{F}$, where $i\in [0,p-1],j\in [0,k-1]$, and each of its entry is an information element. We add to the array two parity columns and obtain an $(n=k+2,k)$ MDS code of array size $p \times n$. Each element in these parity columns is a linear combination of elements from $A$. More specifically, let the two parity columns be $C_k=(r_0,r_1,...,r_{p-1})^T$ and $C_{k+1}=(z_0,z_1...,z_{p-1})^T$.
Let $R=\{R_0,R_1,...,R_{p-1}\}$ and $Z=\{Z_0,Z_1,...,Z_{p-1}\}$ be two sets such that $R_l,Z_l$ are subsets of elements in $A$ for all $l \in [0,p-1]$.
Then for all $l \in [0,p-1]$, define $r_l=\sum_{a\in R_l}\alpha_aa\text{ and } z_l=\sum_{a\in Z_l}\beta_aa$, for some sets of coefficients $\{\alpha_a\},\{\beta_a\}\subseteq \mathbb{F}$.
We call $R$ and $Z$ as the sets that generate the parity columns.

We assume the code has optimal update, meaning that only $3$ elements in the code are updated when an information element is updated. Under this assumption, the following theorem characterizes the sets $R$ and $Z$.

\begin{thm}
For an $(k+2,k)$ MDS code with optimal update,
the sets $R$ and $Z$ are partitions of $A$ into $p$ equally sized sets of size $k$, where each set in $R$ or $Z$ contains exactly one element from each column.
\end{thm}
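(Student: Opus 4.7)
The plan is to leverage two ingredients: (a) the optimal update property, which constrains where each information element appears in the parity expressions, and (b) the MDS property applied to a cleverly chosen pair of erased columns, which forces a constraint on how elements from a single systematic column may be distributed among the parity sets.

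First, I would use optimal update. When the information element $a=a_{i,j}$ is changed, by definition only three code symbols must be updated: $a$ itself, and one element in each of the two parity columns. Since the parity elements are fixed linear combinations of information elements with nonzero coefficients (any zero coefficient would simply remove $a$ from that set), the only parity symbols that change are those whose generating set contains $a$. Hence every $a \in A$ lies in exactly one set of $R$ and in exactly one set of $Z$, so both $R$ and $Z$ are partitions of $A$.

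The substantive step is to show that each $R_l$ contains exactly one element from each column (and analogously for $Z$). Fix a systematic column $j \in [0,k-1]$, and consider the erasure pattern consisting of column $j$ together with the zigzag parity column $C_{k+1}$. By the MDS property, the $p$ entries of column $j$ must be recoverable from the surviving $k-1$ systematic columns together with the row parity column $C_k$. Let $n_{l,j} := |R_l \cap \text{col } j|$. After substituting in the (known) values of the $k-1$ surviving systematic columns, the row parity equation $r_l = \sum_{a \in R_l} \alpha_a a$ becomes one linear equation in the $n_{l,j}$ unknowns $\{a_{i,j} : a_{i,j} \in R_l\}$. Because each $a_{i,j}$ belongs to a unique $R_l$ (by the first step), the overall system decouples into $p$ independent blocks, one per $l$. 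For the full column to be recoverable for arbitrary data, each such block, consisting of a single equation in $n_{l,j}$ unknowns, must determine its unknowns uniquely; this forces $n_{l,j} \le 1$. Combined with $\sum_l n_{l,j} = p$ and the existence of exactly $p$ sets $R_l$, we conclude $n_{l,j} = 1$ for every $l$ and every $j$. In particular $|R_l|=k$ and $R_l$ meets every column in exactly one element.

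Finally, applying the identical argument with the row parity column $C_k$ erased in place of $C_{k+1}$ yields the same structural conclusion for $Z$. The main obstacle I expect is not technical difficulty but rather being precise about the degree of freedom count in step three: one must rule out the possibility that cross-equations from other $R_{l'}$'s could help recover the column-$j$ entries inside a given $R_l$. This is why the decoupling observation, which follows cleanly from each element lying in a unique row-parity set, is the crux of the argument.
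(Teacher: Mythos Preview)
Your proposal is correct and follows essentially the same route as the paper: use optimal update (together with MDS) to see that each information element lies in exactly one set of each parity, then erase one systematic column together with one parity column and invoke the MDS property to show each surviving parity set meets that systematic column at most once, finishing by a counting argument. The paper phrases the key step more tersely---``rebuilding is impossible if $X\in R$ contains two entries from the same column $C_i$''---but this is exactly your decoupling observation, and the final pigeonhole is equivalent (the paper counts by set, you count by column).
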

\begin{IEEEproof}
Since the code is a $(k+2,k)$ MDS code, each information element should appear at least once in each parity column $C_{k},C_{k+1}.$ However, since the code has optimal update, each element appears exactly once in each parity column.

Let $X\in R$, note that if $X$ contains two entries of $A$ from the systematic column $C_i$, $i \in [0,k-1]$, then rebuilding is impossible if columns $C_i$ and $C_{k+1}$ are erased. Thus $X$ contains at most one entry from each column, therefore $|X|\leq k.$ However each element of $A$ appears exactly once in each parity column, thus if $|X|<k$, $X$ $\in$ $R$, there is $Y\in R$, with $|Y|>k$, which leads to a contradiction. Therefore, $|X|=k$ for all $X \in R$. As each information element appears exactly once in the first parity column, $R=\{R_0,\dots,R_{p-1}\}$ is a partition of $A$ into $p$ equally sized sets of size $k$. Similar proof holds for the sets $Z=\{Z_0,\dots,Z_{p-1}\}$.
\end{IEEEproof}

By the above theorem, for the $j$-th systematic column $(a_0,\dots,a_{p-1})^T$, its $p$ elements are contained in $p$ distinct sets $R_l$, $l \in [0,p-1]$. In other words, the membership of the $j$-th column's elements in the sets $\{R_l\}$ defines a permutation $g_j:[0,p-1] \to [0,p-1]$, such that  $g_j(i)=l$ iff $a_i \in R_{l}$. Similarly, we can define a permutation $f_j$ corresponding to the second parity column, where $f_j(i)=l$ iff $a_i \in Z_{l}$. For example, in Figure \ref{fig:shapes} each systematic column corresponds to a permutation of the four symbols.

Observing that there is no importance of the elements' ordering in each column, w.l.o.g. we can assume that the first parity column contains the sum of each row of $A$ and $g_j$'s correspond to identity permutations, i.e. $r_i=\sum_{j=0}^{k-1} \alpha_{i,j}a_{i,j}$ for some coefficients $\{\alpha_{i,j}\}$. We refer to the first and the second parity columns as the row parity and the zigzag parity respectively, likewise $R_l$ and $Z_l$, $l \in [0,p-1]$, are referred to as row sets and zigzag sets respectively. We will call $f_j$, $j \in [0,k-1]$, zigzag permutations. By assuming that the first parity column contains the row sums, we want to (1) find zigzag permutations to minimize the rebuilding ratio; and (2) assign the coefficients such that the code is MDS.

First we show that any set of zigzag sets $Z=\{Z_0,...,Z_{p-1}\}$ defines a $(k+2,k)$ MDS array code over a field $\mathbb{F}$ large enough.

\begin{thm}
\label{zigzag-sets}
Let $A=(a_{i,j})$ be an array of size $p\times k$ and the zigzag sets be $Z=\{Z_0,...,Z_{p-1}\}$, then there exists a $(k+2,k)$ MDS array code for $A$ with $Z$ as its zigzag sets over the field $\mathbb{F}$ of size greater than $p(k-1)+1$.
\end{thm}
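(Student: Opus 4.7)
The plan is to set every row coefficient $\alpha_{i,j}=1$, so that the row parity is literally the row sum, and then to pick the zigzag coefficients $\beta_{i,j}\in\mathbb{F}\setminus\{0\}$ one systematic column at a time, using the remaining freedom in $\mathbb{F}$ to avoid all ``bad'' configurations. The target bound $|\mathbb{F}|>p(k-1)+1$ should emerge from a tight degree count of these bad configurations.

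First I would reduce the MDS property to polynomial non-vanishing conditions. Two parity erasures are trivial; a single systematic erasure paired with a parity erasure is repaired from the surviving parity as long as $\alpha_{r,j}\neq 0$ and $\beta_{r,j}\neq 0$, which my choice preserves. The substantive case is two systematic columns $i<j$ erased. I would form the bipartite \emph{recovery graph} with vertex set $\{(r,i):r\in[0,p-1]\}\cup\{(r,j):r\in[0,p-1]\}$, a row-edge $\{(r,i),(r,j)\}$ for each $r$ coming from the row parity, and a zigzag-edge $\{(r,i),(r',j)\}$ whenever $f_i(r)=f_j(r')$ coming from the zigzag parity. Every vertex has degree exactly $2$, so the graph is a disjoint union of even cycles and the $2p\times 2p$ recovery system splits into one block per cycle. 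A cofactor expansion around a cycle $C$ through rows $r_1,\dots,r_t$ (exactly as in the $t=2$ computation already carried out in the proof of Theorem~\ref{thm0506}) gives
\[
D_C \;=\; \prod_{s=1}^{t}\beta_{r_s,j}\;-\;\prod_{s=1}^{t}\beta_{r_s,i},
\]
so MDS is equivalent to $\prod_{r\in C}\beta_{r,j}\neq\prod_{r\in C}\beta_{r,i}$ for every pair $i<j$ and every cycle $C$ of the $(i,j)$-recovery graph.

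Next I would assign $\beta$ greedily in the column order $j=0,1,\dots,k-1$, taking $\beta_{r,0}=1$, and at step $j\ge 1$ choosing $(\beta_{0,j},\dots,\beta_{p-1,j})\in(\mathbb{F}\setminus\{0\})^p$ so that $D_C\neq 0$ for every cycle arising from every $(i,j)$ with $i<j$. For a fixed such $i$, the cycles of the $(i,j)$-graph partition the row set, so the corresponding constraints, viewed as polynomials in the new variables $\beta_{\cdot,j}$, are each a nonzero monomial of degree $t$ minus a nonzero constant (the latter because the previously chosen $\beta_{r,i}$ are nonzero), and their degrees add up to $p$. Multiplying over all $i<j$ yields a single nonzero polynomial $Q_j(\beta_{0,j},\dots,\beta_{p-1,j})$ of total degree at most $jp$, and what remains is to find a nonzero evaluation that is not a root of $Q_j$.

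The last ingredient is the easy lemma, proved by induction on the number of variables: a nonzero polynomial of total degree $d$ in $n$ variables over $\mathbb{F}$ has a nonroot in $(\mathbb{F}\setminus\{0\})^n$ whenever $|\mathbb{F}|-1>d$ (strip off one variable at a time, using that some coefficient of the highest power is a nonzero polynomial in the rest). Applied to $Q_j$ this needs $|\mathbb{F}|>jp+1$, and the worst case $j=k-1$ gives the advertised $|\mathbb{F}|>p(k-1)+1$. The main technical hurdle I expect is the cycle-decomposition step: one must notice that each erased entry lies in exactly one row set and exactly one zigzag set, forcing the recovery matrix to break into cyclic blocks whose determinants have the clean two-monomial form above. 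Once that structural observation is in hand, the degree bookkeeping and the Schwartz--Zippel-style avoidance lemma finish the proof.
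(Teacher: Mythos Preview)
Your argument is correct, but the paper takes a different and somewhat slicker route. Rather than allowing an independent coefficient $\beta_{r,j}$ for every entry, the paper uses only \emph{one} indeterminate $x_j$ per systematic column, setting the zigzag parity block to $x_jP_j$ where $P_j$ is the permutation matrix of $f_j$; the MDS condition for the pair $(i,j)$ then becomes $\det(x_iP_i-x_jP_j)\neq 0$, and the product $T=\prod_{i<j}\det(x_iP_i-x_jP_j)$ is shown to have a nonzero coefficient on the monomial $\prod_i x_i^{\,p(k-1-i)}$, after which a single application of the Combinatorial Nullstellensatz with $S_i=\mathbb{F}\setminus\{0\}$ yields the bound $|\mathbb{F}|>p(k-1)+1$. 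So the paper avoids both the cycle decomposition and the column-by-column greed, at the price of proving a slightly stronger statement (one can take all zigzag coefficients in a column equal). What your approach buys is structural transparency: your cycle decomposition makes explicit that MDS is exactly the family of conditions $\prod_{r\in C}\beta_{r,i}\neq\prod_{r\in C}\beta_{r,j}$ over the orbits of $f_i^{-1}f_j$, which is precisely the kind of local condition exploited later in the paper (e.g.\ in the proof of Theorem~\ref{thm0506}) to get much smaller fields for specific permutation families. The paper's approach, by contrast, is shorter and shows that the coarser ``one scalar per column'' parametrization already suffices.
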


The proof is shown in Appendix \ref{app1}. The above theorem states that there exist coefficients such that the code is MDS, and thus we will focus first on finding proper zigzag permutations $\{f_j\}$. The idea behind choosing the zigzag sets is as follows: assume a systematic column $(a_0,a_1,...,a_{p-1})^T$ is erased. Each element $a_i$  is contained in exactly one row set and one zigzag set. For rebuilding of element $a_i$, access the parity of its row set or zigzag set. Moreover access the values of the remaining elements in that set, except $a_i$. We say that an element $a_i$ is rebuilt by a row (zigzag) if the parity of its row set (zigzag set) is accessed. For example, in Figure \ref{fig:shapes} supposing column $1$ is erased, one can access the shaded elements and rebuild its first two elements by rows, and the rest by zigzags. The set $\mathbf{S}=\{S_0,S_1,...,S_{p-1}\}$ is called a rebuilding set for column $(a_0,a_1,...,a_{p-1})^T$ if for each $i$, $S_i\in R\cup Z\text{ and }a_i\in S_i$. In order to minimize the number of accesses to rebuild the erased column, we need to minimize  the size of \begin{equation}
\label{eq:77}
|\cup_{i=0}^{p-1} S_i|,
\end{equation}
which is equivalent to maximizing the number of intersections between the sets $\{S_i\}_{i=0}^{p-1}$. More specifically, the intersections between the row sets in $\mathbf{S}$ and the zigzag sets in $\mathbf{S}$.

For a $(k+2,k)$ MDS code $\cC$ with $p$ rows define the \emph{rebuilding ratio} $R(\mathcal{C})$ as
the average fraction of accesses in the surviving systematic and parity nodes while rebuilding one systematic node, i.e.,
$$R(\mathcal{C}) =\frac{\sum_{j} \min_{{S_0,\dots,S_{p-1}} \text{ rebuilds } j}|\cup_{i=0}^{p-1} S_i|}{p(k+1)k}.$$
Notice that in the two parity nodes, we access $p$ elements because each erased element must be rebuilt either by row or by zigzag. And $\cup_{i=0}^{p-1} S_i$ contains $p$ elements in the erased column. Thus the above expression is exactly the rebuilding ratio.
Define the \emph{ratio function} for all $(k+2,k)$ MDS codes with $p$ rows as
$$R(k)=\min_{\cC}{R(\cC)},$$
which is the minimal average portion of the array needed to be accessed in order to rebuild one erased column.

\begin{thm}
\label{monotone function}
$R(k)$ is no less than $\frac{1}{2}$ and is a monotone nondecreasing function.
\end{thm}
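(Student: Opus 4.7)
The plan is to handle the two claims separately. For the lower bound $R(k)\ge 1/2$, I fix a $(k+2,k)$ MDS array code $\cC$ with zigzag permutations $f_0,\dots,f_{k-1}$, an erased systematic column $j$, and any rebuilding set $\mathbf{S}=\{S_0,\dots,S_{p-1}\}$. Partition $[0,p-1]$ into $R_j^\ast\cup Z_j^\ast$ according to whether each erased element is rebuilt by its row set or its zigzag set, so $|R_j^\ast|+|Z_j^\ast|=p$. A direct check then shows that in any surviving systematic column $j'\ne j$, the elements of $\cup_i S_i$ lying in column $j'$ are exactly those in rows $R_j^\ast\cup f_{j'}^{-1}(f_j(Z_j^\ast))$. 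Since $f_{j'}^{-1}\circ f_j$ is a bijection, this set has size at least $\max(|R_j^\ast|,|Z_j^\ast|)\ge p/2$. Adding the $p$ elements contributed by column $j$ itself yields $|\cup_i S_i|\ge p+(k-1)p/2=p(k+1)/2$. Summing over $j$ and dividing by $p(k+1)k$ gives $R(\cC)\ge 1/2$, and hence $R(k)\ge 1/2$.

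For monotonicity I would argue by shortening. Given any $(k+3,k+1)$ MDS array code $\cC'$ and any $j_0\in[0,k]$, let $\cC_{j_0}$ be the $(k+2,k)$ array code obtained by fixing column $j_0$ of $\cC'$ to zero: this preserves the MDS property, and the row and zigzag parities of $\cC'$ restrict to row and zigzag parities on the remaining $k$ systematic columns. For $j\ne j_0$, an optimizer $\mathbf{S}^\ast$ achieving $M_j:=\min|\cup_i S_i|_j^{\cC'}$ restricts to a valid rebuilding set in $\cC_{j_0}$ whose union differs from the old one by exactly the $C_{j,j_0}$ elements that $\mathbf{S}^\ast$ reads in column $j_0$. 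Hence
\[
R(\cC_{j_0})\le \frac{\sum_{j\ne j_0}(M_j-C_{j,j_0})}{p(k+1)k}.
\]

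The main obstacle, and the step I expect to require actual work, is turning these per-$j_0$ inequalities into a genuine comparison with $R(\cC')$. My plan is an averaging argument: since $|\cup_i S_i|$ contains the $p$ elements of the erased column $j$ together with $C_{j,j_0}$ elements from each surviving systematic column, one has $M_j=p+\sum_{j_0\ne j}C_{j,j_0}$. Averaging the displayed inequality over $j_0\in[0,k]$ and substituting this identity gives
\[
\frac{1}{k+1}\sum_{j_0}R(\cC_{j_0})\le \frac{(k-1)\sum_j M_j+(k+1)p}{p(k+1)^2 k}.
\]
Comparing with $R(\cC')=\sum_j M_j/(p(k+2)(k+1))$ and using the bound $\sum_j M_j\ge p(k+1)(k+2)/2$ from the first part applied to $\cC'$, the desired inequality reduces to the arithmetic identity $k(k+1)-(k-1)(k+2)=2$. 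Thus at least one $j_0$ satisfies $R(\cC_{j_0})\le R(\cC')$, so $R(k)\le R(\cC')$; minimizing over $\cC'$ yields $R(k)\le R(k+1)$.
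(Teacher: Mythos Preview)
Your proposal is correct and follows essentially the same strategy as the paper's proof, though the packaging differs.

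For the lower bound, your argument and the paper's are the same: in each surviving systematic column one must read at least $\max(|R_j^\ast|,|Z_j^\ast|)\ge p/2$ elements. The paper phrases this as ``the larger of the two parity-access counts is $\ge p/2$, and each of those sets touches $k-1$ surviving columns,'' which is exactly your $R_j^\ast\cup f_{j'}^{-1}f_j(Z_j^\ast)$ computation.

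For monotonicity, the paper introduces the auxiliary quantity $h(k)=\max_{\cC}I(\cC)$ (maximum total intersections) and proves a general recursive bound $h(k)\le \frac{k(k-1)}{q(q-1)}h(q)$ by restricting an optimal code to all $q$-subsets of columns and double counting. Specializing to $q=k$ gives $h(k+1)\le\frac{(k+1)}{k-1}h(k)$, and the inequality $R(k+1)\ge R(k)$ then reduces algebraically to the first-part bound $h(k)\le pk(k-1)/2$. Your argument is the same mechanism expressed directly in terms of accesses rather than intersections: your $C_{j,j_0}$ is precisely $p-I(j_0,\mathbf{S}^j)$ in the paper's notation, your shortening $\cC'\mapsto\cC_{j_0}$ is the paper's restriction to a $k$-subset, your averaging over $j_0$ is the $q=k$ case of the paper's averaging over $T$, and your final reduction to $\sum_j M_j\ge p(k+1)(k+2)/2$ is exactly the paper's reduction to $h(k)\le pk(k-1)/2$. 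Your route is a bit more direct since it avoids introducing $h(k)$; the paper's formulation has the advantage of yielding the more general inequality for arbitrary $q\le k$ along the way.
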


The proof is given in Appendix \ref{app2}. For example, the code in Figure \ref{fig2} achieves the lower bound of ratio $1/2$, and therefore $R(3)=1/2$. Moreover, We will see in Corollary \ref{thm2} that $R(k)$ is almost $1/2$ for all $k$ and  $p=2^m$, where $m$ is large enough.

So far we have discussed the characteristics of an arbitrary MDS array code with optimal update. Next, let us look at our code in Construction \ref{cnstr1}.

Recall that by Theorem \ref{zigzag-sets} this code can be an MDS code over a field large enough.
The ratio of the constructed code will be proportional to the size of the union of the elements in the rebuilding set in \eqref{eq:77}.
The following theorem gives the ratio for Construction \ref{cnstr1} and can be easily derived from Lemma \ref{lemma 3} part (i).
\begin{thm}
\label{th:123}
The code described in Construction \ref{cnstr1} and generated by the vectors $v_0,v_1,...,v_{k-1}$ is a $(k+2,k)$  MDS array code with ratio
\begin{equation}
R=\frac{1}{2}+\frac{\sum_{i=0}^{k-1}\sum_{j\neq i}|f_i(X_i)\cap f_j(X_i)|}{2^mk(k+1)}.\label{eq:345}
\end{equation}
\end{thm}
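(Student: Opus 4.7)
The plan is to derive the ratio formula by a direct accounting of accesses under the rebuilding strategy prescribed in Construction \ref{cnstr1}, with Lemma \ref{lemma 3}(i) doing essentially all of the work. The MDS property follows immediately from Theorem \ref{zigzag-sets}: each $f_{v_j}$ is a bijection of $[0,2^m-1]$, so the zigzag sets $Z_0,\dots,Z_{2^m-1}$ defined by $a_{i,j} \in Z_{f_j(i)}$ form a valid partition of $A$ into $2^m$ sets of size $k$ with exactly one element from each column; Theorem \ref{zigzag-sets} then gives a field large enough to pick MDS coefficients.

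For the ratio, I would fix an erased systematic column $i \in [0,k-1]$ and count accesses node by node. Construction \ref{cnstr1} rebuilds the elements in rows $X_i$ by rows and the elements in rows $\overline{X_i}$ by zigzags, so the two parity columns are accessed in exactly $|X_i| = 2^{m-1}$ and $|\overline{X_i}| = 2^{m-1}$ positions respectively, contributing $p = 2^m$ parity accesses in total. For each surviving systematic column $j \neq i$, Lemma \ref{lemma 3}(i) immediately yields the per-column count $2^{m-1} + |f_i(X_i) \cap f_j(X_i)|$. Summing over all surviving nodes, the total number of accesses to rebuild column $i$ is
$$p + \sum_{j \neq i}\bigl(2^{m-1} + |f_i(X_i) \cap f_j(X_i)|\bigr) \;=\; (k+1)\cdot 2^{m-1} \;+\; \sum_{j \neq i}|f_i(X_i) \cap f_j(X_i)|.$$

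The last step is to average: summing the display over $i \in [0,k-1]$ and dividing by the normalization $p(k+1)k = 2^m k(k+1)$ from the definition of $R(\mathcal{C})$ splits into the constant term $k(k+1)\cdot 2^{m-1} / (2^m k(k+1)) = \tfrac{1}{2}$ plus the double sum $\sum_{i}\sum_{j\neq i}|f_i(X_i)\cap f_j(X_i)|$ divided by $2^m k(k+1)$, which is exactly \eqref{eq:345}. I do not expect any real obstacle: Lemma \ref{lemma 3}(i) already encapsulates the combinatorics of how row rebuilds and zigzag rebuilds overlap at a surviving systematic node, and the theorem is essentially a bookkeeping summation on top of it. The only minor thing to verify in passing is that the prescribed strategy actually reconstructs the erased column (each erased element lies in some row set or zigzag set whose remaining members are all unerased, since only one systematic column is lost), so that the counts above correspond to a bona fide rebuilding scheme and thus bound $R(\mathcal{C})$ from above with equality for the strategy of Construction \ref{cnstr1}.
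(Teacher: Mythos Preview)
Your proposal is correct and matches the paper's own approach: the paper states only that the theorem ``can be easily derived from Lemma~\ref{lemma 3} part (i)'' and gives no further details, and your node-by-node count (parity contribution $p$, systematic contribution $2^{m-1}+|f_i(X_i)\cap f_j(X_i)|$ per surviving column, then average and normalize) is precisely that derivation. Your closing remark that the resulting number is the ratio \emph{of the rebuilding strategy specified in Construction~\ref{cnstr1}} (hence an upper bound on the minimizing quantity $R(\mathcal{C})$) is the correct reading of the statement.
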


Next we show the optimal code in Theorem \ref{orthogonal-permutations} is optimal in size, namely, it has the maximum number of columns given the number of rows.

\begin{thm}\label{thm:size}
Let $F$ be an orthogonal set of  permutations over the integers $[0,2^m-1]$, then the size of $F$ is at most $m+1.$
\end{thm}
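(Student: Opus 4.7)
The plan is to translate the orthogonality condition into a combinatorial constraint on the common refinement of the subsets $X_0,\dots,X_{k-1}$, and then extract the bound from a parity count in $\mathbb{F}_2^k$. As a normalization, I would first replace every $f_i$ by $f_0^{-1}f_i$, which makes $f_0=\mathrm{id}$; this preserves orthogonality because $f_0^{-1}$ is a bijection and so $|f_0^{-1}f_i(X_i)\cap f_0^{-1}f_j(X_i)|=|f_i(X_i)\cap f_j(X_i)|$.

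With $f_0=\mathrm{id}$, I would unpack the condition. For $i\neq j$ both $f_i(X_i)$ and $f_j(X_i)$ have size $2^{m-1}$ and are disjoint, so they are complementary. Taking $j=0$ gives $f_i(X_i)=\overline{X_i}$ for every $i\ge1$, i.e., $f_i$ swaps $X_i$ with $\overline{X_i}$. Taking $i=0$ gives $f_j(X_0)=\overline{X_0}$ for every $j\ge1$. Finally, for distinct $i,j\ge1$, $f_j(X_i)=\overline{f_i(X_i)}=\overline{\,\overline{X_i}\,}=X_i$, so $f_j$ preserves $X_i$. Thus each $f_j$ with $j\ge1$ has a very rigid action: it swaps the pair $(X_0,\overline{X_0})$ and the pair $(X_j,\overline{X_j})$, while preserving $X_i$ (and therefore $\overline{X_i}$) for every other $i\ge1$.

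Next I would introduce the type map $\tau:[0,2^m-1]\to\{0,1\}^k$ whose $i$-th coordinate is $\mathbf{1}_{X_i}$, partitioning the ground set into at most $2^k$ fibers. The rigid action described above translates into the statement that $f_j$ (for $j\ge1$) carries the fiber of type $s$ bijectively onto the fiber of type $s+e_0+e_j$, where $\{e_i\}$ is the standard basis of $\mathbb{F}_2^k$. Consequently, writing $n_s$ for the fiber size, $n_s=n_{s+v}$ for every $v$ in the subspace $L=\langle e_0+e_1,\ldots,e_0+e_{k-1}\rangle\subseteq\mathbb{F}_2^k$. The $k-1$ generators are linearly independent and each has weight $2$, so $L$ has dimension $k-1$ and equals the even-weight subspace; hence $\mathbb{F}_2^k/L$ has two cosets, on each of which the fiber size is a constant, say $n_e$ and $n_o$.

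Counting all points gives
\[
 2^m \;=\; \sum_{s\in\mathbb{F}_2^k} n_s \;=\; 2^{k-1}(n_e+n_o),
\]
so $n_e+n_o=2^{m-k+1}$ is forced to be a positive integer, which yields $k\le m+1$. The main obstacle is recognizing the correct structural observation in paragraph two — that the orthogonality condition, after normalization, says every $f_j$ is forced to flip coordinates $0$ and $j$ of the type simultaneously while fixing the others. Once this coupled flip is in hand, everything reduces to the elementary parity-count over $\mathbb{F}_2^k$ above, and the extremal family from Theorem~\ref{orthogonal-permutations} shows the bound is tight.
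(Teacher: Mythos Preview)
Your argument is correct, and it takes a genuinely different route from the paper's. The paper proceeds by induction on $m$: after normalizing so that $f_0=\mathrm{id}$ and $X_0=[0,2^{m-1}-1]$, it shows that each $X_i$ ($i\ge1$) meets $X_0$ in exactly $2^{m-2}$ points, and then restricts the permutations $f_1,\dots,f_{k-1}$ to $X_0$ (shifting the image back by $2^{m-1}$) to produce a new orthogonal family of size $k-1$ over $[0,2^{m-1}-1]$; the inductive hypothesis then gives $k-1\le m$. Your proof instead keeps the whole ground set and encodes everything in the type map $\tau$: the key structural observation that each $f_j$ ($j\ge1$) acts on types by the fixed translation $s\mapsto s+e_0+e_j$ collapses the problem to a single divisibility count $2^{k-1}\mid 2^m$, with no induction. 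Your approach is shorter and extracts the extra information that the fibers have constant size on each parity class of $\{0,1\}^k$; the paper's inductive restriction, on the other hand, is the template that generalizes cleanly to the $r$-parity version in Theorem~\ref{thm0519}, where the ``even-weight translation'' picture is less immediate.
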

\begin{IEEEproof}
We will prove it by induction on $m$. For $m=0$ there is nothing to prove. Let $F=\{f_0,f_1,...,f_{k-1}\}$ be a set of orthogonal permutations over the set $[0,2^m-1].$ We only need to show that $|F|=k\leq m+1.$ It is trivial to see that for any permutations $g,h$ on $[0,2^m-1]$, the set $hFg=\{hf_0g,hf_1g,...,hf_{k-1}g\}$ is also a set of orthogonal permutations with sets $g^{-1}(X_0),g^{-1}(X_1),...,g^{-1}(X_{k-1}).$ Thus w.l.o.g. we can assume that $f_0$ is the identity permutation and $X_0=[0,2^{m-1}-1]$. From the orthogonality we get that
$$\cup_{i=1}^{k-1} f_i(X_0)=\overline{X_0}=[2^{m-1},2^{m}-1].$$ We claim that for any $i\neq 0,|X_i\cap X_0|=\frac{|X_0|}{2}=2^{m-2}.$ Assume the contrary, thus w.l.o.g we can assume that $|X_i\cap X_0|>2^{m-2}$, otherwise $|X_i\cap \overline{X_0}|>2^{m-2}.$ For  any $j\neq i\neq 0$ we get that
\begin{equation}
\label{eq:444}
f_j(X_i\cap X_0), f_i(X_i\cap X_0)\subseteq \overline{X_0},
\end{equation}
\begin{equation}
\label{eq:445}
|f_j(X_i\cap X_0)|=|f_i(X_i\cap X_0)|>2^{m-2}=\frac{|\overline{X_0}|}{2}.
\end{equation}
From equations (\ref{eq:444}) and (\ref{eq:445}) we conclude that $f_j(X_i\cap X_0)\cap f_i(X_i\cap X_0)\neq \emptyset$, which contradicts the orthogonality property. Define the set of  permutations $F^*=\{f_i^*\}_{i=1}^{k-1}$ over the set of integers $[0,2^{m-1}-1]$ by $f_i^*(x)=f_i(x)-2^{m-1}$, which is a set of orthogonal permutations with sets $\{X_i\cap X_0\}_{i=1}^{k-1}$. By induction $k-1 \leq m$ and the result follows.
\end{IEEEproof}

The above theorem implies that the number of rows has to be exponential in the number of columns in any systematic code with optimal ratio and optimal update. Notice that the code in Theorem \ref{orthogonal-permutations} achieves the \emph{maximum} possible number of columns, $m+1$. Besides, an exponential number of rows is still practical in storage systems, since they are composed of dozens of nodes (disks) each of which has size in an order of gigabytes. In addition, increasing the number of columns can be done using duplication (Theorem \ref{lem13}) or a larger set of vectors (the following example) with a cost of a small increase in the ratio.

\begin{xmpl} \label{xmpl1}
Let $T=\{v \in \mathbb{F}_2^m:\|v\|_1=3\}$ be the set of vectors with weight 3 and length $m$. Notice that $|T|=\binom{m}{3}$. Construct the code $\cC$ by $T$ according to Construction \ref{cnstr1}.
Given $v \in T$, $|\{u \in T: |v\backslash u|=3\}|= \binom{m-3}{3}$, which is the number of vectors with 1's in different positions as $v$. Similarly,
$|\{u \in T: |v\backslash u|=2\}|= 3\binom{m-3}{2}$ and
$|\{u \in T: |v\backslash u|=1\}|= 3(m-3)$. By Theorem \ref{th:123} and Lemma \ref{lemma 3}, for large $m$ the ratio is
$$\frac{1}{2}+\frac{2^{m-1}\binom{m}{3}3\binom{m-3}{2}}{2^m\binom{m}{3} (\binom{m}{3}+1)} \approx \frac{1}{2} + \frac{9}{2m}.$$
\end{xmpl}

Note that this code reaches the lower bound of the ratio as $m$ tends to infinity, and has $O(m^3)$ columns.

%
%
\section{Code Duplication}
\label{code-duplication}
%
%
In this section, we are going to duplicate the code to increase the number of columns in the constructed $(k+2,k)$ MDS codes, such that $k$ does not depend on the number of rows, and ratio is approximately $\frac{1}{2}$. Then we will show the optimality of the duplication code based on the standard basis.

Let $\mathcal{C}$ be a $(k+2,k)$ array code with $p$ rows, where the zigzag sets $\{Z_l\}_{l=0}^{p-1}$ are defined by the set of permutations $\{f_i\}_{i=0}^{k-1}$ on $[0,p-1]$. For an integer $s$, an $s$-\emph{duplication code} $\mathcal{C}'$ is an $(sk+2,sk)$ MDS code with zigzag permutations defined by duplicating the $k$ permutations $s$ times each, and the first parity column is the row sums. In order to make the code MDS, the coefficients in the parities may be different from the code $\cC$. For an $s$-duplication code, denote the column corresponding to the  $t$-th $f_j$ as column $j^{(t)}$, $0 \le t \le s-1$. Call the columns $\{j^{(t)}: j \in [0,k-1]\}$ the $t$-th copy of the original code. An example of a $2$-duplication of the code in Figure \ref{fig2} is illustrated in Figure \ref{fig:duplication}.

\begin{figure*}
	\centering
		\includegraphics[scale=.75]{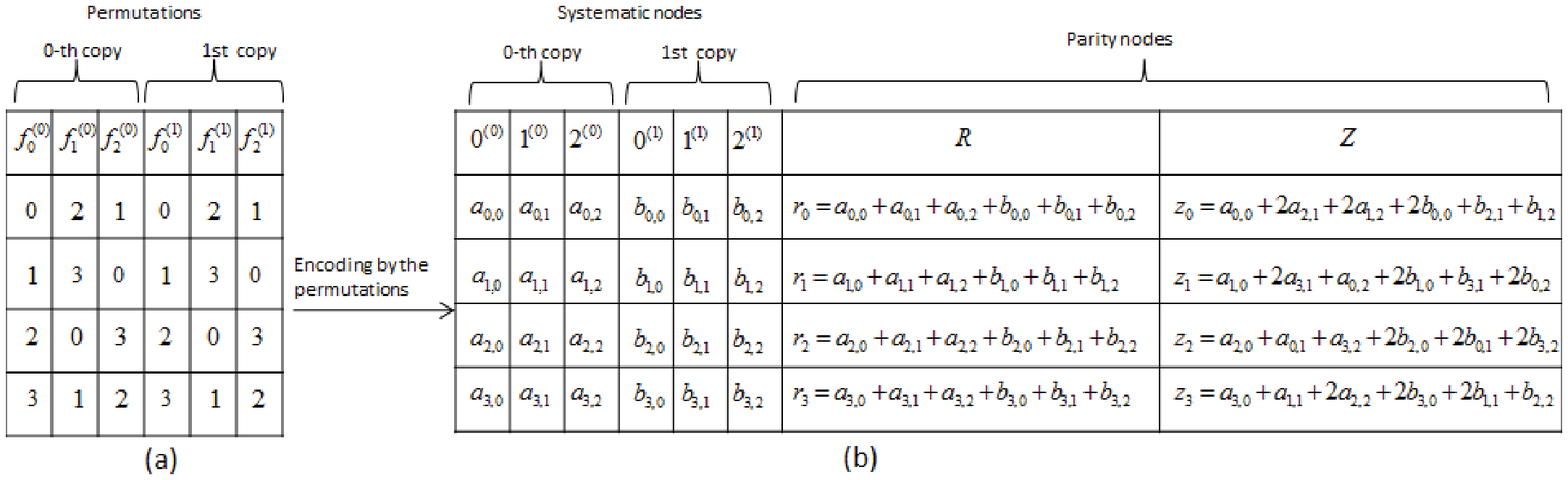}
	\caption{A $2$-duplication of the code in Figure \ref{fig2}. The code has $6$ information nodes and $2$ parity nodes. The ratio is $4/7$.}
	\label{fig:duplication}
\end{figure*}

\begin{thm} \label{lem13}
If a $(k+2,k)$ code $\mathcal{C}$ has ratio $R(\mathcal{C})$, then its $s$-duplication code $\mathcal{C}'$ has ratio  $R(\mathcal{C})(1+\frac{s-1}{sk+1})$.
\end{thm}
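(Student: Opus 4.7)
The plan is to lift the optimal rebuilding strategy of $\mathcal{C}$ to $\mathcal{C}'$ column-by-column and count the accesses incurred, then simplify.

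\textbf{Step 1: Structure of the parity sets in $\mathcal{C}'$.} Since every column $j^{(t)}$ in $\mathcal{C}'$ uses the same zigzag permutation $f_j$ as its originator in $\mathcal{C}$, each row set in $\mathcal{C}'$ has the form $R_i' = \{a_{i,\,j'^{(t')}} : j' \in [0,k-1],\, t' \in [0,s-1]\}$ and each zigzag set has the form $Z_l' = \{a_{f_{j'}^{-1}(l),\,j'^{(t')}} : j' \in [0,k-1],\, t' \in [0,s-1]\}$. Hence every position $(i,j')$ that belongs to a row (resp.\ zigzag) set in $\mathcal{C}$ expands to the $s$ positions $\{(i, j'^{(t')}) : t' \in [0,s-1]\}$ in the corresponding row (resp.\ zigzag) set of $\mathcal{C}'$.

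\textbf{Step 2: Accesses per rebuilding.} Fix an erased column $j^{(t)}$ and use the same row-vs.-zigzag decision for each element as in the optimal strategy for column $j$ in $\mathcal{C}$. Let $a_j = |\bigcup_{i=0}^{p-1} S_i|$ be the size of the optimal rebuilding union in $\mathcal{C}$, which includes the $p$ positions of the erased column. By Step 1, $\bigcup_{i=0}^{p-1} S_i'$ is precisely the $s$-fold expansion of $\bigcup S_i$ and therefore has size $s \cdot a_j$. Discarding the $p$ positions of the erased column $j^{(t)}$ leaves $s a_j - p$ information-column accesses, and adding $p$ parity accesses (one parity per erased element) gives $s a_j$ total accesses. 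This is also the minimum over strategies in $\mathcal{C}'$, because the same bijection between row/zigzag choices in $\mathcal{C}$ and in $\mathcal{C}'$ shows that every $\mathcal{C}'$-union has size $s \cdot |\bigcup S_i|$ for the corresponding $\mathcal{C}$-strategy.

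\textbf{Step 3: Aggregate and simplify.} There are $sk$ systematic columns in $\mathcal{C}'$, namely $s$ copies of each of the $k$ originals, so the total access count summed over all rebuildings is $\sum_{t,j} s a_j = s^2 \sum_{j=0}^{k-1} a_j$. The denominator of the ratio is $sk \cdot p(sk+1)$, the total number of surviving elements across all rebuildings. Using $\sum_j a_j = pk(k+1)\, R(\mathcal{C})$ directly from the definition of $R(\mathcal{C})$,
\[
R(\mathcal{C}') = \frac{s^2 \sum_j a_j}{sk \cdot p(sk+1)} = \frac{s(k+1)}{sk+1}\, R(\mathcal{C}) = \left(1 + \frac{s-1}{sk+1}\right) R(\mathcal{C}).
\]

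The main obstacle is the bookkeeping in Step 2: carefully verifying that every row or zigzag set of $\mathcal{C}'$ is exactly the $s$-fold expansion of the corresponding set in $\mathcal{C}$, and that the only adjustment relative to a naive multiplication by $s$ is the removal of the $p$ positions of the single erased copy (the other $s-1$ copies of the original column $j$ remain accessible). Once this expansion correspondence is pinned down, the remainder is a one-line algebraic identity.
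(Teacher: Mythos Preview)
Your proof is correct and follows essentially the same approach as the paper: lift the optimal row/zigzag strategy for column $j$ in $\mathcal{C}$ to column $j^{(t)}$ in $\mathcal{C}'$, observe that the resulting accessed set is the $s$-fold expansion of the original (so the $(s-1)$ other copies of column $j$ are fully read while the remaining columns contribute the same rows as in $\mathcal{C}$), and then simplify the resulting ratio. Your bookkeeping via the identity $|\bigcup S_i'| = s\,|\bigcup S_i|$ is a clean repackaging of the paper's decomposition into ``other columns,'' ``copies of column $i$,'' and ``parities,'' and your optimality argument (the bijection on strategies scales every union by $s$) is equivalent to the paper's observation that the surviving copies force full access and the remaining columns cannot beat the $\mathcal{C}$-optimum.
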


\begin{IEEEproof}
We propose a rebuilding algorithm for $\mathcal{C}'$ with ratio of $R(\mathcal{C})(1+\frac{s-1}{sk+1})$, which will be shown to be optimal.

Suppose in the optimal rebuilding algorithm of $\mathcal{C}$, for column $i$,
elements of rows $J=\{j_1,j_2,\dots,j_u\}$ are rebuilt by zigzags, and the rest by rows. In $\mathcal{C}'$, all the $s$ columns corresponding to $f_i$ are rebuilt in the same way: the elements in rows $J$ are rebuilt by zigzags.

W.l.o.g. assume column $i^{(0)}$ is erased. Since column $i^{(t)}$, $t \in [1,s-1]$ corresponds to the same zigzag permutation as the erased column, for the erased element in the $l$-th row, no matter if it is rebuilt by row or by zigzag, we have to access the element in the $l$-th row and column $i^{(t)}$ (e.g. permutations $f_0^{(0)},f_0^{(1)}$ and the corresponding columns $0^{(0)},0^{(1)}$ in Figure \ref{fig:duplication}). Hence all the elements in column $i^{(t)}$ must be accessed. Moreover, the optimal way to access the other surviving columns can not be better than the optimal way to rebuild in the code $\cC$. Thus the proposed algorithm has optimal rebuilding ratio.

When column $i^{(0)}$ is erased, the average (over all $i \in [0,k-1]$) of the number of elements needed to be accessed in columns $l^{(t)}$, for all $l \in [0,k-1], l \neq i$ and $t \in [0,s-1]$ is
$$R(\cC)p(k+1)-p.$$
Here the term $-p$ corresponds to the access of the parity nodes in $\cC$. Moreover, we need to access all the elements in columns $i^{(t)}, 0<t \le s-1$, and access $p$ elements in the two parity columns. Therefore, the rebuilding ratio is
\begin{eqnarray*}
 R(\mathcal{C}')
&= & \frac{s(R(\mathcal{C})p(k+1)-p)+(s-1)p+p}{p(sk+1)} \\
&=& R(\mathcal{C})\frac{s(k+1)}{sk+1} \\
&=& R(\mathcal{C})(1+\frac{s-1}{sk+1})
\end{eqnarray*}
and the proof is completed.
\end{IEEEproof}

Theorem \ref{lem13} gives us the ratio of the $s$- duplication of a code $\cC$ as a function of its ratio $R(\cC)$.
As a result, for the optimal-ratio code in Theorem \ref{orthogonal-permutations}, the ratio of its duplication code is slightly more than $1/2$, as the following corollary suggests.

\begin{cor} \label{thm2}
The $s$-duplication of the code in Theorem \ref{orthogonal-permutations} has ratio $\frac{1}{2}(1+\frac{s-1}{s(m+1)+1})$, which is $\frac{1}{2}+\frac{1}{2(m+1)}$ for large $s$.
\end{cor}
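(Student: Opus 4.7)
The plan is to apply Theorem \ref{lem13} directly to the code constructed in Theorem \ref{orthogonal-permutations}, then take the limit $s \to \infty$.

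First, I would recall that Theorem \ref{orthogonal-permutations} produces an $(m+3, m+1)$ MDS array code (so $k = m+1$) with optimal rebuilding ratio $R(\mathcal{C}) = 1/2$. Since the hypotheses of Theorem \ref{lem13} are exactly that $\mathcal{C}$ is a $(k+2,k)$ code with known ratio $R(\mathcal{C})$, I can substitute $k = m+1$ and $R(\mathcal{C}) = 1/2$ into the formula $R(\mathcal{C})\bigl(1 + \tfrac{s-1}{sk+1}\bigr)$ given by that theorem. This yields
$$R(\mathcal{C}') \;=\; \frac{1}{2}\left(1 + \frac{s-1}{s(m+1)+1}\right),$$
which is the first claim of the corollary.

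For the asymptotic statement, I would observe that
$$\lim_{s\to\infty} \frac{s-1}{s(m+1)+1} \;=\; \frac{1}{m+1},$$
so that for large $s$,
$$R(\mathcal{C}') \;\approx\; \frac{1}{2}\left(1 + \frac{1}{m+1}\right) \;=\; \frac{1}{2} + \frac{1}{2(m+1)}.$$
This gives the second claim.

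There is essentially no obstacle here: the corollary is a direct specialization of Theorem \ref{lem13} to the code of Theorem \ref{orthogonal-permutations}, followed by a routine limit. The only point worth double-checking is the identification $k = m+1$ (not $k = m+3$) when plugging into the formula, since the code in Theorem \ref{orthogonal-permutations} is an $(m+3,m+1)$ code — that is, it has $m+1$ systematic columns and $2$ parity columns, so the parameter $k$ in Theorem \ref{lem13} equals $m+1$.
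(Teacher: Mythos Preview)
Your proposal is correct and matches the paper's approach exactly: the paper states Corollary \ref{thm2} immediately after Theorem \ref{lem13} with no separate proof, since it is a direct substitution of $k=m+1$ and $R(\mathcal{C})=1/2$ into that theorem's formula, followed by the elementary limit you compute.
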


For example, we can rebuild the column $1^{(0)}$ in Figure \ref{fig:duplication} by accessing the elements in rows $\{0,1\}$ and in columns $0^{(0)},2^{(0)},0^{(1)},2^{(1)},R,Z$, and all the elements in column $1^{(1)}$.
The rebuilding  ratio for this code is $4/7$.

Using duplication we can have \emph{arbitrarily large number of columns}, independent of the number of rows. Moreover the above corollary shows that it also has an almost optimal ratio.

Next we will show that if we restrict ourselves to codes constructed using Construction \ref{cnstr1} and duplication, the code using the standard basis and duplication has optimal asymptotic rate.

In order to show that, we define a related graph.
Define the directed graph $D_m=D_m(V,E)$ as $V=\{w \in \mathbb{F}_2^m:w\neq 0\}$, and $E=\{(w_1,w_2):|w_2 \backslash w_1|=1 \mod 2\}$. Hence the vertices are the nonzero binary vectors of length $m$, and there is a directed edge from $w_1$ to $w_2$ if $|w_2\backslash w_1|$ is odd size. 
 From any induced subgraph $H$ of $D_m$, we construct the code $\cC(H)$ from the vertices of $H$ using Construction \ref{cnstr1}.
By Lemma \ref{lemma 3} we know that a directed edge from $w_1$ to $w_2$ in $H$ means $f_{w_2}(X_{w_2}) \cap f_{w_1} (X_{w_2})= \emptyset$, so only half of the information from the column corresponding to $w_1$ is accessed while rebuilding the column corresponding to $w_2$. For a directed graph $D=D(V,E)$, let $S$ and $T$ be two disjoint subsets of its vertices. We define the density of the set $S$ to be $d_S=\frac{E_S}{|S|^2}$ and the density between $S$ and $T$ to be $d_{S,T}=\frac{E_{S,T}}{2|S||T|}$, where $E_S$ is the number of edges with both of its endpoints in $S$, and $E_{S,T}$ is the number of  edges incident with a vertex in $S$ and a vertex in $T$. The following theorem shows that the asymptotic ratio of any code constructed using Construction \ref{cnstr1} and duplication is a function of the density of the corresponding graph $H$.

\begin{theorem}
\label{th:34}
Let $H$ be an induced subgraph of $D_m$.
Let $\cC_s(H)$ be the $s$-duplication of the code constructed using the vertices of $H$ and Construction \ref{cnstr1}.
Then the asymptotic ratio of $\cC_s(H)$  is
$$\lim_{s \to \infty}R(\cC_s(H))=1-\frac{d_H}{2}$$
\end{theorem}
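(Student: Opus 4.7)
The plan is to chain together three results already in the paper: Theorem~\ref{lem13} (how duplication scales the ratio), Theorem~\ref{th:123} (a closed-form ratio for a Construction~\ref{cnstr1} code in terms of the sums $|f_i(X_i)\cap f_j(X_i)|$), and Lemma~\ref{lemma 3}(ii) (which collapses each such intersection to either $2^{m-1}$ or $0$ depending on the parity of $|v_i\backslash v_j|$). After these substitutions only elementary algebra remains, so there is no truly hard step; the one subtle point is keeping the direction of edges in $D_m$ straight.

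Write $k=|V(H)|$ and let $E_H$ be the number of directed edges of $H$, so that $d_H=E_H/k^2$. First I would apply Theorem~\ref{lem13} and pass to the limit:
$$R(\cC_s(H))\;=\;R(\cC(H))\cdot\frac{s(k+1)}{sk+1}\;\xrightarrow{s\to\infty}\;\frac{k+1}{k}\,R(\cC(H)),$$
reducing the problem to an exact computation of $R(\cC(H))$. Next I would apply Theorem~\ref{th:123}, which gives $R(\cC(H))=\tfrac{1}{2}+\tfrac{\Sigma}{2^m k(k+1)}$ for $\Sigma=\sum_{i\neq j}|f_i(X_i)\cap f_j(X_i)|$, and then use Lemma~\ref{lemma 3}(ii) to evaluate each summand. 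The key identification — and the only place that needs care — is that by the definition of $D_m$, the condition ``$|v_i\backslash v_j|$ is odd'' is precisely the condition $(v_j,v_i)\in E(D_m)$, and since $H$ is induced, this is equivalent to $(v_j,v_i)\in E_H$. Each directed edge of $H$ contributes exactly one ordered pair $(i,j)$ with $i\neq j$, so among the $k(k-1)$ ordered pairs exactly $E_H$ have $|v_i\backslash v_j|$ odd (contributing $0$) and the remaining $k(k-1)-E_H$ contribute $2^{m-1}$ each. Therefore
$$R(\cC(H))\;=\;\frac{1}{2}+\frac{2^{m-1}\bigl(k(k-1)-E_H\bigr)}{2^m k(k+1)}\;=\;\frac{1}{2}+\frac{k(k-1)-E_H}{2k(k+1)}.$$

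Finally I would multiply by $(k+1)/k$ and simplify, combining the two fractions over the common denominator $2k^2$:
$$\lim_{s\to\infty}R(\cC_s(H))\;=\;\frac{k+1}{2k}+\frac{k(k-1)-E_H}{2k^2}\;=\;\frac{k(k+1)+k(k-1)-E_H}{2k^2}\;=\;\frac{2k^2-E_H}{2k^2}\;=\;1-\frac{d_H}{2},$$
which is exactly the claim. The entire argument is thus a substitution followed by algebra; the only thing one must be careful about is that the relevant edge of $D_m$ associated with the ordered pair $(i,j)$ is $(v_j,v_i)$, ensuring the identification of bad pairs with directed edges of $H$ is a bijection.
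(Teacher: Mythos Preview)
Your proof is correct. It is essentially the same counting argument as the paper's, but organized differently: you invoke Theorem~\ref{lem13} and Theorem~\ref{th:123} as black boxes to reduce the problem to evaluating $\Sigma=\sum_{i\neq j}|f_i(X_i)\cap f_j(X_i)|$ via Lemma~\ref{lemma 3}(ii), then take the limit. The paper instead recomputes the $s$-duplication ratio directly by counting, for each erased column $v_i^{l}$, that a surviving column $v_j^{k}$ contributes $2^m$ accesses unless $|v_i\backslash v_j|$ is odd (then $2^{m-1}$), summing to get $(s|V(H)|-1)2^m-\deg^{+}(v_i)\,s\,2^{m-1}$, and averaging before letting $s\to\infty$. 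Your route is a bit cleaner since it reuses already-proved results; the paper's route makes the per-column access count and the role of $\deg^{+}(v_i)$ more explicit. Your care with the edge direction---that $|v_i\backslash v_j|$ odd corresponds to the directed edge $(v_j,v_i)$---matches exactly the paper's use of indegree.
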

\begin{IEEEproof}
Let the set of vertices and edges of $H$ be $V(H)=\{v_i\}$ and $E(H)$ respectively. Denote by $v_i^l$, $v_i \in V(H), l \in [0,s-1]$, the $l$-th copy of the column corresponding to the vector $v_i$.
In the rebuilding of column $v_i^{l},l\in [0,s-1]$ each remaining systematic column $v_j^k,k\in[0,s-1]$, needs to access all of its $2^m$ elements unless $|v_i\backslash v_j|$ is odd, and in that case it only has to access $2^{m-1}$ elements. Hence the total amount of accessed information for rebuilding this column is $$(s|V(H)|-1)2^m-\deg^+(v_i)s2^{m-1},$$
where $\deg^+$ is the indegree of $v_i$ in the induced subgraph $H$. Averaging over all the columns in $\cC_s(H)$ we get  the ratio:
\begin{align}
 &R(\cC_s(H)) \nonumber\\
=&\frac{\sum_{v_i^l\in \cC_s(H)}(s|V(H)|-1)2^m-\deg^+(v_i)s2^{m-1}}{s|V(H)|(s|V(H)|+1)2^m}\nonumber\\
=&\frac{s|V(H)|(s|V(H)|-1)2^m-s^2\sum_{v_i\in V(H)}\deg^+(v_i)2^{m-1}}{s|V(H)|(s|V(H)|+1)2^m}\nonumber\\
=&\frac{s|V(H)|(s|V(H)|-1)2^m-s^2|E(H)|2^{m-1}}{s|V(H)|(s|V(H)|+1)2^m}.\nonumber
\end{align}
Hence $$\lim_{s \to \infty}R(\cC_s(H))=1-\frac{|E(H)|}{2|V(H)|^2}=1-\frac{d_H}{2}.$$

\end{IEEEproof}

We conclude from Theorem \ref{th:34} that the asymptotic ratio of any code using duplication and a set of binary vectors $\{v_i\}$ is a function of the density of the corresponding induced subgraph of $D_m$ with $\{v_i\}$ as its vertices. Hence the induced subgraph of $D_m$ with maximal density corresponds to the code with optimal asymptotic ratio. It is easy to check that the induced subgraph with its vertices as the standard basis $\{e_i\}_{i=1}^m$ has density $\frac{m-1}{m}$. In fact this is the maximal possible density among all the induced subgraph as Theorem \ref{thm:opt rate} suggests, but in order to show it we will need the following technical lemma.

\begin{lem}
\label{density lemma}
Let $D=D(V,E)$ be a directed graph and $S, T$ be a partition of $V$, i.e., $S\cap T=\emptyset, S\cup T=V$, then
$$d_V\leq \max \{d_S,d_T,d_{S,T}\}$$
\end{lem}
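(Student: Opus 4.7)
The plan is to observe that the claimed inequality is really just the statement that a weighted average does not exceed the maximum of the quantities being averaged. So the main task is to rewrite $d_V$ as an explicit convex combination of $d_S$, $d_T$, and $d_{S,T}$.

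First I would partition the edge set of $D$ according to where each edge's endpoints fall with respect to the partition $V = S \cup T$: an edge either lies entirely in $S$, entirely in $T$, or is incident with one vertex in $S$ and one in $T$. This gives the identity $E_V = E_S + E_T + E_{S,T}$. Next I would expand $|V|^2 = (|S|+|T|)^2 = |S|^2 + |T|^2 + 2|S||T|$, which splits the denominator of $d_V$ into exactly the three denominators appearing in $d_S$, $d_T$, and $d_{S,T}$.

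Substituting $E_S = d_S |S|^2$, $E_T = d_T |T|^2$, and $E_{S,T} = d_{S,T}\cdot 2|S||T|$ into the numerator of $d_V$, I get
$$d_V = \frac{d_S|S|^2 + d_T|T|^2 + d_{S,T}\cdot 2|S||T|}{|S|^2 + |T|^2 + 2|S||T|},$$
which exhibits $d_V$ as a convex combination of $d_S$, $d_T$, $d_{S,T}$ with nonnegative weights $|S|^2, |T|^2, 2|S||T|$ summing to $|V|^2$. The conclusion $d_V \le \max\{d_S, d_T, d_{S,T}\}$ then follows immediately from the standard fact that any convex combination is bounded above by the largest of its terms.

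There is no real obstacle here — the only things worth noting are the degenerate cases where one of $S$ or $T$ is empty (so that $d_{S,T}$ is not defined), in which case the statement reduces trivially to $d_V = d_S$ or $d_V = d_T$; these cases should be handled at the beginning so that all denominators are positive in the main computation.
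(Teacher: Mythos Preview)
Your proof is correct and rests on exactly the same identity the paper uses, namely $d_V = \frac{|S|^2 d_S + |T|^2 d_T + 2|S||T| d_{S,T}}{|V|^2}$. The paper then finishes by a short case analysis (assuming $d_S \ge d_T$ and splitting on whether $d_S \ge d_{S,T}$ or $d_{S,T} \ge \max\{d_S,d_T\}$), whereas you conclude directly by invoking the fact that a convex combination is bounded by its largest term; these are the same argument, and your phrasing is a bit cleaner.
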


\begin{IEEEproof}
Note that $d_V=\frac{|S|^2d_S+|T|^2d_T+2|S||T|d_{S,T}}{|V|^2}.$ W.l.o.g assume that $d_S\geq d_T$ therefore if $d_S\geq D_{S,T}$,
\begin{align*}
d_V&=\frac{|S|^2d_S+|T|^2d_T+2|S||T|d_{S,T}}{|V|^2}\\
&\leq \frac{|S|^2d_S+|T|^2d_S-|T|^2d_S+|T|^2d_T+2|S||T|d_{S}}{|V|^2}\\
&=\frac{d_S(|S|+|T|)^2-|T|^2(d_S-d_T)}{|V|^2}\\
&\leq d_S.
\end{align*}
If $d_{S,T}\geq\max{\{d_S,d_T\}}$ then,
\begin{align*}
d_V&=\frac{|S|^2d_S+|T|^2d_T+2|S||T|d_{S,T}}{|V|^2}\\
&\leq \frac{|S|^2d_{S,T}+|T|^2d_{S,T}+2|S||T|d_{S,T}}{|V|^2}\\
&=d_{S,T}
\end{align*}
and the result follows.
\end{IEEEproof}

Now we are ready to prove the optimality of the duplication of the code using standard basis, if we assume that the number of copies $s$ tends to infinity.

\begin{thm} \label{thm:opt rate}
For any induced subgraph $H$ of $D_m, d_H\leq \frac{m-1}{m}$.
So the optimal asymptotic ratio among all codes constructed using duplication and Construction \ref{cnstr1} is
$\frac{1}{2}(1+\frac{1}{m})$ and is achieved using the standard basis.
\end{thm}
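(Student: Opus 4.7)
The plan is to prove $d_H \leq \frac{m-1}{m}$ for every induced subgraph $H$ of $D_m$ by applying Lemma \ref{density lemma} to the weight-parity partition of $V(H)$, and then invoke Theorem \ref{th:34} to read off the optimal asymptotic ratio.

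First, I would partition $V(H)=V_e\cup V_o$ into vertices of even and odd Hamming weight. The key observation is that for $v,u$ of the \emph{same} weight parity, the two quantities $|v\backslash u|$ and $|u\backslash v|$ have the same parity (they differ by $\|v\|_1-\|u\|_1$), so between any such pair the induced subgraph of $H$ contains either $0$ or $2$ directed edges. For $v\in V_e$ and $u\in V_o$ the two parities are different, so exactly one of the two directed edges exists; this gives $E_{V_e,V_o}=|V_e|\cdot|V_o|$ and hence $d_{V_e,V_o}=\tfrac{1}{2}$.

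The main work is to show $d_{V_e},d_{V_o}\leq\frac{m-1}{m}$. Because edges on each part come in bidirectional pairs, each part is essentially an undirected graph, and I would bound its density via Tur\'an's theorem once its largest bidirectional clique is shown to have size at most $m$. In $V_o$, a bidirectional clique $\{v_1,\dots,v_n\}$ must satisfy $v_i\cdot v_j\equiv 0\pmod 2$ for $i\neq j$ and $v_i\cdot v_i\equiv 1$, so the Gram matrix over $\mathbb{F}_2$ is the identity; hence the $v_i$ are linearly independent in $\mathbb{F}_2^m$, forcing $n\leq m$. In $V_e$ the analogous Gram matrix has $0$'s on the diagonal and $1$'s elsewhere over $\mathbb{F}_2$, with rank $n$ when $n$ is even and rank $n-1$ when $n$ is odd; combined with the fact that $V_e$ lies in the $(m-1)$-dimensional even-weight subspace of $\mathbb{F}_2^m$, this still yields $n\leq m$. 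Tur\'an's theorem applied with no $K_{m+1}$ then gives $d_{V_e},d_{V_o}\leq\frac{m-1}{m}$, and Lemma \ref{density lemma} concludes $d_H\leq\max\{\frac{m-1}{m},\frac{1}{2}\}=\frac{m-1}{m}$ for $m\geq 2$ (the $m=1$ case is trivial).

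Finally, by Theorem \ref{th:34} the asymptotic ratio of $\cC_s(H)$ equals $1-d_H/2\geq\frac{1}{2}(1+\frac{1}{m})$, and the bound is attained by the standard basis $\{e_i\}_{i=1}^m$: since $|e_i\backslash e_j|=1$ is odd for all $i\neq j$, the induced subgraph is the complete bidirectional graph on $m$ vertices, with density exactly $\frac{m-1}{m}$. The main technical subtlety I expect is the Gram-matrix analysis in $V_e$: because the all-ones-minus-identity matrix drops to rank $n-1$ when $n$ is odd, the rank bound from $V_e$ alone is not enough, and one must invoke the ambient dimension bound $m-1$ from the even-weight subspace to still force $n\leq m$.
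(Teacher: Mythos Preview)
Your proposal is correct and takes a genuinely different route from the paper's proof. Both arguments begin with the same parity observation (same-parity pairs carry $0$ or $2$ edges, cross-parity pairs exactly $1$, so $d_{V_e,V_o}=\tfrac12$), but they diverge after that. The paper argues by contradiction: it takes a \emph{smallest} subgraph $H$ of maximal density, uses Lemma~\ref{density lemma} to force $H$ to be entirely odd or entirely even, then partitions $V(H)$ into a basis $S$ and the remaining vectors $T$ and bounds $d_{S,T}$ by a direct counting argument (each $u\in T$ must be non-orthogonal to some basis vector, killing at least one bidirectional pair); the even case is reduced to the odd case by prepending a $1$ and passing to $D_{m+1}$. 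Your approach instead bounds the clique number of the undirected graph on each parity class via the rank of the Gram matrix over $\mathbb{F}_2$ (identity for $V_o$, $J-I$ for $V_e$), and then invokes Tur\'an's theorem to get the density bound $\tfrac{m-1}{m}$ directly. Your argument is shorter and more conceptual, and it treats the even case on its own footing rather than by reduction; the paper's argument is more self-contained in that it does not appeal to Tur\'an. The Gram-matrix subtlety you flag for $V_e$ (rank drops to $n-1$ when $n$ is odd) is handled exactly as you say, since the even-weight vectors live in an $(m-1)$-dimensional subspace, so $n-1\le m-1$ still gives $n\le m$.
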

\begin{IEEEproof}
We say that a binary vector is an even (odd) vector if it has an even (odd) weight.
For two binary vectors $w_1,w_2$, $|w_2 \backslash w_1|$ being odd is equivalent to
$$1=w_2 \cdot \overline{w_1}= w_2\cdot((1,...,1)+w_1)=\|w_2\|_1+w_2\cdot w_1.$$
Hence, one can check that when $w_1,w_2$ have the same parity, there are either no edges or $2$ edges between them. Moreover, when their parities are different, there is exactly one edge between the two vertices.

When $m=1,$ the graph $D_1$ has only one vertex and the only nonempty induced subgraph is itself. $d_H=d_{D_1}=0=\frac{m-1}{m}$. When $m=2$, the graph $D_2$ has three vertices and one can check that the induced subgraph with maximum density contains $w_1=(1,0), w_2=(0,1)$, and the density is $1/2=(m-1)/m$.

For $m>2$, assume to the contrary that there exists a subgraph of $D_m$ with density higher than $\frac{m-1}{m}$.
Let $H$ be the smallest subgraph of $D_m$ (with respect to the number of vertices) among the subgraphs of $D_m$ with maximal density. Hence for any subset of vertices $S\subsetneq V(H)$, we have $d_S<d_H$. Therefore from Lemma \ref{density lemma} we conclude that for any partition $S,T$ of $V(H)$, $d_H\leq d_{S,T}.$
If $H$ contains both even and odd vectors, denote by $S$ and $T$ the set of even and odd vectors of $H$ respectively.
Since between any even and any odd vertex there is exactly one directed edge we get that
$d_H\leq d_{S,T}=\frac{1}{2}$. However $$\frac{1}{2}<\frac{m-1}{m}<d_H,$$ and we get a contradiction. Thus $H$ contains only odd vectors or even vectors.

Let $V(H)=\{v_1,...,v_k\}$. If this set of vectors is independent then $k\leq m$ and the outgoing degree for each vertex $v_i$ is at most $k-1$ hence $d_H=\frac{E(H)}{|V(H)|^2}\leq \frac{k(k-1)}{k^2}\leq \frac{m-1}{m}$ and we get a contradiction. Hence assume that the dimension of the subspace spanned by these vectors in $\mathbb{F}_2^m$ is $l<k$ where $v_1,v_2,...v_l$ are basis for it. Define $S=\{v_1,...v_l\},T=\{v_{l+1},...,v_k\}$. The following two cases show that the density can not be higher than $\frac{m-1}{m}$.

{\bf $H$ contains only odd vectors:}  Let $u\in T$. Since $u\in \spun\{S\}$ there is at least one $v\in S$ such that
$u\cdot v\neq 0$ and thus $(u,v),(v,u)\notin E(H)$, therefore the number of directed edges between $u$ and $S$ is at most $2(l-1)$ for all $u \in T$, which means $$d_H\leq d_{S,T}\leq \frac{2(l-1)|T|}{2|S||T|}=\frac{l-1}{l}\leq \frac{m-1}{m}$$ and we get a contradiction.

{\bf $H$ contains only even vectors:} Since the $v_i$'s are even the dimension of $\spun\{S\}$ is at most $m-1$ (since for example $(1,0,...,0)\notin \spun\{S\}$) thus $l\leq m-1.$ Let $H^*$ be the induced subgraph of $D_{m+1}$ with vertices $V(H^*)=\{(1,v_i)|v_i\in V(H))\}$. It is easy to see that all the vectors of $H^*$ are odd, $((1,v_i),(1,v_j))\in E(H^*)\text{ if and only if } (v_i,v_j)\in E(H)$, and the dimension of $\spun\{V(H^*)\}$ is at most $l+1 \le m.$ Having already proven the case for odd vectors, we conclude that
\begin{align*}
d_H=d_{H^*}&\le \frac{\dim(\spun\{V(H^*)\})-1}{\dim(\spun\{V(H^*)\})}\\
&\leq\frac{l+1-1}{l+1}\\
&\leq\frac{m-1}{m},
\end{align*}
and we get a contradiction.	
\end{IEEEproof}

%
%
\section{Finite Field Size of a Code}\label{section 5}
%
%
In this section, we address the problem of finding proper coefficients in the parities in order to make the code MDS.
We have already shown that if a code is over some large enough finite field $\mathbb{F}$, it can be made MDS (Theorem \ref{zigzag-sets}).
And we have shown that the optimal code in Theorem \ref{orthogonal-permutations} needs only field of size $3$.
In the following, we will discuss in more details on the field size required to make two kinds of codes MDS: (1) duplication of the optimal code in Corollary \ref{thm2}, and (2) a modification of the code in Example \ref{xmpl1}. Note that both the codes have asymptotic optimal ratio.

Consider the duplication of the optimal code (the code in Corollary \ref{thm2}).
For the $s$-duplication code $\mathcal{C}'$ in Theorem \ref{thm2}, denote the coefficients for the element in row $i$ and column $j^{(t)}$ by $\alpha_{i,j}^{(t)}$ and $\beta_{i,j}^{(t)}$, $0 \le t \le s-1$. Let $\mathbb{F}_q$ be a field of size $q$, and suppose its elements are $\{0,a^0,a^1,\dots,a^{q-2}\}$ for some primitive element $a$.

\begin{cnstr} \label{cons4}
For the $s$-duplication code $\mathcal{C}'$ in Theorem \ref{thm2} over $\mathbb{F}_q$, assign $\alpha_{i,j}^{(t)}=1$ for all $i,j,t$.
For odd $q$, let $s \le q-1$ and assign for all $t \in [0,s-1]$
$$\beta_{i,j}^{(t)}= \left\{
\begin{array}{ll}
a^{t+1}, & \text{if }u_j \cdot i=1 \\
a^{t},  & \text{o.w.}
\end{array}
\right.
$$
where $u_j=\sum_{l=0}^{j}e_l$. For even $q$ (power of 2), let $s \le q-2$ and assign for all $t \in [0,s-1]$
$$\beta_{i,j}^{(t)}= \left\{
\begin{array}{ll}
a^{-t-1}, & \text{if }u_j \cdot i=1 \\
a^{t+1},  & \text{o.w.}
\end{array}
\right.
$$
\end{cnstr}

Notice that the coefficients in each duplication has the same pattern as Construction \ref{cons3} except that values 1 and 2 are replaced by $a^t$ and $a^{t+1}$ if $q$ is odd (or $a^{t+1}$ and $a^{-t-1}$ if $q$ is even).

\begin{thm} \label{thm3}
Construction \ref{cons4} is an $(s(m+1)+2,s(m+1))$ MDS code.
\end{thm}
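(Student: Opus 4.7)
The plan is to follow the scheme of Theorem \ref{thm0506}, splitting a two-column erasure of $\cC'$ into cases and verifying each reduced linear system is invertible with the coefficients of Construction \ref{cons4}. Erasures involving at least one parity column are standard (the other parity together with all systematic columns yields an ordinary single-parity system), so the real work is on two erased systematic columns $i^{(t_1)}, j^{(t_2)}$ of the duplicated code.

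\emph{Case A: same base column, different copies} ($i=j$, $t_1\neq t_2$). This case does not arise in Theorem \ref{thm0506}; it is the genuinely new ingredient in the duplicated setting. The two erased columns share the zigzag permutation $f_{e_i}$, so in every row $r$ both erased entries belong to the same row set and the same zigzag set, giving an independent $2\times 2$ system per row. Invertibility reduces to $\beta^{(t_1)}_{r,i}\neq \beta^{(t_2)}_{r,i}$. Since $u_i\cdot r$ is determined by $r$, the two coefficients lie in the same branch of Construction \ref{cons4} (both of the form $a^{t+1}$ or both of the form $a^{t}$ for odd $q$; both of the form $a^{t+1}$ or both of the form $a^{-t-1}$ for even $q$), so the bounds $s\le q-1$ (odd) respectively $s\le q-2$ (even) guarantee that $t_1\neq t_2$ produce distinct exponents modulo the multiplicative order $q-1$.

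\emph{Case B: different base columns} (WLOG $0\le i<j\le m$). Setting $r'=r+e_i+e_j$, the derivation preceding (\ref{eq1}) applies verbatim to the two duplicated columns, and invertibility of the induced $4\times 4$ block becomes
\[
\beta^{(t_1)}_{r,i}\,\beta^{(t_1)}_{r',i}\ \neq\ \beta^{(t_2)}_{r,j}\,\beta^{(t_2)}_{r',j}.
\]
I will evaluate each side using $u_j\cdot e_i=u_j\cdot e_j=1$ and $u_i\cdot e_j=0$ (for $i\ge 1$), which gives $u_j\cdot r'\equiv u_j\cdot r$ and $u_i\cdot r'\not\equiv u_i\cdot r\pmod 2$; the boundary subcase $i=0$ uses $u_0\equiv 0$, which swaps which side carries the constant parity but preserves the dichotomy. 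Substituting Construction \ref{cons4}, the left product acquires exponent $2t_1+1$ and the right acquires exponent $2t_2$ or $2t_2+2$ for odd $q$; for even $q$ the left product telescopes to $1$ while the right product equals $a^{\pm(2t_2+2)}$.

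\emph{Finishing the inequality.} For odd $q$ the order $q-1$ is even, so powers with exponents of opposite parity lie in different cosets of the subgroup of squares in $\mathbb{F}_q^\times$ and are therefore distinct. For even $q$ we have $q-1$ odd, and $a^{2(t_2+1)}=1$ would force $q-1\mid t_2+1$, contradicting $t_2+1\le s\le q-2$. The principal obstacle is the careful parity bookkeeping across the boundary case $i=0$ and across the odd/even characteristic split; once that is tracked, the inequalities drop out by the same parity-versus-squares argument used in Theorem \ref{thm0506}, so the construction is MDS.
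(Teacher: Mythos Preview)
Your proposal is correct and follows essentially the same route as the paper: split the two-systematic-erasure case into same-base-column versus different-base-column, reduce to the $2\times 2$ condition $\beta^{(t_1)}_{r,i}\neq\beta^{(t_2)}_{r,i}$ in Case~A and the $4\times4$ condition $\beta^{(t_1)}_{r,i}\beta^{(t_1)}_{r',i}\neq\beta^{(t_2)}_{r,j}\beta^{(t_2)}_{r',j}$ in Case~B, then verify both via the exponent parities (odd $q$) or the nonvanishing of $a^{\pm 2(t_2+1)}-1$ (even $q$). If anything, you are slightly more careful than the paper about the boundary subcase $i=0$ where $u_0=0$ reverses which side has matching coefficients; the paper leans on the remark after Theorem~\ref{thm0506}, which is stated only for $i\ge 1$, while you explicitly note the swap and that the parity dichotomy survives it.
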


\begin{IEEEproof}
For the two elements in columns $i^{(t_1)},i^{(t_2)}$ and row $r$, $t_1 \neq t_2$, we can see that they are in the same row set and the same zigzag set. The corresponding two equations from the two parities are linearly independent iff
\begin{equation} \label{0519}
\beta_{r,i}^{(t_1)} \neq \beta_{r,i}^{(t_2)},
\end{equation}
which is satisfied by the construction.

For the four elements in columns $i^{(t_1)},j^{(t_2)}$ and rows $r,r'=r+e_i+e_j$, $0 \le t_1, t_2 \le s-1$, $0 \le i < j \le m$, the code is MDS if
$$\beta_{r,i}^{(t_1)} \beta_{r',i}^{(t_1)} \neq \beta_{r,j}^{(t_2)} \beta_{r',j}^{(t_2)}$$
by \eqref{eq1}. By the remark after Theorem \ref{thm0506}, we know that  $\beta_{r,i}^{(t_1)}\neq \beta_{r',i}^{(t_1)}$, and $\beta_{r,j}^{(t_2)}= \beta_{r',j}^{(t_2)}=a^x$ for some $x$. When $q$ is odd,
$$\beta_{r,i}^{(t_1)} \beta_{r',i}^{(t_1)}=a^{t_1}a^{t_1+1}=a^{2t_1+1} \neq a^{2x}$$
for any $x$ and $t_1$. When $q$ is even,
$$\beta_{r,i}^{(t_1)} \beta_{r',i}^{(t_1)}=a^{t_1+1}a^{-t_1-1}=a^0 \neq a^{2x}$$ for any $t_1$ and $1 \le x \le q-2$ (mod $q-1$). And by construction, $x=t_2+1$ or $x=-t_2-1$ for $0 \le t_2 \le s-1 \le q-3$, so $1 \le x \le q-2$ (mod $q-1$).
Hence, the construction is MDS.
\end{IEEEproof}

{\bf Remark:} For two identical permutations $f_i^{(t_1)}=f_i^{(t_2)}$, \eqref{0519} is necessary and sufficient condition for an MDS code.

\begin{thm}
For an MDS $s$-duplication code, we need a finite field $\mathbb{F}_q$ of size  $q \ge s+1$. Therefore, Theorem \ref{thm3} is optimal for odd $q$.
\end{thm}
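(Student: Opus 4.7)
The plan is to consider what happens if two columns corresponding to the \emph{same} permutation (i.e., two copies in the duplication) are simultaneously erased, and extract a necessary condition on the parity coefficients from the MDS requirement.

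First I would fix an index $i \in [0,m]$ and distinct copies $t_1 \neq t_2 \in [0,s-1]$, and consider the erasure of columns $i^{(t_1)}$ and $i^{(t_2)}$. Since both columns correspond to the same zigzag permutation $f_i$, for every row $r$ the two unknowns $a_{r,i}^{(t_1)}$ and $a_{r,i}^{(t_2)}$ lie together in the row set $R_r$ and together in the zigzag set $Z_{f_i(r)}$. Hence the only equations involving these two unknowns come from $r_r$ and $z_{f_i(r)}$, giving the $2\times 2$ system with matrix
$$M_{r,i}^{t_1,t_2} = \begin{pmatrix} \alpha_{r,i}^{(t_1)} & \alpha_{r,i}^{(t_2)} \\ \beta_{r,i}^{(t_1)} & \beta_{r,i}^{(t_2)} \end{pmatrix}.$$
For the code to be MDS this matrix must be invertible for every $r$ and every $t_1 \neq t_2$.

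Next I would observe that all $\alpha_{r,i}^{(t)}$ and $\beta_{r,i}^{(t)}$ must be nonzero (otherwise a single systematic erasure together with one parity erasure could not be corrected, violating MDS). Consequently, the invertibility of $M_{r,i}^{t_1,t_2}$ is equivalent to the ratio condition
$$\gamma_{r,i}^{(t_1)} \;\neq\; \gamma_{r,i}^{(t_2)}, \qquad \text{where } \gamma_{r,i}^{(t)} := \beta_{r,i}^{(t)} / \alpha_{r,i}^{(t)} \in \mathbb{F}_q^*.$$
This is precisely the generalization of condition \eqref{0519} noted in the remark following Theorem \ref{thm3}.

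Finally, for each fixed $(r,i)$ the $s$ ratios $\gamma_{r,i}^{(0)},\gamma_{r,i}^{(1)},\ldots,\gamma_{r,i}^{(s-1)}$ must be $s$ pairwise distinct elements of the multiplicative group $\mathbb{F}_q^*$, which has order $q-1$. Therefore $q-1 \ge s$, i.e., $q \ge s+1$. Combining with Construction \ref{cons4}, which attains the bound $s = q-1$ when $q$ is odd, we conclude that the construction is optimal in the odd case. The argument is almost entirely bookkeeping once the right pair of erased columns is chosen; the only conceptual point is recognizing that duplicated columns create an unavoidable $2\times 2$ subproblem whose coefficient data lives in $\mathbb{F}_q^*$, so the obstacle (such as it is) lies in verifying that all coefficients are forced to be nonzero and that no other equations can help resolve these particular pairs of unknowns.
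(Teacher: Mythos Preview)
Your proposal is correct and follows essentially the same approach as the paper: erase two duplicated columns $i^{(t_1)},i^{(t_2)}$, observe that in each row the two unknowns share both a row set and a zigzag set so that a $2\times2$ system must be invertible, note that all coefficients are nonzero (via a systematic-plus-parity erasure), and conclude that the $s$ ratios $\beta/\alpha$ are pairwise distinct nonzero elements of $\mathbb{F}_q$, forcing $q\ge s+1$. Your write-up is slightly more explicit than the paper's in justifying why no other equations can resolve these unknowns, but the argument is the same.
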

\begin{IEEEproof}
Consider the two information elements in row $i$ and columns $j^{(t_1)},j^{(t_2)}$, which are in the same row and zigzag sets, for $t_1 \neq t_2 \in [0,s-1]$. The code is MDS only if
$$\left[ \begin{array}{cc}
\alpha_{i,j}^{(t_1)} & \alpha_{i,j}^{(t_2)} \\
\beta_{i,j}^{(t_1)} & \beta_{i,j}^{(t_2)}
\end{array} \right]$$
has full rank. All the coefficients are nonzero (consider erasing a parity column and a systematic column). Thus,
$(\alpha_{i,j}^{(t_1)})^{-1} \beta_{i,j}^{(t_1)} \neq (\alpha_{i,j}^{(t_2)})^{-1} \beta_{i,j}^{(t_2)}$, and $(\alpha_{i,j}^{(t)})^{-1} \beta_{i,j}^{(t)}$ are distinct nonzero elements in $\mathbb{F}_q$, for $t \in [0,s-1]$. So $q \ge s+1$.
\end{IEEEproof}

For instance, the coefficients in Figure \ref{fig:duplication} are assigned as Construction \ref{cons4} and $\mathbb{F}_3$ is used. One can check that any two column erasures can be rebuilt in this code.

Consider for example an $s$-duplication of the code in Theorem \ref{orthogonal-permutations} with $m=10$, the array is of size $1024 \times (11s+2)$. For $s=2$ and $s=6$, the ratio is $0.522$ and $0.537$ by Corollary \ref{thm2}, the code length is $24$ and $68$, and the field size needed can be $4$ and $8$ by Theorem \ref{thm3}, respectively. Both of these two sets of parameters are suitable for practical applications.

As noted before the optimal construction yields a ratio of $1/2+1/m$ by using duplication of the code in Theorem \ref{orthogonal-permutations}. However the field size is a linear function of the number of duplications of the code.
Is it possible to extend the number of columns in the code while using a constant field size? We know how to get $O(m^3)$ columns by using $O(m^2)$ duplications of the optimal code, however, the field size is $O(m^2)$. The following code construction has roughly the same parameters: $O(m^3)$ columns and an ratio of $\frac{1}{2}+O(\frac{1}{m})$, however it requires only a constant field size of $9$. Actually this construction is a modification of Example \ref{xmpl1}.

\begin{cnstr} \label{k/3}
Let $3|m$, and consider the following set of vectors $S\subseteq \mathbb{F}_2^m$: for each vector $v=(v_1,\dots,v_m) \in S$, $\|v\|_1=3$ and $v_{i_1},v_{i_2},v_{i_3}=1$ for some $i_1 \in [1,m/3],i_2 \in [m/3+1,2m/3], i_3 \in [2m/3+1,m]$. For simplicity, we write $v=\{i_1,i_2,i_3\}$. Construct the $(k+2,k)$ code as in Construction \ref{cnstr1} using the set of vectors $S$, hence the number of systematic columns is $k=|S|=(\frac{m}{3})^3=\frac{m^3}{27}$.
For any $i\in [jm/3+1,(j+1)m/3]$ and some $j=0,1,2$ , define a row vector $M_i = \sum_{l=jm/3+1}^{i}e_l$.
Then define a $m \times 3$ matrix
$$M_v=\left[
\begin{array}{lll}
M_{i_1}^T & M_{i_2}^T & M_{i_3}^T
\end{array}
\right]$$
 for $v=\{i_1,i_2,i_3\}$.
Let  $a$ be a primitive element of  $ \mathbb{F}_9$. Assign the row coefficients as $1$ and the zigzag coefficient for row $r$, column $v$ as $a^t$, where $t=rM_v \in \mathbb{F}_2^3$ (in its binary expansion).
\end{cnstr}

For example, let $m=6$, and $v=\{1,4,6\}=(1,0,0,1,0,1) \in S$. The corresponding matrix is
$$M_v = \left[
\begin{array}{llllll}
1 & 0 & 0 & 0 & 0 & 0 \\
0 & 0 & 1 & 1 & 0 & 0 \\
0 & 0 & 0 & 0 & 1 & 1
\end{array}
\right]^T.
$$
For row $r=26=(0,1,1,0,1,0)$, we have
$$ t = rM_v  = (0,1,1)=3,$$
and the zigzag coefficient is $a^3$.

\begin{thm} \label{thm:k/3}
Construction \ref{k/3} is a $(k+2,k)$ MDS code with array size $2^m \times (k+2)$ and $k=m^3/27$. Moreover, the rebuilding ratio is $\frac{1}{2}+\frac{9}{2m}$ for large $m$.
\end{thm}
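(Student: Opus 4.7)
The plan splits the statement into three parts matching its three claims. First, the array dimensions follow immediately from Construction~\ref{k/3}: since $S$ contains exactly one vector for each choice of support position in each of the three blocks of size $m/3$, one has $|S| = (m/3)^3 = m^3/27$, and the underlying array inherited from Construction~\ref{cnstr1} has $2^m$ rows and $|S|+2 = k+2$ columns.

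Next, the rebuilding ratio is obtained by Theorem~\ref{th:123} combined with Lemma~\ref{lemma 3}(ii). The computation reduces to counting, for each fixed $v \in S$, how many $u \in S \setminus \{v\}$ satisfy $|v \setminus u| \equiv 0 \pmod 2$. Because every vector of $S$ has exactly one $1$ per block, $|v \setminus u|$ equals the number of blocks on which $v$ and $u$ have their support in different positions, hence lies in $\{0,1,2,3\}$. The only even nonzero value is $2$, contributing $\binom{3}{2}(m/3-1)^2 = 3(m/3-1)^2$ vectors $u$, and each such $u$ yields an intersection of size $|X_v| = 2^{m-1}$ in the double sum of Theorem~\ref{th:123}. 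Substituting and simplifying gives
\[
R \;=\; \tfrac12 + \frac{k\cdot 3(m/3-1)^2\cdot 2^{m-1}}{2^m\,k(k+1)} \;=\; \tfrac12 + \frac{3(m/3-1)^2}{2(k+1)},
\]
which is $\tfrac12 + \tfrac{9}{2m}$ up to lower-order terms as $m \to \infty$, using $k = (m/3)^3$.

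The remaining, and most delicate, task is verifying the MDS property. By the remark following Theorem~\ref{thm0506}, since the permutations $\{f_v : v \in S\}$ are pairwise distinct (already $f_v(0)=v$ separates them), condition~\eqref{eq1} is necessary and sufficient: for every $v \neq v'$ in $S$, every row $r$, and $r' := r+v+v'$, we need $\beta_{r,v}\beta_{r',v} \neq \beta_{r,v'}\beta_{r',v'}$ in $\mathbb{F}_9$. Writing $\beta_{r,v} = a^{rM_v}$ with $a$ of multiplicative order $8$, this becomes a comparison of exponents modulo $8$. The key structural input is that $w_v := (v+v')M_v$ and $w_{v'} := (v+v')M_{v'}$ are ``comparison vectors'': using the prefix-sum form of the columns $M_{i_c}$, the $c$-th coordinate of $w_v$ equals $1$ iff $v$ and $v'$ disagree in block $c$ with the support position of $v'$ strictly larger than that of $v$, and symmetrically for $w_{v'}$. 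Thus on every block of disagreement exactly one of $w_v^{(c)}, w_{v'}^{(c)}$ is $1$; on every block of agreement both are $0$. Setting $t_1 := rM_v$ and $t_2 := r'M_v = t_1 \oplus w_v$ as $\mathbb{F}_2^3$-vectors, a short bitwise expansion yields the integer identity
\[
t_1 + t_2 \;\equiv\; 2(t_1 \wedge \overline{w_v}) + w_v \pmod 8,
\]
and analogously on the primed side.

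The main obstacle is the MDS inequality itself, because it interleaves $\mathbb{F}_2$-arithmetic of the exponent vectors with the order-$8$ cyclic arithmetic of $\mathbb{F}_9^{*}$. I would case-split on which of the three blocks $v,v'$ disagree in. If they disagree in block $3$ (corresponding to the least significant bit), then $w_v$ and $w_{v'}$ have different parities, so the two sides already differ modulo $2$. If they agree in block $3$ but disagree in block $2$, a look modulo $4$ separates the two sides, using that the block structure forces the third coordinate $A^{(3)} := (rM_v)^{(3)}$ and $(rM_{v'})^{(3)}$ to coincide so those contributions cancel. If they agree in blocks $2,3$ and disagree only in block $1$, the two sides differ by exactly $4\pmod 8$, because the bit of weight $4$ in $2A'$ vanishes modulo $8$. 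This case analysis is where the size-$9$ field is tight: the eight-element multiplicative group is just large enough to keep the constantly many offsets arising from the three-block structure separated.
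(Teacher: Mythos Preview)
Your proof is correct and follows essentially the same approach as the paper. Both arguments identify the largest block index $l$ on which $v$ and $v'$ disagree and show that the $l$-th bit of the two exponent sums (taken $\bmod\ 8$) must differ; your case split on $l\in\{3,2,1\}$ with reductions modulo $2$, $4$, $8$, together with your explicit identity $t_1+t_2 = 2(t_1\wedge\overline{w_v})+w_v$, is just a more computational presentation of the paper's carry-propagation observation that the lower-order bits agree while bit~$l$ flips.
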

\begin{IEEEproof}
For each vector $v \in S$, there are $3(m/3-1)^2$ vectors $u\in S$ such that they have one $1$ in the same location as $v$, i.e. $|v\backslash u|=2$. Hence by Theorem \ref{th:123} and Lemma \ref{lemma 3}, for large $m$ the ratio is
$$\frac{1}{2}+\frac{3((\frac{m}{3})-1)^2}{2(\frac{m^3}{27}+1)} \approx \frac{1}{2} + \frac{9}{2m}.$$

Next we show that the MDS property of the code holds. Consider columns $u,v$ for some $u=\{i_1,i_2,i_3\} \neq v=\{j_1,j_2,j_3\}$ and $i_1, j_1 \in [1,m/3],i_2, j_2 \in [m/3+1,2m/3], i_3,j_3 \in [2m/3+1,m]$. Consider rows $r$ and  $r'=r+u+v$. The condition for the MDS property from \eqref{eq1} becomes
\begin{equation}
a^{rM_{u}^T +  r'M_{u}^T\mod 8}\neq a^{rM_{v}^T +  r'M_{v}^T\mod 8}
\label{eq:eq123}
\end{equation}
where each vector of length $3$ is viewed as an integer in $[0,7]$ and the addition is usual addition mod 8.
Since $v\neq u$, let $l\in[1,3]$ be the largest index such that $i_l\neq j_l$. W.l.o.g. assume that $i_l<j_l$, hence by the remark after Theorem \ref{thm0506}
\begin{equation} \label{eq17}
rM_{i_l}^T \neq  r'M_{i_l}^T
\end{equation}
 and
\begin{equation} \label{eq18}
rM_{j_l}^T = r'M_{j_l}^T.
\end{equation}
Note that for all $t$, $l<t\le 3$, $i_t=j_t$, then since $r'M_{i_t}^T=(r+e_{i_t}+e_{j_t})M_{i_t}^T=rM_{i_t}^T$, we have
\begin{equation}\label{eq05012}
rM_{i_t}^T=r'M_{i_t}^T=rM_{j_t}^T=r'M_{j_t}^T.
\end{equation}
It is easy to infer from \eqref{eq17},\eqref{eq18},\eqref{eq05012} that the $l$-th bit in the binary expansions of $rM_{u}^T +  r'M_{u}^T\mod 8$ and $rM_{v}^T +  r'M_{v}^T\mod 8$ don't equal. Hence \eqref{eq:eq123} is satisfied, and the result follows.
\end{IEEEproof}

Notice that if we do mod $15$ in \eqref{eq:eq123} instead of mod $8$, the proof still follows because $15$ is greater than the largest possible sum in the equation. Therefore, a field of size $16$ is also sufficient to construct an MDS code, and it is easier to implement in a storage system.

Construction \ref{k/3} can be easily generalized to any constant $c$ such that it contains $O(m^c)$ columns and it uses the field of size at least $2^c+1$. For simplicity assume that $c|m$, and simply construct the code using the set of vectors $\{v\}\subset \mathbb{F}_2^m$ such that $\|v\|_1=c$, and for any $j \in [0,c-1]$, there is unique $i_j \in [jm/c+1,(j+1)m/c]$ and $v_{i_j}=1$. Moreover, the finite field of size $2^{c+1}$ is also sufficient to make it an MDS code.
When $c$ is odd the code has ratio of $$\frac{1}{2}+\frac{c^2}{2m}$$ for large $m$.

%
%
\section{Decoding of the Codes} \label{sec:dec}
%
%
In this section, we will discuss decoding algorithms of the proposed codes in case of column erasures as well as a column error. The algorithms work for both Construction \ref{cnstr1} and its duplication code.

Let $\cC$ be a $(k+2,k)$ MDS array code defined by Construction \ref{cnstr1} (and possibly duplication). The code has array size $2^m \times (k+2)$. Let the zigzag permutations be $f_j$, $j \in [0,k-1]$, which are not necessarily distinct.
Let the information elements be $a_{i,j}$, and the row and zigzag parity elements be $r_{i}$ and $z_{i}$, respectively, for $i\in [0,2^m-1],j \in [0,k-1]$.
W.l.o.g. assume the row coefficients are $\alpha_{i,j}=1$ for all $i,j$. And let the zigzag coefficients be $\beta_{i,j}$ in some finite field $\mathbb{F}$.

The following is a summary of the erasure decoding algorithms mentioned in the previous sections.
\begin{alg}
\label{alg0506}(Erasure Decoding)\\
{\bf One erasure.} \\
1) One parity node is erased. Rebuild the row parity by
\begin{equation} \label{eq10}
r_{i}=\sum_{j=0}^{k-1}a_{i,j},
\end{equation}
and the zigzag parity by
\begin{equation} \label{eq11}
z_{i}=\sum_{j=0}^{k-1}\beta_{f_j^{-1}(i),j} a_{f_j^{-1}(i),j}.
\end{equation}
2) One information node $j$ is erased. Rebuild the elements in rows $X_j$ (see Construction \ref{cnstr1}) by rows, and those in rows $\overline{X_j}$ by zigzags. \\
{\bf Two erasures.} \\
1) Two parity nodes are erased. Rebuild by \eqref{eq10} and \eqref{eq11}.\\
2) One parity node and one information node is erased. If the row parity node is erased, rebuild by zigzags; otherwise rebuild by rows.\\
3) Two information nodes $j_1$ and $j_2$ are erased. \\
- If $f_{j_1}=f_{j_2}$, for any $i \in [0,2^m-1]$, compute
\begin{equation} \label{eq12}
\begin{array}{lll}
x_i&=&r_{i}-\sum_{j \neq j_1,j_2} a_{i,j}  \\
y_i&=&z_{f_{j_1}(i)} - \sum_{j \neq j_1,j_2}\beta_{f_j^{-1}f_{j_1}(i),j} a_{f_j^{-1}f_{j_1}(i),j}.
\end{array}
\end{equation}
Solve $a_{i,j_1}, a_{i,j_2}$ from the equations
$$\left[
\begin{array}{ll}
	1 & 1 \\
	\beta_{i,j_1} & \beta_{i,j_2}
\end{array}
\right]
\left[
\begin{array}{l}
a_{i,j_1} \\
a_{i,j_2}
\end{array}
\right]
= \left[
\begin{array}{l}
x_i \\
y_i
\end{array}
\right].
$$
- Else, for any $i \in [0,2^m-1]$, set $i'=i+f_{j_1}(0)+f_{j_2}(0)$, and compute $x_i,x_{i'},y_i,y_{i'}$ according to \eqref{eq12}. Then solve $a_{i,j_1},a_{i,j_2},a_{i',j_1},a_{i',j_2}$ from equations
$$
\left[\begin{array}{cccc}
1 & 1 & 0 & 0 \\
0 & 0 & 1 & 1 \\
\beta_{i,j_1} & 0 & 0 & \beta_{i',j_2} \\
0 & \beta_{i,j_2} & \beta_{i',j_1} & 0
\end{array} \right]
\left[ \begin{array}{c}
a_{i,j_1} \\
a_{i,j_2} \\
a_{i',j_1} \\
a_{i',j_2} \\
\end{array} \right]
= \left[ \begin{array}{c}
x_i \\
x_{i'} \\
y_i \\
y_{i'}
\end{array} \right].
$$
\end{alg}

In case of a column error, we first compute the syndrome, then locate the error position, and at last correct the error.
Let $x_0,x_1,\dots,x_{p-1} \in \mathbb{F}$. Denote $f^{-1}(x_0,x_1,\dots,x_{p-1})=(x_{f^{-1}(0)},x_{f^{-1}(1)},\dots,x_{f^{-1}(p-1)})$ for a permutation $f$ on $[0,p-1]$.
The detailed algorithm is as follows.

\begin{alg} \label{alg2} (Error Decoding) \\
 Compute for all $i \in [0,2^m-1]$:
\begin{eqnarray*}
s_{i,0}&=&\sum_{j=0}^{k-1}a_{i,j}-r_{i} \\
s_{i,1}&=&\sum_{j=0}^{k-1}\beta_{f_j^{-1}(i),j} a_{f_j^{-1}(i),j} -z_i.
\end{eqnarray*}
Let the syndrome be $S_{0}=(s_{0,0},s_{1,0},\dots,s_{2^m-1,0})$ and $S_{1}=(s_{0,1},s_{1,1},\dots,s_{2^m-1,1})$. \\
- If $S_0=0$ and $S_1=0$, there is no error. \\
- Else if one of $S_0, S_1$ is $0$, there is an error in the parity. Correct it by \eqref{eq10} or \eqref{eq11}.\\
- Else, find the error location. For $j = 0$ to $k-1$: \\
\tab Compute for all $i \in [0,2^m-1]$, $x_{i,j} = \beta_{i,j}s_{i,0}.$ \\
\tab Let $X_j=(x_{0,j},\dots,x_{2^m-1,j})$ and $Y_j=f_j^{-1}(X_j)$. \\
\tab If $Y_j=S_1$, subtract $S_0$ from column $j$. Stop.\\
If no such $j$ is found, there are more than one error.
\end{alg}

If there is only one error, the above algorithm is guaranteed to find the error location and correct it, since the code is MDS, as the following theorem states.

\begin{thm}
Algorithm \ref{alg2} can correct one column error.
\end{thm}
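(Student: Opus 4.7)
The plan is to split on whether the single corrupted column is a parity column or a systematic column, because the algorithm branches on that very distinction (via the zero/nonzero pattern of $S_0$ and $S_1$). In each case I will verify two things separately: that the branch the algorithm takes is consistent with the error location, and that the correction performed in that branch is correct. For the systematic case, the nontrivial part will be showing that the loop finds \emph{exactly one} column $j$ passing the test $Y_j = S_1$.

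First I would unfold the syndromes. If the corrupted column is the row (resp.\ zigzag) parity with nonzero error vector $E\in\mathbb{F}^{2^m}$, then directly from the definitions in Algorithm~\ref{alg2}, $S_0 = E$ and $S_1 = 0$ (resp.\ $S_0 = 0$ and $S_1 = E$), so the ``one of $S_0,S_1$ is $0$'' branch fires and the correction via \eqref{eq10} or \eqref{eq11} is immediate. If instead a systematic column $j$ is in error with nonzero vector $E$, then $s_{i,0}=e_i$, so $S_0 = E \neq 0$, and $s_{l,1} = \beta_{f_j^{-1}(l),j}\,e_{f_j^{-1}(l)}$; since every $\beta_{i,j}\neq 0$ and $E\neq 0$, also $S_1\neq 0$, so the algorithm correctly enters the main loop. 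Completeness of the loop is a direct calculation: for the true index $j$, $x_{i,j}=\beta_{i,j}e_i$ and therefore
\[
(Y_j)_l \;=\; x_{f_j^{-1}(l),j} \;=\; \beta_{f_j^{-1}(l),j}\,e_{f_j^{-1}(l)} \;=\; s_{l,1},
\]
so $Y_j = S_1$ and subtracting $S_0 = E$ from column $j$ removes the error.

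The step I expect to be the main obstacle is \emph{uniqueness}: I need to rule out a false positive $Y_{j'}=S_1$ with $j'\neq j$. The plan is to argue by contradiction using the MDS property rather than by any structural analysis of the permutations. Suppose $Y_{j'}=S_1$ for some $j'\neq j$. Then, repeating the completeness calculation in reverse, the received syndromes $(S_0,S_1)$ would also be produced by placing the error vector $E$ in column $j'$ alone (same row syndrome $E$, same zigzag syndrome because $(Y_{j'})_l=\beta_{f_{j'}^{-1}(l),j'}e_{f_{j'}^{-1}(l)}=s_{l,1}$). Subtracting the two explanations yields a nonzero error pattern supported only on columns $j$ and $j'$ whose row and zigzag syndromes both vanish — equivalently, a nonzero codeword of Hamming weight at most $2$. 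But the code has $r=2$ parities and is MDS, so its minimum distance is $3$, giving a contradiction. Hence the test $Y_j = S_1$ is passed by exactly one $j$, the algorithm stops at that column, and the correction is correct. Combining the parity and systematic cases completes the argument.
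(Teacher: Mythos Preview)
Your proof is correct and takes a genuinely different route from the paper for the uniqueness step. The paper argues by an explicit case split: assuming some $t\neq j$ also satisfies $Y_t=S_1$, it distinguishes $f_t=f_j$ (the duplication case) from $f_t\neq f_j$ and, using that the zigzag permutations are commuting involutions, derives in each case a direct violation of the concrete MDS conditions \eqref{eq1} or \eqref{0519}. You instead observe that a false positive $j'$ would mean the same syndrome pair $(S_0,S_1)$ is produced by the single-column error $E$ placed in either of two distinct systematic columns, so their difference is a nonzero codeword of column-weight $2$, contradicting minimum distance $3$. Your argument is shorter, avoids any structural facts about the permutations, and would apply verbatim to any $(k+2,k)$ MDS array code decoded by this syndrome test; the paper's argument, by contrast, ties the decoder's correctness back to the very inequalities used to establish MDS, which is more concrete but less portable. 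One small slip: when a parity column carries error $E$, the syndrome is $-E$ rather than $E$ (since $s_{i,0}=\sum_j a_{i,j}-r_i$), but this is immaterial since only nonzeroness is used there.
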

\begin{IEEEproof}
Notice that each zigzag permutation $f_j$ is the inverse of itself by Construction \ref{cnstr1}, or $f_j=f_j^{-1}$.
Suppose there is error in column $j$, and the error is $E=(e_{0},e_{1},\dots,e_{2^m-1})$. So the received column $j$ is the sum of the original information and $E$. Thus the syndromes are
$s_{i,0}=e_{i}$ and
$$s_{i,1}=\beta_{f_j(i),j}e_{f_j(i)}.$$
For column $t$, $t \in [0,k-1]$, we have $x_{i,t}=\beta_{i,t}s_{i,0}=\beta_{i,t}e_{i}$. Write $Y_t=f_j^{-1}(X_j)=(y_{0,t},\dots,y_{2^{m}-1,t})$ and then
$$y_{i,t}=x_{f_t(i),t}=\beta_{f_t(i),t}e_{f_t(i)}.$$
We will show the algorithm finds $Y_t=S_1$ iff $t=j$, and therefore subtracting $S_0=E$ from column $j$ will correct the error. When $t=j$,
$y_{i,t}=s_{i,1},$
for all $i \in [0,2^m-1]$, so $Y_j=S_1$. Now suppose there is $t \neq j$ such that $Y_t=S_1$. Since the error $E$ is nonzero, there exists $i$ such that $e_{f_j(i)}\neq 0$. Consider the indices $i$ and $i'=f_t f_j(i)$. $y_{i,t}=s_{i,1}$ yields
\begin{equation}\label{05191}
\beta_{f_t(i),t}e_{f_t(i)}=\beta_{f_j(i),j}e_{f_j(i)}.
\end{equation}
{\bf Case 1}: When $f_t=f_j$, set $r=f_t(i)=f_j(i)$, then \eqref{05191} becomes $\beta_{r,t}e_r=\beta_{r,j}e_r$ with $e_r \neq 0$. Hence $\beta_{r,t}=\beta_{r,j}$ which contradicts \eqref{0519}. \\
{\bf Case 2}: When $f_t \neq f_j$, since $f_t,f_j$ are commutative and are inverse of themselves, $f_t(i')=f_t f_t f_j(i)=f_j(i)$ and $f_j(i')=f_j f_t f_j(i)=f_t(i)$. Therefore $y_{i',t}=s_{i',1}$ yields
\begin{equation} \label{eq1011}
\beta_{f_j(i),t}e_{f_j(i)}=\beta_{f_t(i),j}e_{f_t(i)}.
\end{equation}
The two equations \eqref{05191} \eqref{eq1011} have nonzero solution $(e_{f_j(i)},e_{f_t(i)})$ iff
$$\beta_{f_t(i),t}\beta_{f_j(i),t} = \beta_{f_j(i),j}\beta_{f_t(i),j},$$
which contradicts \eqref{eq1} with $r=f_t(i),r'=f_j(i)$. Hence the algorithm finds the unique erroneous column.
\end{IEEEproof}

If the computations are done in parallel for all $i \in [0,2^m-1]$, then  Algorithm \ref{alg2} can be done in time $O(k)$. Moreover, since the permutations $f_i$'s only change one bit of a number in $[0,2^m-1]$ in the optimal code in Theorem \ref{orthogonal-permutations}, the algorithm can be easily implemented.


\section{Generalization of the Code Construction}
\label{generalization}

In this section we generalize Construction \ref{cnstr1} to arbitrary number of parity nodes. Let $n-k=r$ be the number of parity nodes. We will construct an $(n,k)$ MDS array code, i.e., it can recover from up to $r$ node erasures for arbitrary integers $n,k$.
We will show this code has optimal rebuilding ratio of $1/r$ when a systematic node is erased.
We assume that each systematic nodes stores $\frac{\cM}{k}$ of the information and corresponds to columns $[0,k-1]$. The $i$-th parity node is stored in column $k+i$, $0\leq i\leq r-1$, and is associated with zigzag sets $\{Z_{j}^i: j \in [0,p-1]\}$, where $p$ is the number of rows in the array.

\begin{cnstr}
\label{cnstr5}
Let the information array be $A=(a_{i,j})$ with size $r^m \times k$ for some integers $k,m$.
Let $T=\{v_0,...,v_{k-1}\}\subseteq \mathbb{Z}_r^m$ be a subset of vectors of size $k$, where for each $v=(v_1,...,v_m)\in T,$
\begin{equation}
\gcd(v_1,...,v_m,r)=1,
\label{eq:45678}
\end{equation}
where $\gcd$ is the greatest common divisor. For any $l$, $0\leq l\leq r-1$, and $v\in T$ we define the permutation $f^l_v:[0,r^m-1]\rightarrow [0,r^m-1]$ by $f_v^l(x)=x+lv$, where by abuse of notation we use $x\in[0,r^m-1]$ both to represent the integer and its $r$-ary representation, and all the calculations are done over $\mathbb{Z}_r$.For example, for $m=2,r=3, x=4, l=2, v=(0,1)$,
$$f_{(0,1)}^2(4)=4+2(0,1)=(1,1)+(0,2)=(1,0)=3.$$
One can check that the permutation $f_{(0,1)}^2$ in a vector notation is $[2,0,1,5,3,4,8,6,7]$.
For simplicity denote the permutation $f^l_{v_j}$ as $f^l_j$ for $v_j\in T$.
For $t \in [0,r^m-1]$, we define the zigzag set $Z_t^l$ in parity node $l$ as the elements $a_{i,j}$ such that their coordinates  satisfy
$f^l_j(i)=t.$
In a rebuilding of systematic node $i$ the elements in rows $X^l_i=\{x\in [0,r^m-1]:x\cdot v_i=r-l\}$ are rebuilt by parity node $l$, $l \in [0,r-1]$.
From \eqref{eq:45678} we get that for any $i$ and $l$,
$|X^l_i|=r^{m-1}.$
\end{cnstr}

Note that similar to Theorem \ref{zigzag-sets}, using a large enough field, the parity nodes described above form an $(n,k)$ MDS array code under appropriate selection of coefficients in the linear combinations of the zigzags.

Consider the rebuilding of systematic node $i\in[0,k-1]$. In a systematic column $j\neq i$ we need to access all the elements that are contained in the sets that belong to the rebuilding set of column $i$. Namely, in column $j$ we need to access the elements in rows
\begin{equation}
\cup_{l=0}^{r-1} f_j^{-l} f_i^{l} (X_i^l).
\label{eq:1212}
\end{equation}
\eqref{eq:1212} follows since the zigzags $Z_t^l$ for any $t\in f_i^l(X_i^l)$ are used to rebuild the elements of column $i$ in rows $X_i^l$. Moreover the element in column $j$ and zigzag $Z_t^l$ is $a_{f^{-l}_j(t),j}$.
The following lemma will help us to calculate the size of \eqref{eq:1212}, and in particular calculating the ratio of codes constructed by Construction \ref{cnstr5}.

\begin{lem} \label{lem:orth}
For any $v=(v_1,...v_m),u \in \mathbb{Z}_r^m$ and $l\in [0,r-1]$ such that $\gcd(v_1,...,v_m,r)=1$, define $c_{v,u}=v\cdot(v-u)-1$. Then
\begin{displaymath}
   |f_u^{-i} f_v^{i}(X_v^i) \cap f_u^{-j} f_v^{j}(X_v^j)|
=  \left\{
\begin{array}{ll}
|X_v^0|, & (i-j)c_{v,u}=0 \\
0,       & \text{o.w.}
\end{array}
\right.
\end{displaymath}
In particular for $j=0$ we get
\begin{displaymath}
|f_u^{-l}f_v^l (X_v^l) \cap X_v^0| = \left\{
\begin{array}{ll}
|X_v^0|, & \text{if } lc_{v,u}=0 \\
0,       & \text{o.w.}
\end{array}
\right.
\end{displaymath}

 \end{lem}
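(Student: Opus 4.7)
The plan is to reduce the statement to a comparison of cosets of the ``hyperplane'' $X_v^0=\{x\in\mathbb{Z}_r^m : x\cdot v=0\}$ inside $\mathbb{Z}_r^m$. First I would simplify the composite map. Since $f_w^l(x)=x+lw$, the composition telescopes to
$$f_u^{-i}f_v^i(x)=x+i(v-u),$$
so the image of any set under $f_u^{-i}f_v^i$ is just its translate by $i(v-u)$. Next I would identify $X_v^l$ with the coset $\{x : x\cdot v\equiv -l\pmod r\}$; the hypothesis $\gcd(v_1,\dots,v_m,r)=1$ ensures that the linear functional $\varphi_v(x):=x\cdot v$ is surjective onto $\mathbb{Z}_r$, so its $r$ fibers partition $\mathbb{Z}_r^m$ into cosets of $X_v^0$, each of size exactly $r^{m-1}=|X_v^0|$.

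The second step is to compute which coset the translation $i(v-u)$ sends $X_v^i$ to. For any $x\in X_v^i$,
$$\varphi_v\bigl(x+i(v-u)\bigr)=-i+i(v\cdot v-v\cdot u)=i\bigl(v\cdot(v-u)-1\bigr)=i\,c_{v,u}.$$
Hence $f_u^{-i}f_v^i(X_v^i)\subseteq\{y:\varphi_v(y)=ic_{v,u}\}$, and since $f_u^{-i}f_v^i$ is a bijection and both sides have cardinality $r^{m-1}$, equality holds. The identical computation yields $f_u^{-j}f_v^j(X_v^j)=\{y:\varphi_v(y)=jc_{v,u}\}$.

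The conclusion is then immediate from the fact that two cosets of $X_v^0$ in $\mathbb{Z}_r^m$ are either identical or disjoint: they coincide precisely when $ic_{v,u}\equiv jc_{v,u}\pmod r$, i.e.\ when $(i-j)c_{v,u}=0$ in $\mathbb{Z}_r$, in which case the intersection has size $|X_v^0|$; otherwise it is empty. The special case $j=0$ is obtained by noting that $f_u^{0}f_v^{0}$ is the identity, so $f_u^{-0}f_v^{0}(X_v^0)=X_v^0$, and applying the general formula.

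I do not expect any real obstacle beyond bookkeeping; the only step requiring some care is justifying that every fiber of $\varphi_v$ has exactly $r^{m-1}$ elements, and this is precisely where the hypothesis $\gcd(v_1,\dots,v_m,r)=1$ is used. Without it the functional $\varphi_v$ could fail to be surjective, the ``coset'' decomposition would be uneven, and the clean dichotomy between size $|X_v^0|$ and size $0$ would break down.
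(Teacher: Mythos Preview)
Your proposal is correct and follows essentially the same approach as the paper: both arguments recognize that $f_u^{-i}f_v^i$ is translation by $i(v-u)$, that the sets $X_v^l$ are cosets of the subgroup $X_v^0$, and that two cosets are either identical or disjoint, with the identity condition reducing to $(i-j)c_{v,u}=0$. Your version is slightly more streamlined in that you compute the coset label directly via the functional $\varphi_v$ rather than choosing coset representatives as the paper does, but the underlying argument is the same.
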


\begin{IEEEproof}
Consider the group $(\mathbb{Z}_r^m,+)$. Note that $X_v^0=\{x: x\cdot v=0\}$ is a subgroup of $\mathbb{Z}_r^m$ and $X_v^i=\{x: x\cdot v=r-i\}$ is its coset. Therefore, $X_v^i=X_v^0 + a_v^i, X_v^j=X_v^0+a_v^j$, for some $a_v^i \in X_v^i, a_v^j \in X_v^j$. Hence $f_u^{-i} f_v^{i}(X_v^i)=X_v^0+a_v^i+i(v-u)$ and $f_u^{-j} f_v^{j}(X_v^j)=X_v^0+a_v^j+j(v-u)$ are cosets of $X_v^0$. So they are either identical or disjoint. Moreover they are identical if and only if
$$a_v^i-a_v^j+(i-j)(v-u)\in X_v^0,$$
i.e., $(a_v^i-a_v^j+(i-j)(v-u))\cdot v =0$. But by definition of $X_v^i$ and $X_v^j$, $a_v^i \cdot v=-i$, $a_v^j \cdot v=-j$, so $(i-j)\cdot c_{v,u}=0$ and the result follows.
\end{IEEEproof}
The following theorem gives the ratio for any code of Construction \ref{cnstr5}.
\begin{thm} \label{thm: gen rate}
The ratio for the code constructed by Construction \ref{cnstr5} and set of vectors $T$  is
$$\frac{\sum_{v \in T} \sum_{u \neq v \in T} \frac{1}{\gcd(r,c_{v,u})} + |T| }{|T|(|T|-1+r)},$$
which also equal to
$$\frac{1}{r}+\frac{\sum_{v\in T}\sum_{u\in T,u\neq v}|F_{u,v}(\overline{X_v^0})\cap \overline{X_v^0}|}{|T|(|T|-1+r)r^m}.$$
Here we define the function $F_{u,v}(t)=f_u^{-i}f_v^i(t)$ for $t\in X_v^i.$
\end{thm}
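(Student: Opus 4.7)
The plan is to determine, for each ordered pair $(v,u) \in T \times T$ with $v \neq u$, the number of rows of column $u$ that must be accessed when column $v$ is erased, and then sum and average over all choices of erased column.

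First, by the reasoning leading to \eqref{eq:1212}, the set of rows accessed in column $u$ during the rebuild of column $v$ is $\bigcup_{l=0}^{r-1} F_{u,v}(X_v^l)$, where on the piece $X_v^l$ the map $F_{u,v}$ equals $x \mapsto x + l(v-u)$. A short direct calculation using $X_v^l = \{x : x \cdot v \equiv -l \pmod r\}$ shows that $F_{u,v}(X_v^l) = \{w \in \mathbb{Z}_r^m : w \cdot v = l\, c_{v,u}\}$, which is itself a coset of the subgroup $X_v^0$. Lemma \ref{lem:orth} then says that these $r$ cosets are pairwise either identical or disjoint, with $F_{u,v}(X_v^l) = F_{u,v}(X_v^{l'})$ iff $(l - l')c_{v,u} \equiv 0 \pmod{r}$.

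Setting $d = \gcd(r, c_{v,u})$, the image coset depends only on $l$ modulo $r/d$, so the family $\{F_{u,v}(X_v^l)\}_{l=0}^{r-1}$ consists of exactly $r/d$ distinct cosets, each of size $|X_v^0| = r^{m-1}$. Hence column $u$ contributes $(r/d) \cdot r^{m-1} = r^m/\gcd(r, c_{v,u})$ accessed rows, i.e.\ a fraction $1/\gcd(r, c_{v,u})$ of its $r^m$ elements. Each of the $r$ parity columns contributes $|X_v^l| = r^{m-1}$ accesses, for a total of $r^m$ from the parities. Dividing by the total surviving size $(|T|-1+r)\, r^m$ and averaging over $v \in T$ yields
\[
\frac{1}{|T|} \sum_{v \in T} \frac{\sum_{u \neq v} r^m/\gcd(r, c_{v,u}) + r^m}{(|T|-1+r)\, r^m} = \frac{\sum_{v \in T} \sum_{u \neq v \in T} 1/\gcd(r, c_{v,u}) + |T|}{|T|(|T|-1+r)},
\]
which is the first expression claimed in the theorem.

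For the second expression I would use the identity
\[
\frac{1}{\gcd(r, c_{v,u})} = \frac{1}{r} + \frac{r/d - 1}{r}
\]
and interpret $r/d - 1$ combinatorially. Since $F_{u,v}(\overline{X_v^0}) = \bigcup_{l=1}^{r-1} F_{u,v}(X_v^l)$ and exactly the $d-1$ values $l \in \{r/d, 2r/d, \ldots, (d-1)r/d\}$ send $X_v^l$ back into $X_v^0$, intersecting with $\overline{X_v^0}$ removes the $X_v^0$ component and leaves precisely the $r/d - 1$ distinct nonzero cosets of $X_v^0$. Therefore $|F_{u,v}(\overline{X_v^0}) \cap \overline{X_v^0}| = (r/d - 1)\, r^{m-1}$. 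Substituting this into the rewritten ratio and averaging over $v$ gives the second expression. The core of the argument is Lemma \ref{lem:orth}; once the cosets are correctly indexed by $-l\, c_{v,u} \bmod r$, the remainder is bookkeeping, and the only mild subtlety is carefully counting which $l \in [1, r-1]$ map $X_v^l$ back to $X_v^0$ when establishing the identity for $|F_{u,v}(\overline{X_v^0}) \cap \overline{X_v^0}|$ needed in the second formula.
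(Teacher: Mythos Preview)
Your argument is correct and follows essentially the same route as the paper's proof: both start from \eqref{eq:1212}, invoke Lemma \ref{lem:orth} to see that the sets $f_u^{-l}f_v^l(X_v^l)$ are cosets of $X_v^0$ that collapse into exactly $r/\gcd(r,c_{v,u})$ distinct cosets, and then average. Your explicit computation that $F_{u,v}(X_v^l)=\{w:w\cdot v=l\,c_{v,u}\}$ is a nice concrete way to see the coset structure (the paper just cites the lemma), and your derivation of the second formula via the identity $1/d=1/r+(r/d-1)/r$ is equivalent to the paper's step of splitting off $X_v^0$ from the union in \eqref{eq:000}; one small slip is that your closing remark says the cosets are indexed by $-l\,c_{v,u}$, whereas your own earlier computation correctly gives $l\,c_{v,u}$.
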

\begin{IEEEproof}
By \eqref{eq:1212} and noticing that we access $r^{m-1}$ elements in each parity node, the ratio is
\begin{equation} \label{eq:union}
\frac{\sum_{v \in T} \sum_{u \neq v \in T} |\cup_{i=0}^{r-1} f_u^{-i} f_v^{i} (X_v^i)| + |T|r^m} {|T|(|T|-1+r) r^m}.
\end{equation}
From Lemma \ref{lem:orth}, and noticing that $|\{i: ic_{v,u}=0 \mod r\}|=\gcd(r,c_{v,u})$, we get
$$|\cup_{i=0}^{r-1} f_u^{-i} f_v^{i} (X_v^i)|=r^{m-1}\times r /\gcd(r,c_{v,u}).$$
And the first part follows. For the second part,
\begin{eqnarray}
&&\frac{\sum_{v \in T} \sum_{u \neq v \in T} |\cup_{i=0}^{r-1} f_u^{-i} f_v^{i} (X_v^i)| + |T|r^m} {|T|(|T|-1+r) r^m}\nonumber\\
&=&\frac{\sum_{v \in T} \sum_{u \neq v \in T} |X_v^0|+|\cup_{i=1}^{r-1} f_u^{-i} f_v^{i} (X_v^i)\backslash X_v^0| + |T|r^m} {|T|(|T|-1+r) r^m}\nonumber\\
&=&\frac{1}{r}+\frac{\sum_{v \in T} \sum_{u \neq v \in T} |\cup_{i=1}^{r-1} f_u^{-i} f_v^{i} (X_v^i)\cap \overline{X_v^0}|} {|T|(|T|-1+r) r^m}\nonumber\\
&=&\frac{1}{r}+\frac{\sum_{v\in T}\sum_{u\in T,u\neq v}|F_{u,v}(\overline{X_v^0})\cap \overline{X_v^0}|}{|T|(|T|-1+r)r^m}\label{eq:000}.
\end{eqnarray}
The proof is completed.
\end{IEEEproof}

Notice that $\overline{X_v^0}$ represents elements not accessed for parity 0 (row parity), and $F_{u,v}(\overline{X_v^0})$ are elements accessed for parity $1,2,\dots,r-1$. Therefore $F_{u,v}(\overline{X_v^0})\cap \overline{X_v^0}$ are the elements accessed excluding those for the row parity.
In order to get a low rebuilding ratio, we need to minimize the second term in \eqref{eq:000}. We say that a family of permutation set
$\{ \{f_0^l\}_{l=0}^{r-1},...,\{f_{k-1}^{l}\}_{l=0}^{r-1}\}$ together with sets
$\{\{X_0^l\}_{l=0}^{r-1},...,\{X_{k-1}^l\}_{l=0}^{r-1}\}$ is a family of \emph{orthogonal permutations} if for any $i,j\in [0,k-1]$ the set $\{X_i^l\}_{i=0}^{r-1}$ is a equally sized partition of $[0,r^m-1]$ and
\begin{equation*}
\frac{|F_{j,i}(\overline{X_i^0})\cap \overline{X_i^0}|}{r^{m-1}(r-1)}=\delta_{i,j}.
\end{equation*}
One can check that for $r=2$ the definition coincides with the previous definition of orthogonal permutations for two parities.
It can be shown that the above definition is equivalent to that for any $0 \le i \neq j \le k-1,0 \le l \le r-1$,
\begin{equation}
f_j^l(X_i^0) = f_i^l (X_i^l).
\label{eq:1122}
\end{equation}
For a set of orthogonal permutations, rebuilding ratio is $1/r$ by  \eqref{eq:000}, which is optimal according to (\ref{eq:tradeoff}),

Now we are ready to construct a code with optimal rebuilding ratio and $r$ parities.
\begin{thm}
\label{mashumashu}
The set $\{ \{f_0^l\}_{l=0}^{r-1},...,\{f_{m}^{l}\}_{l=0}^{r-1}\}$together with set
$\{\{X_0^l\}_{l=0}^{r-1},...,\{X_{m}^l\}_{l=0}^{r-1}\}$ constructed by the vectors $\{e_i\}_{i=0}^m$ and Construction \ref{cnstr5}, where $X_0^l$ is modified to be $X_0^l=\{x\in \mathbb{Z}_r^m:x\cdot (1,1,...,1)=l\}$ for any $l\in [0,r-1]$
is a family of \emph{orthogonal permutations}. Moreover the corresponding $(m+1+r,m+1)$ code has \emph{optimal} ratio of $\frac{1}{r}.$
\end{thm}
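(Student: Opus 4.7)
The plan is to verify the orthogonality condition in its equivalent form \eqref{eq:1122}, namely $f_j^l(X_i^0) = f_i^l(X_i^l)$ for all $i \neq j$ in $[0,m]$ and all $l \in [0,r-1]$. Once orthogonality is established, Theorem \ref{thm: gen rate} immediately forces the second term in \eqref{eq:000} to vanish, so the rebuilding ratio collapses to $1/r$. Optimality then follows because the lower bound \eqref{eq:tradeoff}, normalized and specialized to a single systematic erasure, also equals $1/r$.

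For the principal case $i, j \in [1,m]$ with $i \neq j$, I would invoke Lemma \ref{lem:orth} with $v = e_i$ and $u = e_j$. A one-line computation gives
\[
c_{v,u} = e_i \cdot (e_i - e_j) - 1 = 1 - 0 - 1 = 0 \pmod r,
\]
so $l\, c_{v,u} = 0$ for every $l$ and the lemma yields $|f_u^{-l} f_v^l (X_v^l) \cap X_v^0| = |X_v^0|$. Since $f_u^{-l} f_v^l(X_v^l)$ and $X_v^0$ are cosets of the index-$r$ subgroup $\{x : x_i = 0\}$ of the same cardinality, a full intersection forces them to be equal, which after rearranging reads exactly $f_i^l(X_i^l) = f_j^l(X_i^0)$.

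The remaining cases involve the exceptional index $0$, where $v_0 = e_0 = 0$ makes $f_0^l$ the identity for every $l$, so Lemma \ref{lem:orth} does not apply. I would handle these by direct computation using the modified definition $X_0^l = \{x : x \cdot (1,\dots,1) = l\}$. For $i = 0$, $j \in [1,m]$, any element of $f_j^l(X_0^0) = \{x + l e_j : x \cdot (1,\dots,1) = 0\}$ satisfies $y \cdot (1,\dots,1) = l$, so by cardinality the image equals $X_0^l = f_0^l(X_0^l)$; for $i \in [1,m]$, $j = 0$ one computes $f_i^l(X_i^l) = \{x + l e_i : x_i = r-l\} = X_i^0 = f_0^l(X_i^0)$. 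Combining the three cases yields orthogonality of the entire family and completes the proof. This bookkeeping around the exceptional index $0$ is the only real subtlety, and it is exactly what motivated modifying $X_0^l$ in the first place.
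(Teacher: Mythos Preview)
Your proposal is correct and follows essentially the same route as the paper: verify \eqref{eq:1122} by invoking Lemma~\ref{lem:orth} (via $c_{e_i,e_j}=0$) for the main case $1\le i\neq j\le m$, and handle the two exceptional cases involving index~$0$ by the same direct coset computations the paper uses. Your added remarks---that full intersection of two cosets forces equality, and that orthogonality collapses \eqref{eq:000} to $1/r$---make explicit what the paper leaves implicit, but the argument is identical.
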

\begin{IEEEproof}
For $1 \le i \neq j \le m$, $c_{i,j}=e_i \cdot(e_i-e_j)-1 = 0$, hence by Lemma \ref{lem:orth} for any $l\in[0,r-1]$
$$f_j^{-l}f_i^{l}(X_i^l)\cap X_i^0=X_i^0,$$ and \eqref{eq:1122} is satisfied.
For $1 \le i \le m$, and all $0 \le l \le r-1$,
\begin{eqnarray*}
f_0^{-l}f_i^l(X_i^l) &=& f_i^l(\{v:v_i=-l\})=\{v+le_i:v_i=-l\} \\
&=&\{v:v_i=0\}=X_i^0
\end{eqnarray*}
Therefore,
$f_0^{-l}f_i^{l}(X_i^l)\cap X_i^0=X_i^0$, and \eqref{eq:1122} is satisfied.
Similarly,
\begin{align*}
f_i^{-l}f_0^l(X_0^l)&=f_i^{-l}(\{v:v\cdot (1,...,1)=l\})\\
&=\{v-le_i:v\cdot (1,...,1)=l\}\\
&=\{v:v\cdot (1,...,1)=0\}=X_0^0.
\end{align*}
Hence again \eqref{eq:1122} is satisfied and this is a family of orthogonal permutations, and the result follows.
\end{IEEEproof}

\begin{figure*}
	\centering
		\includegraphics[scale=.6]{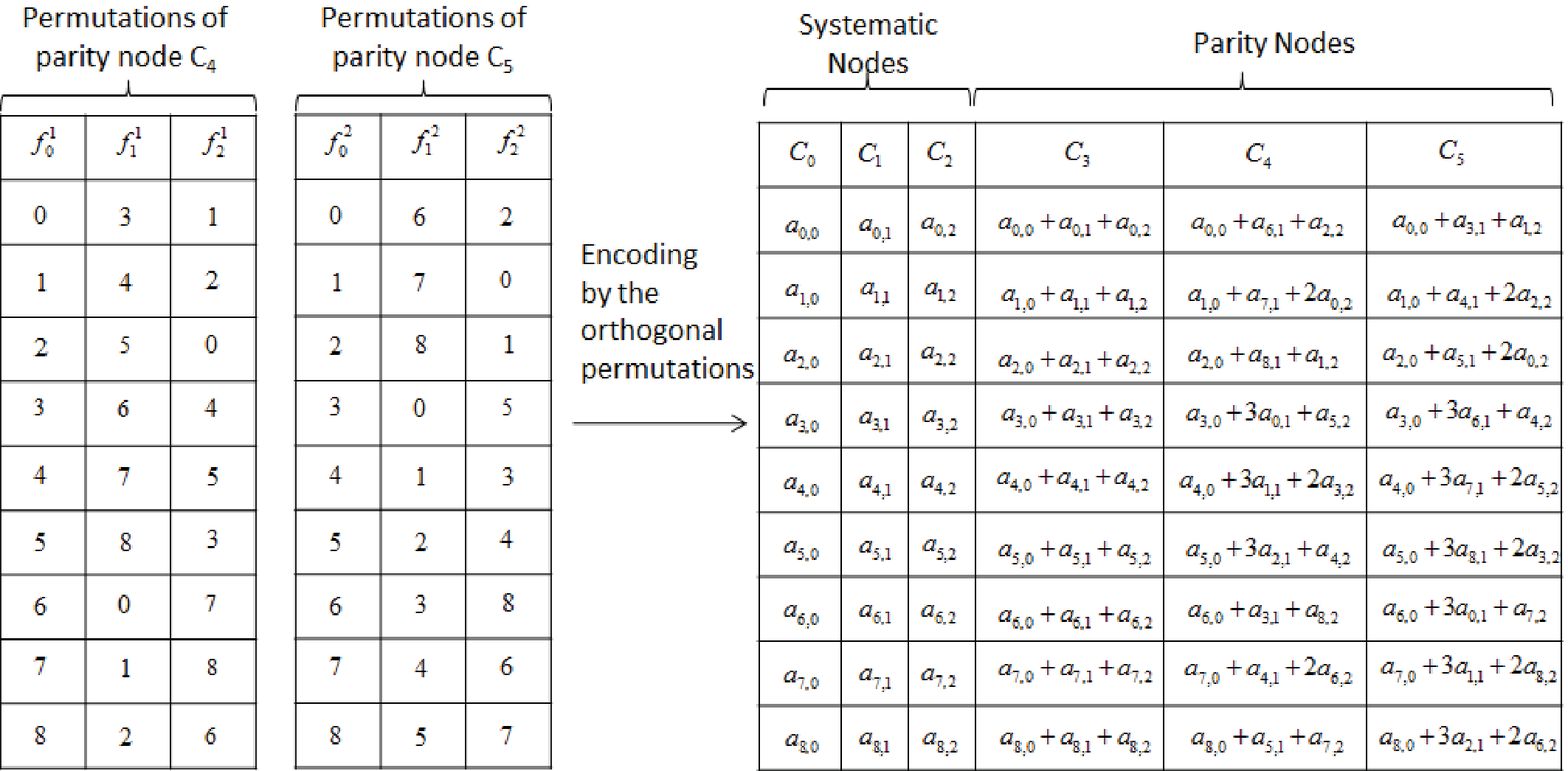}
	\caption{A $(6,3)$ MDS array code with optimal ratio $1/3$. The first parity $C_3$ corresponds to the row sums, and the corresponding identity permutations are omitted. The second and third parity $C4,C_5$ are generated by the permutations $f_i^1,f_i^2$ respectively, $i=0,1,2$. The elements are from $\mathbb{F}_7$, where $3$ is a primitive element.}
	\label{fig5}
\end{figure*}

Surprisingly, one can infer from the above theorem that changing the number of parities from $2$ to $3$ adds only one node to the system, but reduces the ratio from $1/2$ to $1/3$ in the rebuilding of any systematic column.

The example in Figure \ref{fig5} shows a code with $3$ systematic nodes and $3$ parity nodes constructed by Theorem \ref{mashumashu} with $m=2$. The code has an optimal ratio of $1/3$. For instance, if column $C_1$ is erased, accessing rows $\{0,1,2\}$ in the remaining nodes will be sufficient for rebuilding.

Similar to the $2$ parity case, the following theorem shows that Theorem \ref{mashumashu} achieves the optimal number of columns. In other words,
the number of rows has to be exponential in the number of columns in any systematic MDS code with optimal ratio, optimal update, and $r$ parities. This follows since any such optimal code is constructed from a family of orthogonal permutations.

\begin{thm} \label{thm0519}
Let $\{ \{f_0^l\}_{l=0}^{r-1},...,\{f_{k-1}^l\}_{l=0}^{r-1} \}$ be a family of orthogonal permutations over the integers $[0,r^m-1]$ together with the sets $\{ \{X_0^l\}_{l=0}^{r-1},...,\{X_{k-1}^l\}_{l=0}^{r-1} \}$, then $k\leq m+1$.
\end{thm}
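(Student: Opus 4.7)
My plan is an induction on $m$ closely paralleling the proof of Theorem~\ref{thm:size}, upgraded from binary to $r$-ary. The base $m=0$ is trivial, and since $k \le 2 \le m+1$ whenever $m \ge 1$, the inductive step only needs to treat $k \ge 3$. I would first perform a WLOG normalization to $f_0^l = \mathrm{id}$ for every $l$: replacing each $f_j^l$ by $(f_0^l)^{-1} f_j^l$ leaves both the $X_j^l$ and each $F_{j,i} = (f_j^l)^{-1} f_i^l$ unchanged, so orthogonality is preserved. Under this normalization the relation $f_j^l(X_i^0) = f_i^l(X_i^l)$ (the version of \eqref{eq:1122} unfolded from the orthogonality definition) splits into three workhorse identities: (a) $f_j^l(X_0^0) = X_0^l$ for $j \ne 0$; (b) $f_i^l(X_i^l) = X_i^0$ for $i \ne 0$; and (c) $f_j^l$ setwise stabilizes $X_i^0$ whenever $i,j \ne 0$ and $i \ne j$.

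The heart of the proof is the claim that $|X_i^l \cap X_0^0| = r^{m-2}$ for every $i \ne 0$ and every $l$, so that $\{X_i^l \cap X_0^0\}_{l=0}^{r-1}$ is a balanced partition of $X_0^0$. Choosing any $j \notin \{0,i\}$ (available because $k \ge 3$), identities (a) and (c) make $f_j^{l}$ a bijection of $[0,r^m-1]$ that sends $X_0^0$ onto $X_0^{l}$ while fixing $X_i^0$ setwise; transporting $X_0^0 \cap X_i^0$ and $X_0^{l} \cap X_i^0$ by $f_j^{l}$ and its inverse equates their sizes for all $l$, so each equals $|X_i^0|/r = r^{m-2}$. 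A parallel argument using $f_i^{l}$, which bijects $X_i^{l}$ with $X_i^0$ and $X_0^0$ with $X_0^l$ by (a) and (b), then yields $|X_0^0 \cap X_i^l| = |X_0^l \cap X_i^0| = r^{m-2}$ for every $l$.

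To invoke the inductive hypothesis, I would pick any bijection $\pi_l \colon X_0^l \to X_0^0$ and, for $j \ne 0$, set $\tilde f_j^l = \pi_l \circ (f_j^l|_{X_0^0})$ with $\tilde X_j^l = X_j^l \cap X_0^0$. Since $(\tilde f_j^l)^{-1} \tilde f_i^l$ agrees with $(f_j^l|_{X_0^0})^{-1} f_i^l$ on $X_0^0$ (the $\pi_l$'s cancel), checking restricted orthogonality boils down to tracking $F_{j,i}(t)$ for $t \in X_0^0 \cap X_i^l$: identities (a), (b), (c) push $f_i^l(t)$ into $X_0^l \cap X_i^0$ and then pull it back via $(f_j^l)^{-1}$ into $X_0^0 \cap X_i^0 = \tilde X_i^0$, giving $\tilde F_{j,i}(\overline{\tilde X_i^0}) \subseteq \tilde X_i^0$. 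Identifying $X_0^0$ with $[0,r^{m-1}-1]$, the inductive hypothesis applied to the $k-1$ restricted permutation sets gives $k-1 \le m$, that is $k \le m+1$. The main obstacle is the intersection-size claim: the $r=2$ pigeonhole in Theorem~\ref{thm:size} (two subsets of $\overline{X_0}$ each larger than $|\overline{X_0}|/2$ must overlap) has no direct $r \ge 3$ analogue, so I replace it with the bijective transport above, which is cleaner but forces the $k \ge 3$ side condition.
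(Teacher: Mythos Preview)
Your proposal is correct and follows essentially the same inductive approach as the paper: normalize so that $f_0^l$ is the identity, use the resulting identities $f_j^l(X_0^0)=X_0^l$, $f_i^l(X_i^l)=X_i^0$, and $f_j^l(X_i^0)=X_i^0$ to restrict the family to $X_0^0$, verify that the restricted sets form a balanced partition and the restricted permutations remain orthogonal, and conclude by induction. The only cosmetic differences are that the paper additionally normalizes $X_0^l=[lr^{m-1},(l+1)r^{m-1}-1]$ and uses the explicit shift $x\mapsto x-lr^{m-1}$ in place of your arbitrary $\pi_l$, and it packages your identities (a)--(c) into a single $B=C$ computation; the paper also needs $k\ge 3$ implicitly at the step ``let $1\le i\ne j\le k-1$,'' just as you do.
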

\begin{IEEEproof}
We prove it by induction on $m$.
When $m=0$, it is trivial that $k \le 1$.
Now suppose we have a family of orthogonal permutations $\{ \{f_0^l\}_{l=0}^{r-1},...,\{f_{k-1}^l\}_{l=0}^{r-1} \}$ over $[0,r^m-1]$, and we will show $k \le m+1$.
Recall that orthogonality is equivalent \eqref{eq:1122}.
Notice that for any permutations $g,h_0,...,h_{r-1},\{ \{h_lf_0^lg\}_{l=0}^{r-1},...,\{h_lf_{k-1}^lg\}_{l=0}^{r-1} \} \}$ are still a family of orthogonal permutations with sets $\{ \{g^{-1}(X_0^l)\},...,\{g^{-1}(X_{k-1}^l)\} \}$.
This is because
\begin{align*}
h_l f_j^l g (g^{-1}(X_i^0))&= h_l f_j^l(X_i^0)\\
&=h_l f_i^l (X_i^l)\\
&=h_l f_i^l g(g^{-1}(X_i^l)).
\end{align*}
Therefore, w.l.o.g. we can assume  $X_0^{l}=[lr^{m-1},(l+1)r^{m-1}-1]$, and $f_0^{l}$ is the identity permutation, for $0 \le l \le r-1$.

Let $1 \leq i \neq j \le k-1,l\in [0,r-1]$ and define
\begin{eqnarray*}
A&=&f_j^l(X_i^0)=f_i^l(X_i^l), \\
B &=& f_j^l(X_i^0 \cap X_0^0), \\
C &=& f_i^l(X_i^l \cap X_0^0).
\end{eqnarray*}
Therefore $B,C$ are subsets of $A$, and their compliments in $A$ are
\begin{eqnarray*}
&A \backslash B = f_j^l(X_i^0 \cap \overline{X_0^0}), \\
&A \backslash C = f_i^l(X_i^l \cap \overline{X_0^0}).
\end{eqnarray*}
From \eqref{eq:1122} for any $j\neq 0$,
\begin{equation}
f_j^l(X_0^0)=f_0^l(X_0^l)=X_0^l
\label{eq:566}
\end{equation}hence,
\begin{equation}
B,C \subseteq X_0^l
\label{eq:asdf}
\end{equation}
Similarly, for any $j\neq 0$,
$ f_j^l(\overline{X_0^0})= \overline{f_j^l(X_0^0)} =\overline{X_0^l}$, hence
\begin{equation}
A \backslash B,A \backslash C \subseteq \overline{X_0^l}.
\label{eq:asdf2}
\end{equation}
From \eqref{eq:asdf},\eqref{eq:asdf2} we conclude that $B=C=A \cap X_0^l$, i.e.,
\begin{equation}
f_j^l(X_i^0 \cap X_0^0)=f_i^l(X_i^l \cap X_0^0).
\label{eq:2340}
\end{equation}
For each $l\in[0,r-1],j\in [1,k-1]$ define $\hat{f}_j^l(x)=f_j^l(x)-lr^{m-1}$ and $\hat{X_j^l}=X_j^l \cap X_0^0$ then,
\begin{align}
\hat{f}_j^l([0,r^{m-1}-1])&=f_j^l(X_0^0)-lr^{m-1}\nonumber\\
&=X_0^l-lr^{m-1}\label{wewe}\\
&=[0,r^{m-1}-1]\nonumber,
\end{align}
where \eqref{wewe} follows from \eqref{eq:566}. Moreover, since $f_i^l$ is bijective we conclude that $\hat{f_i^l}$ is a permutation on $[0,r^{m-1}-1].$
\begin{align}
\hat{f}_i^l(\hat{X_i^l})&=f_i^l(X_i^l\cap X_0^0)-lr^{m-1}\nonumber\\
&=f_j^l(X_i^0\cap X_0^0)-lr^{m-1}\label{eq0101}\\
&=\hat{f_j^l}(\hat{X_i^0})\nonumber,
\end{align}
where \eqref{eq0101} follows from \eqref{eq:2340}.
Since $\{X_i^l\}_{l=0}^{r-1}$ is a partition of  $[0,r^m-1]$, then $\{\hat{X}_i^l\}_{l=0}^{r-1}$ is also a partition of $X_0^0=[0,r^{m-1}-1]$. Moreover, since
$\hat{f}_i^l(\hat{X_i^l})=\hat{f_j^l}(\hat{X_i^0})$ for any $l\in [0,r-1]$, and $\hat{f}_i^l,\hat{f}_j^l$ are bijections, we conclude
$$|\hat{X}_i^l|=|\hat{X}_i^0|$$
for all $l \in [0,r-1]$, i.e., $\{\hat{X}_i^l\}$, $l \in [0,r-1]$, is a equally sized partition of $[0,r^{m-1}-1]$.
Therefore $\{\{\hat{f_1^l}\}_{l=0}^{r-1},...,\{\hat{f_{k-1}^l}\}_{l=0}^{r-1} \}$ together with $\{ \{\hat{X_1^l}\}_{l=0}^{r-1},...,\{\hat{X_{k-1}^l}\}_{l=0}^{r-1}\}$ is a family of orthogonal permutations over integers $[0,r^{m-1}-1]$, hence by induction $k-1\leq m$ and the result follows.
\end{IEEEproof}

After presenting the construction of a code with optimal ratio of $1/r$, we move on to deal with the problem of assigning the proper coefficient in order to satisfy the MDS property. This task turns out to be not easy when the number of parities $r>2.$ The next theorem gives a proper assignment for the code with $r=3$ parities, constructed by the optimal construction given before. This assignment gives an upper bound on the required field size.

\begin{thm}
A field of size at most $2(m+1)$ is sufficient to make the code constructed by Theorem \ref{mashumashu} with $r=3$ parities, a $(m+4,m+1)$ MDS code.
\end{thm}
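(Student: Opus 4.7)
The plan is to reduce the MDS verification to the case of three erased systematic columns, decompose the resulting linear system into small invariant blocks, and control each block's determinant using a coefficient choice generalizing Construction \ref{cons3}.

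First, I would dispose of the easy cases. When at least one of the three erased columns is a parity, recovery reduces to solving a 2-parity subcode, and choosing the new coefficients $\beta_{i,j}^l$ so that any two of the three parities form (up to scaling) a copy of Construction \ref{cons3} makes these subcodes MDS over $\mathbb{F}_3$ and hence over any extension. The remaining and genuinely new case is three erased systematic columns $j_1 < j_2 < j_3 \in [0,m]$. Because $f_{j_t}^l(x) = x + l e_{j_t}$ in the standard-basis construction of Theorem \ref{mashumashu}, the zigzag equation in parity $l$ couples the unknown $a_{r,j_t}$ only to the unknowns $a_{r + l(e_{j_t} - e_{j_s}), j_s}$ for $s \neq t$. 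Closing under this coupling partitions the unknowns into independent orbits, each indexed by a fixed row outside coordinates $j_1,j_2,j_3$ together with a fixed value of $r_{j_1} + r_{j_2} + r_{j_3} \bmod 3$. A direct count shows such an orbit contains $9$ rows per erased column, hence $27$ unknowns, and exactly $27$ zigzag equations ($9$ per parity). The MDS property is therefore equivalent to nonsingularity of every one of these $27 \times 27$ orbit matrices.

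Next, I would take coefficients $\alpha_{i,j}=1$ and $\beta_{i,j}^l = a^{\phi(l,i,j)}$ in $\mathbb{F}_q$, where $a$ is a primitive element and $\phi$ depends linearly on $l$ and on $i \cdot u_j$ (with $u_j = \sum_{s\le j} e_s$), patterned so that restricting to any pair of parities reproduces Construction \ref{cons3}. Each $27 \times 27$ orbit determinant would then be attacked by a Kronecker-style factorization along the three column indices, reducing it to a product of $3 \times 3$ sub-determinants in the $\beta_{r', j_t}^l$'s. Nonvanishing of each $3 \times 3$ factor becomes an exponent-inequality $a^{\phi_1} \neq a^{\phi_2}$, where the difference $|\phi_1 - \phi_2|$ is bounded by a quantity linear in $m$. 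Choosing $\mathbb{F}_q$ with a primitive element of sufficiently large order --- namely $\mathrm{ord}(a)$ at least on the order of $2m+1$, which is achievable with $q \le 2(m+1)$ --- makes all these inequalities hold simultaneously.

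The main obstacle is the factorization step: the $27 \times 27$ orbit determinant is genuinely three-dimensional and does not factor by inspection as the $4 \times 4$ orbit determinant did in the $r=2$ case. One must carefully exhibit a Kronecker-type block structure of the orbit matrix and verify that the resulting product of smaller determinants captures the full determinant up to a nonzero scalar. A secondary obstacle is matching the tight field-size bound $2(m+1)$: a naive definition of $\phi$ will typically demand exponent separations of order $m^2$ rather than $m$, so $\phi$ must be tailored to pack all the required exponent differences into an interval of length $O(m)$ and to avoid every collision modulo $q-1$, which is what ultimately forces a specific choice of primitive element and offsets.
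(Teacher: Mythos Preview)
Your proposal has a genuine gap that you yourself flag: you have no method to factor the $27\times 27$ orbit determinants, and without that factorization the argument does not close. The paper avoids this obstacle by working not with orbit blocks but with the full $3^m\times 3^m$ parity-generator matrices. It sets $B_l^j=A_l^j$ where $A_l$ is the permutation matrix of $f_{e_l}^1$ modified so that its nonzero entry in column $i$ equals $\alpha^l$ when $i\cdot e_l=0$ and $1$ otherwise, with $\alpha$ primitive in $\mathbb{F}_q$. The two decisive facts are that the $A_l$ pairwise \emph{commute} and that $A_l^3=\alpha^l I$. Commutativity allows a block-Vandermonde determinant formula (Silvester): the $3\times 3$ block matrix $\bigl(A_{l}^{\,j}\bigr)$ has determinant $\prod_{i<j}\det(A_j-A_i)$, so the whole problem reduces to $\det(A_j-A_i)\neq 0$, which follows from $(A_iA_j^{-1})^3=\alpha^{i-j}I\neq I$ via the factorization $A^3-I=(A-I)(A^2+A+I)$. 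The $2\times 2$ block case needs $(A_iA_j^{-1})^6\neq I$, i.e.\ $\alpha^{2(i-j)}\neq 1$ for $1\le i-j\le m$, and this is precisely what forces $q-1>2m$, giving the bound $q\le 2(m+1)$ with no slack.

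Your orbit decomposition is correct, but the factorization you want is a \emph{consequence} of the commutativity of the $A_l$, and that commutativity is invisible at the level of a single $27\times 27$ block. Your proposed coefficient scheme (arranging that any two parities reproduce Construction~\ref{cons3}) does not in general produce commuting $A_l$, so the block-Vandermonde identity would not apply and you would remain stuck at the obstacle you name. The missing idea is to choose the coefficients specifically to make the encoding matrices commute and to have controlled cube, after which the determinant computation becomes a one-line Vandermonde argument rather than a Kronecker factorization.
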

\begin{proof}
Let $\mathbb{F}_q$ be a field of size $q\geq 2(m+1)$. For any  $l\in [0,m]$ let $A_l=(a_{i,j})$ be the representation of the permutation $f_{e_l}^1$ by a permutation matrix with a slight modification and is defined as follows,
\begin{equation*}
a_{i,j}=
\begin{cases}
\alpha^l & f_{e_l}^1(j)=i \text{ and } j\cdot e_l=0\\
1 & f_{e_l}^1(j)=i \text{ and } j\cdot e_l \neq 0\\
0 & \text{otherwise},
\end{cases}
\end{equation*}
where $\alpha$ is a primitive element of $\mathbb{F}_q$.
Let $W$ be the matrix that create the parities nodes, defined as
$$W=\left[
\begin{array}{llll}
	B_0^0 & B_1^0 &...& B_m^0\\
	B_0^1 & B_1^1 &...& B_m^1 \\
	B_0^2 & B_1^2 &...& B_m^2
\end{array}
\right].
$$
Where $B_l^j=(A_l)^j$ for $l\in [0,m]\text{ and } j\in[0,2]$. It easy to see that indeed block row $i\in [0,2]$ in the block matrix $m$ corresponds to parity $i$. We will show that this coefficient assignment satisfy the MDS property of the code. First
we will show that under this assignment of coefficients the matrices $A_l$ \underline{commute}, i.e. for any $l_1 \neq l_2\in [0,m],A_{l_1}A_{l_2}=A_{l_2}A_{l_1}$. For simplicity, write $f_{e_{l_1}}^1=f_1,f_{e_{l_2}}^1=f_2,A_{l_1}=(a_{i,j}),A_{l_2}=(b_{i,j}),3^m=p$.
For a vector $x=(x_0,...,x_{p-1})$ and $y = x A_{l_1}$,
its $j$-th entry satisfies $y_j=a_{f_1(j),j}x_{f_1(j)}$ for all $j \in [0,p-1]$. And by similar calculation, $z=x A_{l_1} A_{l_2} = y A_{l_2}$ will satisfy
$$z_j = b_{f_2(j),j}y_{f_2(j)}=
b_{f_2(j),j} a_{f_1(f_2(j)),f_2(j)}x_{f_1(f_2(j))}.$$
Similarly, if $w=x A_{l_2} A_{l_1}$, then
$$w_j = a_{f_1(j),j} b_{f_2(f_1(j)),f_1(j)}x_{f_2(f_1(j))}.$$
Notice that $$ f_1(j) \cdot e_{l_2} = (j+e_{l_1})e_{l_2}=j \cdot e_{l_2},$$
so  $b_{f_2(j),j}=b_{f_2(f_1(j)),f_1(j)}$. Similarly, $a_{f_1(j),j}=a_{f_1(f_2(j)),f_2(j)}$. Moreover, $$f_1(f_2(j))=f_2(f_1(j))=j+e_{l_1}+e_{l_2}.$$ Hence, $z_j=w_j$ for all $j$ and
$$x A_{l_1} A_{l_2} =z=w= x A_{l_2} A_{l_1}$$
for all $x \in \mathbb{F}_3^{m}$. Thus $A_{l_1}A_{l_2}=A_{l_2}A_{l_1}$.

Next we show for any $i$, $\underline{A_i^3=\alpha^iI}$. For any vector $x$, Let $y=xA_i^3.$ Then $$y_j=a_{f_i(j),j}a_{f_i^2(j),f_i(j)}a_{f_i^3(j),f_i^2(j)}x_{f_i^3(j)}.$$
However, $f_i^3(j)=j+3e_i=j $ (since the addition is done over $\mathbb{F}_3^m$), and exactly one of $j\cdot e_i,f_i(j) \cdot e_i,f_i^2(j) \cdot e_i$ equals to $0$. Thus $y_j=\alpha^i x_j$ or $xA_i^3=\alpha^i x$ for any $x$. Hence $A_i^3=\alpha^i I$.

The code is MDS if it can recover from loss of any $3$ nodes. With this assignment of coefficients the code is MDS iff any block sub matrices of sizes $1\times1,2\times 2,3\times 3$ of the matrix $M$
are invertible. The case of $1 \times 1$ sub matrix is trivial. Let $0\leq i<j<k\leq m$ we will see that the $3 \times 3$ matrix
$$\left[
\begin{array}{lll}
	I & I & I\\
	A_{i} & A_j & A_k \\
	A_{i}^2 & A_j^2 & A_k^2
\end{array}
\right]
$$
is invertible. By Theorem $1$ in \cite{Determinants} and the fact that  all the blocks in the matrix commute we get that the determinant of this matrix equals to $ \det(A_k-A_j)\cdot\det(A_k-A_i)\cdot\det(A_j-A_i)$. Hence we need to show that for any $i>j$, $\det(A_i-A_j)\neq 0$, which is equivalent to $\det(A_iA_j^{-1}-I)\neq 0.$ Note that for any $i$, $A_i^3=\alpha^iI$. Denote by $A=A_iA_j^{-1}$, hence $A^3=(A_iA_j^{-1})^3=A_i^3A_j^{-3}=\alpha^{i-j}I\neq I$. Therefore
$$0\neq\det(A^3-I)=\det(A-I)\det(A^2+A+I).$$
Therefore $\det(A-I)=\det(A_iA_j^{-1}-I)\neq 0$.

For a submatrix of size $2 \times 2$, we need to check that for $i>j$
$$\det(\left[
\begin{array}{ll}
	I & I \\
	A_j^2 & A_i^2
\end{array}
\right])=\det(A_j^{2})\det(A_i^2A_j^{-2}-I)\neq 0.
$$
Note that $A^6=(A_iA_j^{-1})^6= \alpha^{2(i-j)}I\neq I$ since $0<i-j\leq m< \frac{q-1}{2}.$ Hence
$$0\neq \det(A^6-I)=\det(A^2-I)(A^4+A^2+I),$$ and $\det(A^2-I)=\det(A_i^2A_j^{-2}-I)\neq 0$ which concludes the proof.
\end{proof}

For example, the coefficients of the parities in Figure \ref{fig5} are assigned as the above proof. Since $m=2$, the field of size $7$ is sufficient. The primitive element is chosen to be $3$. One can check that when losing any three columns we can still rebuild them.

\section{Rebuilding Multiple Erasures} \label{sec:multi}

In this section, we discuss the rebuilding of $e$ erasures, $1 \le e \le r$. We will first prove the lower bound for rebuilding ratio and repair bandwidth. Then we show a construction achieving the lower bound for systematic nodes. At last we generalize this construction and Construction \ref{cnstr5}, and propose a rebuilding algorithm using an arbitrary subgroup and its cosets.

In this section, in order to simplify some of the results we will assume that $r$ is a prime and the calculations are done over $\mathbb{F}_r$. Note that all the result can be generalized with minor changes for an arbitrary integer $r$ and the ring $\mathbb{Z}_r$.
\subsection{Lower Bounds}

The next theorem shows that the rebuilding ratio for Construction \ref{cnstr5} is at least $e/r$.

\begin{thm} \label{thm0419}
Let $A$ be an array with $r$ parity nodes constructed by Construction \ref{cnstr5}. In an erasure of $1\leq e \leq r$ systematic nodes, the rebuilding ratio is at least $\frac{e}{r}$.
\end{thm}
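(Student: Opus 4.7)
The approach is a two-pronged counting argument on an arbitrary valid rebuilding scheme. Fix such a scheme, let $P$ denote the total number of parity elements read across all $r$ parity nodes, and $A$ the number of elements read from the $k-e$ surviving systematic columns. Since the surviving data consists of $(k-e+r)r^m$ elements, it suffices to prove
\[
P+A\;\ge\;\frac{e(k-e+r)r^m}{r}.
\]

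First I would argue $P\ge er^m$. The $er^m$ erased entries are the unknowns, and each accessed parity element supplies one linear equation in them after subtracting the contributions of the accessed surviving entries. For the erased entries to be uniquely recoverable, the coefficient matrix restricted to those unknowns must have rank $er^m$, forcing $P\ge er^m$. Here I would also argue that one may assume without loss of generality that every surviving systematic element lying in the support of some accessed parity equation is itself accessed; otherwise those $U$ ``hidden'' elements are additional unknowns, the rank requirement strengthens to $P\ge er^m+U$, and substituting into the incidence bound below makes $P+A$ monotone non-decreasing in $U$, so the minimum is attained at $U=0$.

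Next I would prove the incidence bound $A\ge P(k-e)/r$. Form the bipartite incidence between the $P$ accessed parity equations and the $A$ accessed surviving entries, connecting $z_j^l$ to each surviving element in its parity set $Z_j^l$. Each accessed equation contributes exactly $k-e$ edges (one per surviving column), so the total count is $P(k-e)$. Conversely, by Construction~\ref{cnstr5} a surviving element $a_{i,c}$ appears in the parity sets $Z_{f_c^l(i)}^l$ for $l=0,\dots,r-1$, hence in at most $r$ accessed equations, which gives the reverse count $P(k-e)\le rA$. Chaining this with the bound from the first step,
\[
P+A \;\ge\; P\Bigl(1+\frac{k-e}{r}\Bigr)\;=\;\frac{P(k-e+r)}{r}\;\ge\;\frac{er^m(k-e+r)}{r},
\]
which upon normalization gives rebuilding ratio at least $e/r$.

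The main obstacle is the justification of $P\ge er^m$ in the presence of unaccessed surviving entries. Pure equation-counting is not enough: the rebuilder may treat hidden surviving entries as nuisance unknowns, and one must verify both that $M_E$ is injective and that $\mathrm{Im}(M_E)\cap\mathrm{Im}(M_U)=\{0\}$, which together force $P\ge er^m+U$. Once this linear-algebra check is in place, the rest of the argument is the clean combinatorial chain above.
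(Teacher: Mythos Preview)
Your proof is correct and takes a genuinely different route from the paper's. The paper argues by pigeonhole: since at least $er^m$ zigzag sets must be used across the $r$ parity nodes, some parity node $l^*$ contributes at least $er^{m-1}$ of them, and because each $f_j^{l^*}$ is a bijection, every surviving systematic column then contains at least $er^{m-1}$ distinct rows that must be read, so \emph{each} surviving node individually meets the $e/r$ fraction. Your double-counting of the bipartite incidence between accessed parity equations and surviving entries is an elegant alternative that bounds the aggregate $P+A$ directly rather than node by node. The paper's pigeonhole is shorter and yields the stronger per-node conclusion; your incidence inequality $P(k-e)\le rA$ is more portable and would adapt to settings where accesses need not be balanced across nodes.

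One remark on the $U>0$ discussion: in the paper's rebuilding model a parity element is only ``used'' together with all of its surviving systematic components, so $U=0$ holds by definition and your WLOG reduction is automatic. If you do want to push the argument in a more general access model, note that $M_E$ injective together with $\mathrm{Im}(M_E)\cap\mathrm{Im}(M_U)=\{0\}$ only yields $P\ge er^m+\mathrm{rank}(M_U)$, not $P\ge er^m+U$: the matrix $M_U$ may have a nontrivial kernel without obstructing unique recovery of the erased coordinates. With the weaker inequality the chain $P+A\ge P(k-e+r)/r - U$ no longer closes, so extending to that model would need an additional argument (or an appeal to the cut-set bound of Theorem~\ref{th0503}).
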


\begin{proof}
In order to recover the information in the systematic nodes we need to use at least $er^m$ zigzag sets from the $r^{m+1}$ sets (There are $r$ parity nodes, $r^m$ zigzag sets in each parity). By the pigeonhole principle there is at least one parity node, such that at least $er^{m-1}$ of its zigzag sets are used. Hence each remaining systematic node has to access its elements that are contained in these zigzag sets. Therefore each systematic node accesses at least $er^{m-1}$ of its information out of $r^m$, which is a portion of $\frac{e}{r}.$

Since we use at least $er^m$ zigzag sets, we use at least $er^m$ elements in the $r$ parity nodes, which is again a portion of $\frac{e}{r}$. Hence the overall rebuilding ratio is at least $\frac{e}{r}$.
\end{proof}

In a general code (not necessary MDS, systematic, or optimal update), what is the amount of information needed to transmit in order to rebuild $e$ nodes? Assume that in the system multiple nodes are erased, and we rebuild these nodes \emph{simultaneously} from information in the remaining nodes. It should be noted that this model is a bit different from the distributed repair problem, where the recovery of each node is done separately.
We follow the definitions and notations of \cite{KumarProof}. An \emph{exact-repair reconstructing code} satisfies the following two properties: (i)Reconstruction: any $k$ nodes can rebuild the total information. (ii)Exact repair: if $e$ nodes are erased, they can be recovered exactly by transmitting information from the remaining nodes.

Suppose the total amount of information is $\cM$, and the $n$ nodes are $[n]$. For $e$ erasures, $1\leq e \leq r$, denote by $\alpha, d_e,\beta_e$ the amount of information stored in each node, the number of nodes connected to the erased nodes, and the amount of information transmitted by each of the nodes, respectively. For subsets $A,B \subseteq [n]$, $W_A$ is the amount of information stored in nodes $A$, and $S_A^B$ is the amount of information transmitted from nodes $A$ to nodes $B$ in the rebuilding.

The following results give lower bound of repair bandwidth for $e$ erasures, and the proofs are based on  \cite{KumarProof}.

\begin{lem}
Let $B\subseteq [n]$ be a subset of nodes of size $|e|$, then for an arbitrary set of nodes $A$, $|A|\leq d_e$ such that $B\cap A=\emptyset$,
$$H(W_B|W_A)\leq \min\{|B|\alpha,(d_e-|A|)\beta_e\}.$$
\end{lem}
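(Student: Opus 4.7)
The statement has two upper bounds, so the proof naturally splits into two parts.

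For the first bound $H(W_B \mid W_A) \le |B|\alpha$, I would use only basic information-theoretic inequalities together with the storage constraint. Specifically, conditioning reduces entropy, so $H(W_B \mid W_A) \le H(W_B)$, and then subadditivity gives
$$H(W_B) \le \sum_{i \in B} H(W_{\{i\}}) \le |B|\alpha,$$
since each node stores at most $\alpha$ units of information.

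For the second bound $H(W_B \mid W_A) \le (d_e - |A|)\beta_e$, the plan is to exploit the exact-repair property of the code. Since $|B| = e$, by definition of an exact-repair reconstructing code there exists a set $D$ of $d_e$ nodes disjoint from $B$ whose transmissions $S_D^B$ determine $W_B$ exactly. Because $A \cap B = \emptyset$ and $|A| \le d_e$, I would choose (or rather argue we may choose) $D \supseteq A$, so that $D\setminus A$ has size $d_e - |A|$. The transmission $S_A^B$ from nodes in $A$ is a deterministic function of $W_A$, so conditioning on $W_A$ makes $S_A^B$ known, leaving only $S_{D\setminus A}^B$ to be received in order to reconstruct $W_B$. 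This yields
$$H(W_B \mid W_A) \le H\bigl(S_{D\setminus A}^B \mid W_A\bigr) \le H\bigl(S_{D\setminus A}^B\bigr) \le (d_e - |A|)\beta_e,$$
where the last inequality uses that each of the $d_e - |A|$ nodes transmits at most $\beta_e$ units.

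Combining the two bounds gives the claimed minimum. The only real subtlety, and the step I would need to state carefully, is the freedom to take the helper set $D$ to contain the prescribed set $A$; this is the symmetry assumption implicit in the model (any $d_e$ surviving nodes can serve as helpers), which is standard in the setting of \cite{KumarProof} and is what allows the argument to go through for an arbitrary $A$ rather than a specific one.
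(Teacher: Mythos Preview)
Your proposal is correct and follows essentially the same approach as the paper. The paper writes the second bound via the chain $0=H(W_B\mid S_A^B,S_C^B)=H(W_B\mid S_A^B)-I(W_B;S_C^B\mid S_A^B)\ge H(W_B\mid S_A^B)-H(S_C^B)\ge H(W_B\mid W_A)-(d_e-|A|)\beta_e$, which is exactly your argument with $C=D\setminus A$, unwound through mutual information rather than your direct ``$W_B$ is a function of $S_{D\setminus A}^B$ given $W_A$'' step; your explicit flagging of the freedom to choose the helper set to contain $A$ is a point the paper leaves implicit.
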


\begin{proof}
If nodes $B$ are erased, consider the case of connecting to them nodes $A$ and nodes $C$, $|C|=d_e-|A|.$ Then the exact repair condition requires
\begin{align*}
0&=H(W_B|S_A^B,S_C^B)\\
&=H(W_B|S_A^B)-I(W_B,S_C^B|S_A^B)\\
&\geq H(W_B|S_A^B)-H(S_C^B)\\
&\geq H(W_B|S_A^B) -(d-|A|)\beta_e\\
&\geq H(W_B|W_A) -(d-|A|)\beta_e .
\end{align*}
Moreover, it is clear that $H(W_B|W_A)\leq H(W_B)\leq |B|\alpha$ and the result follows.
\end{proof}

\begin{thm} \label{th0503}
Any reconstructing code with file size  $\cM$ must satisfy for any $1\leq e \leq r$
$$\cM\leq s \alpha+ \sum_{i=0}^{\lfloor \frac{k}{e}\rfloor-1} \min\{e \alpha,(d_e-ie-s)\beta_e\}$$
where $s=k \mod e, 0 \le s < e$. Moreover for an MDS code, $\beta_e \geq \frac{e\cM}{k(d-k+e)}.$
\end{thm}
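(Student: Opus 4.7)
The plan is to establish the first inequality by combining the chain rule of entropy with the cut-set lemma just proven, and then to deduce the $\beta_e$ bound for MDS codes by a tightness argument.

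First I would invoke the Reconstruction property to pick any $k$-subset $N\subseteq[n]$ with $H(W_N)=\cM$, and partition it as $N=B_*\cup B_0\cup B_1\cup\cdots\cup B_{q-1}$, where $|B_*|=s=k\bmod e$, $|B_i|=e$ for $0\le i\le q-1$, and $q=\lfloor k/e\rfloor$. The chain rule then gives
\begin{equation*}
\cM \;=\; H(W_{B_*}) \;+\; \sum_{i=0}^{q-1} H\!\left(W_{B_i}\,\bigm|\, W_{B_*},W_{B_0},\ldots,W_{B_{i-1}}\right).
\end{equation*}
The leading term is bounded by $s\alpha$ from the per-node storage constraint. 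For each subsequent summand I would apply the previous lemma with $B=B_i$ and $A=B_*\cup B_0\cup\cdots\cup B_{i-1}$; this set has size $s+ie$ and is disjoint from $B_i$, so the lemma produces $H(W_{B_i}\mid W_A)\le\min\{e\alpha,(d_e-ie-s)\beta_e\}$. Summing over $i$ assembles exactly the right-hand side stated in the theorem.

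For the MDS statement, I would substitute $\cM=k\alpha$ (which holds because each systematic node stores $\alpha=\cM/k$). Since every summand in the bound is at most $e\alpha$, the whole right-hand side is at most $s\alpha+qe\alpha=k\alpha=\cM$, so equality must hold throughout. In particular each $\min$ must realize its first branch, forcing $(d_e-ie-s)\beta_e\ge e\alpha$ for every $0\le i\le q-1$. The binding constraint is at $i=q-1$, where $d_e-(q-1)e-s=d_e-k+e$, and it rearranges to
\begin{equation*}
\beta_e \;\ge\; \frac{e\alpha}{d_e-k+e} \;=\; \frac{e\cM}{k(d_e-k+e)},
\end{equation*}
matching the claim (modulo the apparent typo $d$ for $d_e$).

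The only delicate step is checking the hypothesis $|A|\le d_e$ of the cut-set lemma at every invocation. Here $|A|=s+ie\le s+(q-1)e=k-e$, so it suffices that $d_e\ge k-e$, which is the minimal assumption ensuring that exact repair of $e$ erasures is even possible in an MDS code (one must be able to gather at least $k-e$ uncorrupted information nodes' worth of independent symbols). Beyond that, the proof is a clean assembly of the chain rule, the trivial per-node entropy bound, and the lemma already in hand; no further combinatorial optimization over the partition of $N$ is required, since the bound is the same for every choice.
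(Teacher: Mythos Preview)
Your proposal is correct and follows essentially the same route as the paper: the paper also picks $k$ nodes, expands $\cM=H(W_{[k]})$ by the chain rule into $H(W_{[s]})+\sum_i H(W_{[ie+s+1,(i+1)e+s]}\mid W_{[ie+s]})$, bounds the first term by $s\alpha$ and each conditional term via the preceding lemma, and then for the MDS part argues that each summand must equal $e\cM/k$, with the binding instance at $i=\lfloor k/e\rfloor-1$. Your discussion of the hypothesis $|A|\le d_e$ is a detail the paper leaves implicit, but otherwise the arguments coincide.
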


\begin{proof}
The  file can be reconstructed from any set of $k$ nodes, hence
\begin{align*}
\cM &=H(\cW_{[k]})\\
&=H(W_{[s]})+\sum_{i=0}^{\lfloor\frac{k}{e}\rfloor-1}H(\cW_{[ie+s+1,(i+1)e+s]}|\cW_{[ie+s]})\\
&\leq s \alpha+ \sum_{i=0}^{\lfloor \frac{k}{e}\rfloor-1} \min\{e \alpha,(d_e-ie-s)\beta_e\}.
\end{align*}
In an MDS code $\alpha=\frac{\cM}{k}$, hence in order to satisfy the inequality any summand of the form $\min\{e\alpha,(d_e-ie-s)\beta_e\}$ must be at least $e\frac{\cM}{k}$, which occurs if and only if
$(d_e-(\lfloor \frac{k}{e} \rfloor-1 )e-s)\beta_e\geq \frac{e\cM}{k}$. Hence we get $$\beta_e \geq \frac{e\cM}{k(d-k+e)}.$$
And the proof is completed.
\end{proof}

Therefore, the lower bound of the repair bandwidth for an MDS code is $\frac{e\cM}{k(d-k+e)}$, which is the same as the lower bound of the rebuilding ratio in Theorem \ref{thm0419}.

\subsection{Rebuilding Algorithms}

Next we discuss how to rebuild in case of $e$ erasures, $1 \le e \le r$, for an MDS array code with optimal update. Theorem \ref{th0503} gives the lower bound $e/r$ on the rebuilding ratio for $e$ erasures. Is this achievable? Let us first look at an example.

\begin{xmpl} \label{xmpl0503}
Consider the code in Figure \ref{fig5} with $r=3$. When $e=2$ and columns $C_0,C_1$ are erased, we can access rows $\{0,1,3,4,6,7\}$ in column $C_2,C_3$, rows $\{1,2,4,5,7,8\}$ in column $C_4$, and rows $\{2,0,5,3,8,6\}$ in column $C_5$. One can check that the accessed elements are sufficient to rebuild the two erased columns, and the ratio is $2/3=e/r$. It can be shown that similar rebuilding can be done for any two systematic node erasures.  Therefore, in this example the lower bound is achievable.
\end{xmpl}

Consider an information array of size $p\times k$ and an $(n,k)$ MDS code with $r=n-k$ parity nodes. Each parity node $l\in [0,r-1]$ is constructed from the set of permutations $\{f_i^l\}$ for $i \in [0,k-1]$. Notice that in the general case the number of rows $p$ in the array is not necessarily a power of $r$. We will assume columns $[0,e-1]$ are erased. In an erasure of  $e$ columns, $ep$ elements need rebuilt, hence we need $ep$ equations (zigzags) that contain these elements. In an optimal rebuilding, each parity node contributes $ep/r$ equations by accessing the values of  $ep/r$ of its zigzag elements. Moreover, the union of the zigzag sets that create these zigzag elements, constitute an $e/r$ portion of the elements in the surviving systematic nodes. In other words, assume that we access rows $X$ from the surviving columns $[e,k-1]$, $X \subseteq [0,p-1]$, then $|X|=ep/r$ and
\begin{equation*}
f_j^l(X)=f_i^l(X)
\end{equation*}
for any parity node $l \in [0,r-1]$ and $i,j \in [e,k-1]$. Note that it is equivalent that for any
parity node $l \in [0,r-1]$ and surviving systematic node $j \in [e,k-1]$
\begin{equation*}
f_j^l(X)=f_e^l(X).
\end{equation*}
Let $G^l$ be the subgroup of the symmetric group $S_p$ that is generated by the set of permutations $\{f^{-l}_e\circ f^l_j\}_{j=e}^{k-1}$. It is easy to see that the previous condition is also equivalent to that for any parity $l\in [0,r-1]$ the group $G^l$ \emph{stabilizes} $X$, i.e., for any $f\in G^l,f(X)=X.$

Assuming there is a set $X$ that satisfies this condition, we want to rebuild the $ep$ elements from the chosen $ep$ equations, i.e., the $ep$ equations with the $ep$ variables being solvable. A necessary condition is that each element in the erased column will appear at least once in the chosen zigzag sets (equations). parity $l\in [0,r-1]$ accesses its zigzag elements $f^l_e(X)$, and these zigzag sets contain the elements in rows $(f_i^l)^{-1} f_e^l(X)$  of the erased column $i\in [0,e-1]$. Hence the condition is equivalent to that for any erased column $i\in [0,e-1]$
\begin{equation*}
\cup_{l=0}^{r-1} (f_i^l)^{-1} f_e^l(X) = [0,p-1].
\end{equation*}
These two conditions are necessary for optimal rebuilding ratio. In addition, we need to make sure that the $ep$ equations are linearly independent, which depends on the coefficients in the linear combinations that created the zigzag elements. We summarize:\\
{\bf Sufficient and necessary conditions for optimal rebuilding ratio in $e$ erasures:}
There exists a set $X \subseteq [0,p-1]$ of size $|X|=ep/r$, such that
\begin{enumerate}
	\item For any parity node $l\in [0,e-1]$ the group $G^l$ stabilizes the set $X$, i.e., for any $g\in G^l$
	\begin{equation}
   		g(X)=X,
			\label{eq03171}
		\end{equation}
		where $G^l$ is generated by the set of permutations
		
		$\{f^{-l}_e\circ f^l_j\}_{j=e}^{k-1}.$
  \item For any erased column $i\in [0,e-1]$,
  	\begin{equation}
  		\cup_{l=0}^{r-1} (f_i^l)^{-1} f_e^l(X) = [0,p-1].
  		\label{eq03172}
		\end{equation}
  \item The $ep$ equations (zigzag sets) defined by the set $X$ are linearly independent.
\end{enumerate}

The previous discussion gave the condition for optimal rebuilding ratio in an MDS optimal update code with $e$ erasures in general. Next will interpret these conditions in the special case where the number of rows $p=r^m$, and the permutations are generated by $T=\{v_0,v_1,\dots,v_{k-1}\}$ $\subseteq$ $\mathbb{F}_r^m$ and Construction \ref{cnstr5}, i.e., $f_i^l(x)=x+l v_i$ for any $x \in [0,r^m-1]$.
Note that in the case of $r$ a prime $$G^1=G^2=...=G^{r-1},$$ and in that case we simply denote the group as $G$. The following theorem gives a simple characterization for sets that satisfy condition $1$.

\begin{thm}
Let $X\subseteq \mathbb{F}_r^m$ and $G$ defined above then $G$ stabilizes $X$, if and only if $X$ is a union of cosets of the subspace
\begin{equation}
 \label{eq03242}
Z=\spun\{v_{e+1}-v_e,\dots,v_{k-1}-v_e\}.
\end{equation}
\end{thm}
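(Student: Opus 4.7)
The plan is to identify the group $G$ explicitly as a group of translations and then translate the stabilization condition into the coset condition.

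First, I would compute the generators of $G$ explicitly. Since $f_i^l(x) = x + l v_i$ by Construction \ref{cnstr5}, for any $j \in [e,k-1]$ and any $l \neq 0$ (here we work with $G = G^1 = \cdots = G^{r-1}$ as noted in the text, using that $r$ is prime) we get
\begin{equation*}
f_e^{-l} \circ f_j^l (x) = (x + l v_j) - l v_e = x + l(v_j - v_e).
\end{equation*}
Thus every generator of $G$ is a translation by an element of the set $\{l(v_j - v_e) : 0 \le l \le r-1,\ e \le j \le k-1\}$. Since the composition of two translations $x \mapsto x+a$ and $x \mapsto x+b$ is the translation $x \mapsto x+a+b$, the group $G$ equals exactly the group of translations by vectors in the $\mathbb{F}_r$-linear span of $\{v_j - v_e : j = e,\dots,k-1\}$, which is precisely $Z$ as defined in \eqref{eq03242}. (The $j=e$ generator is the identity, so it contributes nothing.)

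Next I would translate the stabilization condition. The set $X$ is stabilized by $G$ iff $X + z = X$ for every $z \in Z$. For the $(\Leftarrow)$ direction, if $X$ is a union of cosets $\bigcup_{\alpha} (y_\alpha + Z)$ of the subgroup $Z$, then translation by any $z \in Z$ permutes the elements of each coset $y_\alpha + Z$ and hence fixes the union $X$ setwise. For the $(\Rightarrow)$ direction, suppose $X + z = X$ for all $z \in Z$. Pick any $x \in X$; then $x + z \in X$ for all $z \in Z$, so the entire coset $x + Z$ is contained in $X$. Therefore $X = \bigcup_{x \in X} (x + Z)$ is a union of cosets of $Z$.

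I do not expect any serious obstacle: the only thing one must be careful about is that the generators $f_e^{-l} \circ f_j^l$ for varying $l$ do not give anything beyond translations by the $\mathbb{F}_r$-span (as opposed to the $\mathbb{Z}$-span) of the differences $v_j - v_e$, but this is automatic since we work in $\mathbb{F}_r^m$ and $r$ is prime, so scalar multiples $l(v_j - v_e)$ for $l \in \mathbb{F}_r$ are already captured, and closure under addition gives the full span. Hence the characterization follows.
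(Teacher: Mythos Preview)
Your proof is correct and follows essentially the same approach as the paper's. Both arguments hinge on recognizing that $f_e^{-l}\circ f_j^l$ is the translation $x\mapsto x+l(v_j-v_e)$; you make the identification $G=\{x\mapsto x+z:z\in Z\}$ explicit up front and then argue both directions abstractly, while the paper leaves this implicit and, for the $(\Rightarrow)$ direction, writes $y$ concretely as a composition of generators applied to $x$ --- but the underlying idea is identical.
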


\begin{IEEEproof}
It is easy to check that any coset of $Z$ is stabilized by $G$, hence if $X$ is a union of cosets it is also a stabilized set. For the other direction let $x,y\in \mathbb{F}^m_r$ be two vectors in the same coset of $Z$, it is enough to show that if $x\in X$ then also $y\in X$. Since $y-x\in Z$ there exist $\alpha_1,...,\alpha_{k-1-e}\in [0,r-1]$ such that
$y-x=\sum_{i=1}^{k-1-e}\alpha_i(v_{e+i}-v_e).$ Since $f(X)=X$ for any $f\in G$ we get that $f(x)\in X$ for any $x\in X$ and  $f\in G$, hence
\begin{align*}y&=x+y-x\\
&=x+\sum_{i=1}^{k-1-e}\alpha_i(v_{e+i}-v_e)\\
&=f_e^{-\alpha_{k-1-e}}f_{k-1}^{\alpha_{k-1-e}}...f_e^{-\alpha_{1}}f_{e+1}^{\alpha_{1}}(x),
\end{align*}
for $f_e^{-\alpha_{k-1-e}}f_{k-1}^{\alpha_{k-1-e}}...f_e^{-\alpha_{1}}f_{e+1}^{\alpha_{1}}\in G$. So $y \in X$ and the result follows.
\end{IEEEproof}
\textbf{Remark:} For any set of vectors $S$ and $v,u\in S$, $$\spun\{S-v\}=\spun\{S-u\}.$$ Here $S-v=\{v_i-v|v_i\in S\}.$ Hence, the subspace $Z$ defined in the previous theorem does not depend on the choice of the vector $v_e$. By the previous theorem we interpret the {\bf necessary and sufficient conditions of an optimal code} as follows:\\
There exists a set $X \subseteq \mathbb{F}_r^m$ of size $|X|=er^{m-1}$, such that
\begin{enumerate}
	\item $X$ is a union of cosets of $$Z=\spun\{v_{e+1}-v_e,\dots,v_{k-1}-v_e\}.$$
	\item For any erased column $i\in [0,e-1]$,
  	\begin{equation}
  	\label{eq03241}
  		\cup_{l=0}^{r-1} (X+l(v_i-v_e))=\mathbb{F}_r^m.	
		\end{equation}
  \item The $er^m$ equations (zigzag sets) defined by the set $X$ are linearly independent.
\end{enumerate}

The following theorem gives a simple equivalent condition for conditions $1,2$.

\begin{thm} \label{thm03311}
There exists a set $X \subseteq \mathbb{F}_r^m$ of size $|X|=er^{m-1}$ such that conditions $1,2$ are satisfied if and only if
\begin{equation} \label{eq03312}
 v_i-v_e \notin Z,
\end{equation}
for any erased column $i\in [0,e-1].$
\end{thm}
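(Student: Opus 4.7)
My plan is to work in the quotient space $V = \mathbb{F}_r^m/Z$, where $Z$ is the subspace from \eqref{eq03242}. Write $d = \dim V = m - \dim Z$, and let $\bar{w}_i$ denote the image in $V$ of $v_i - v_e$. Observe that $X \subseteq \mathbb{F}_r^m$ is a union of cosets of $Z$ exactly when it is the preimage of some $\bar{X} \subseteq V$; in that case $|X|=er^{m-1}$ iff $|\bar{X}|=er^{d-1}$, and condition 2 is equivalent to $\cup_{l=0}^{r-1}(\bar{X}+l\bar{w}_i)=V$ for all $i\in[0,e-1]$. So the whole statement can be tested on $V$.

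For the forward direction ($\Rightarrow$) I would argue contrapositively. If $v_i - v_e \in Z$ for some $i$, then $\bar{w}_i=0$ in $V$, so $\cup_l(\bar{X}+l\bar{w}_i)=\bar{X}$, which has size $er^{d-1}<r^d=|V|$ whenever $e<r$, contradicting condition 2. The edge case $e=r$ is trivial since then $X=\mathbb{F}_r^m$ is forced and satisfies both conditions regardless, so the hypothesis is vacuous on that side.

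For the backward direction ($\Leftarrow$), assume each $\bar{w}_i$ is nonzero in $V$. I would construct $\bar{X}$ explicitly as a union of parallel affine hyperplanes via a linear functional. First pick $\phi\in V^*$ with $\phi(\bar{w}_i)\neq 0$ for every $i\in[0,e-1]$; such $\phi$ exists by pigeonhole, since each $\bar{w}_i$ forbids only the $r^{d-1}$ functionals in its annihilator, and the union of these $e$ annihilators has at most $er^{d-1}<r^d$ elements when $e<r$. Then set $\bar{X}=\phi^{-1}(\{0,1,\dots,e-1\})$, which has exactly $er^{d-1}$ elements. To verify condition 2, for any $\bar{v}\in V$ and any $i$, the map $l\mapsto \phi(\bar{v})-l\phi(\bar{w}_i)$ is a bijection on $\mathbb{F}_r$, so it attains some value in $\{0,\dots,e-1\}$, showing $\bar{v}\in\bar{X}+l\bar{w}_i$ for some $l$. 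Pulling $\bar{X}$ back to $\mathbb{F}_r^m$ gives the desired $X$.

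The main obstacle is recognizing that in the quotient condition 1 becomes automatic and condition 2 reduces to a covering condition on one-dimensional affine subspaces in the directions $\bar{w}_i$. Once this reduction is made, existence of $\phi$ is a straightforward pigeonhole, and surjectivity of $\phi$ along each line $\bar{v}+\langle\bar{w}_i\rangle$ immediately yields condition 2.
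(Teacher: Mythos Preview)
Your argument is correct. The forward direction matches the paper's: if $v_i-v_e\in Z$ then the translates by $l(v_i-v_e)$ do nothing to a union of $Z$-cosets, so condition~2 forces $X=\mathbb{F}_r^m$, contradicting $|X|=er^{m-1}$ when $e<r$.

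For the backward direction you take a genuinely different route. The paper argues combinatorially: since $v_i-v_e\notin Z$, translation by $v_i-v_e$ permutes the cosets of $Z$ in cycles of length $r$; for each $i$ it picks one coset out of every cycle (a transversal), takes the union over all $i$, and then pads with extra cosets of $Z$ if collisions made $|X|$ too small. Your construction instead passes to the quotient $V=\mathbb{F}_r^m/Z$, uses a union bound to find a functional $\phi\in V^*$ that is nonzero on every $\bar w_i$, and takes $\bar X=\phi^{-1}(\{0,\dots,e-1\})$. This is cleaner and yields an $X$ with extra structure: it is a union of $e$ cosets of the hyperplane $X_0=(u)^\perp$ (where $u\in\mathbb{F}_r^m$ represents $\phi$), which is precisely the object the paper constructs separately in the subsequent Theorem~\ref{thm0414}. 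So your single argument simultaneously proves Theorem~\ref{thm03311} and the existence half of Theorem~\ref{thm0414}, whereas the paper's combinatorial transversal argument does not by itself produce a hyperplane $X_0$ and hence requires the additional step in Theorem~\ref{thm0414} to get the structure used later in Theorem~\ref{thm0413}.
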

\begin{IEEEproof}
Assume conditions $1,2$ are satisfied. If $v_i-v_e\in Z$ for some erased column $i\in [0,e-1]$ then
$X=\cup_{l=0}^{r-1} (X+l(v_i-v_e))=\mathbb{F}_r^m,	$
which is a contradiction to $X\subsetneq \mathbb{F}_r^m.$
On the other hand, If \eqref{eq03312} is true, then $v_i-v_e$ can be viewed as a permutation that acts on the cosets of $Z$. The number of cosets of $Z$ is $r^m/|Z|$ and this permutation (when it is written in cycle notation) contains  $r^{m-1}/|Z|$ cycles, each with length $r$. For each $i\in [0,e-1]$ choose $r^{m-1}/|Z|$ cosets of $Z$, one from each cycle of the permutation $v_i-v_e.$ In total $er^{m-1}/|Z|$ cosets are chosen for the $e$ erased nodes.
Let $X$ be the union of the cosets that were chosen. It is easy to see that $X$ satisfies condition $2$. If $|X|<er^{m-1}$ (Since there might be cosets that were chosen more than once) add arbitrary $(er^{m-1}-|X|)/|Z|$ other cosets of $Z$, and also condition $1$ is satisfied.
\end{IEEEproof}

In general, if \eqref{eq03312} is not satisfied, the code does not have an optimal rebuilding ratio.
However we can define
\begin{equation}
Z=\spun\{v_i-v_e\}_{i\in I},
\label{eq04131}
\end{equation}
where we assume w.l.o.g. $e\in I$ and  $I\subseteq [e,k-1]$ is a maximal subset of surviving nodes that satisfies for any erased node $j\in [0,e-1], v_j-v_e\notin Z.$
Hence from now on we assume that $Z$ is defined by a subset of surviving nodes $I$. This set of surviving nodes will have  an optimal rebuilding ratio (see Corollary \ref{cor0401}), i.e., in the rebuilding of columns $[0,e-1]$, columns $I$ will access a portion of $e/r$ of their elements. The following theorem gives a sufficient condition for the $er^m$ equations defined by the set $X$ to be solvable linear equations.

\begin{thm}\label{thm0413}
Suppose that there exists a subspace $X_0$ that contains $Z$ such that for any erased node $i\in [0,e-1]$
\begin{equation} \label{eq04132}
X_0\oplus \spun\{v_i-v_e\} = \mathbb{F}_r^m,
\end{equation}
then the set $X$ defined as an union of some $e$ cosets of $X_0$ satisfies conditions $1,2$ and $3$ over a field large enough.
\end{thm}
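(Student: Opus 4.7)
The plan is to verify the three conditions in turn. Conditions 1 and 2 follow almost directly from the hypothesis; the hard part is condition 3, the linear independence of the $er^m$ zigzag equations. To handle it, I plan to view the coefficient determinant as a polynomial in the zigzag coefficients and reduce non-vanishing to the existence of a perfect matching in an auxiliary bipartite graph, which in turn follows because the graph is $e$-regular.

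For condition 1, note that $Z \subseteq X_0$ implies every coset of $X_0$ is a disjoint union of cosets of $Z$; hence $X$, being a union of cosets of $X_0$, is automatically a union of cosets of $Z$. For condition 2, fix $i \in [0,e-1]$. The direct sum $X_0 \oplus \spun\{v_i - v_e\} = \mathbb{F}_r^m$ forces $\dim X_0 = m-1$ (so $|X| = er^{m-1}$ as required) and tells us that for any coset $C$ of $X_0$, the translates $\{C + l(v_i - v_e)\}_{l=0}^{r-1}$ run through all $r$ cosets of $X_0$ and partition $\mathbb{F}_r^m$. Since $X$ contains at least one coset of $X_0$, $\cup_{l=0}^{r-1}(X + l(v_i - v_e)) = \mathbb{F}_r^m$.

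For condition 3, arrange the $er^m$ equations as the rows of a square matrix $M$ of order $er^m$ whose rows are indexed by $(l, x) \in [0, r-1] \times X$ and whose columns are indexed by the unknowns $(y, j) \in \mathbb{F}_r^m \times [0, e-1]$; then $M_{(l, x), (y, j)} = \beta^l_{y, j}$ when $x = y + l(v_j - v_e)$ and zero otherwise. Treating the $\beta^l_{y,j}$ as indeterminates, each one appears in at most one entry of $M$ (namely the entry in row $(l, y + l(v_j-v_e))$ and column $(y, j)$, provided that row exists). Hence in the Leibniz expansion $\det M = \sum_\sigma \sgn(\sigma) \prod_i M_{i, \sigma(i)}$, distinct nonzero terms produce distinct monomials with coefficients $\pm 1$ and no cancellation occurs, so $\det M \not\equiv 0$ as a polynomial iff the bipartite graph $B$ whose edges are the nonzero positions of $M$ admits a perfect matching. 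A standard Schwartz--Zippel argument will then yield an assignment of the $\beta^l_{y,j}$ making $\det M \neq 0$ over any field of size exceeding $\deg \det M = er^m$, so it suffices to produce one matching.

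The key claim is that $B$ is $e$-regular, and this is where the hypothesis is used again. Each row $(l, x)$ has exactly $e$ nonzero entries, one per $j \in [0, e-1]$. For each column $(y, j)$, the direct sum $X_0 \oplus \spun\{v_j - v_e\} = \mathbb{F}_r^m$ implies that the $r$ elements $\{y + l(v_j - v_e) : l \in [0, r-1]\}$ lie in $r$ distinct cosets of $X_0$ and therefore form a complete transversal of the $r$ cosets. Since $X$ is a union of exactly $e$ of the $r$ cosets of $X_0$, exactly $e$ of these transversal points lie in $X$, giving column $(y, j)$ degree $e$. A regular bipartite graph always admits a perfect matching (Hall's theorem, or equivalently K\"onig's edge coloring theorem), which completes the proposed proof.
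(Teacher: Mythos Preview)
Your proof is correct and follows essentially the same route as the paper's: conditions 1 and 2 are handled identically, and for condition 3 both you and the paper set up the $er^m\times er^m$ coefficient matrix, observe it is $e$-regular on rows and columns, invoke Hall's theorem to get a perfect matching (hence a nonzero monomial in $\det M$), and finish with a polynomial-method argument. The only cosmetic differences are that you make the no-cancellation step explicit (each indeterminate $\beta^l_{y,j}$ sits in a unique matrix entry) and you cite Schwartz--Zippel where the paper cites the Combinatorial Nullstellensatz; both give the ``field large enough'' conclusion.
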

\begin{IEEEproof}
Condition $1$ is trivial.
Note that by \eqref{eq04132}, $l(v_i-v_e) \notin X_0$ for any $l\in [1,r-1]$ and $i\in [0,e-1]$, hence $\{X_0+l(v_i-v_e)\}_{l\in [0,r-1]}$ is the set of cosets of $X_0$. Let $X_j=X_0+j(v_i-v_e)$
be a coset of $X_0$ for some $i \in [0,e-1]$ and suppose $X_j \subset X$. Now let us check condition 2:
\begin{align}
\cup_{l=0}^{r-1}(X+l(v_i-v_e))& \supseteq \cup_{l=0}^{r-1}(X_j+l(v_i-v_e)) \nonumber\\
&= \cup_{l=0}^{r-1}(X_0+j(v_i-v_e)+l(v_i-v_e)) \nonumber \\
&= \cup_{l=0}^{r-1}(X_0+(j+l)(v_i-v_e)) \nonumber \\
& = \cup_{t=0}^{r-1}(X_0+t(v_i-v_e)) \label{05192}\\
&=\mathbb{F}_r^m. \label{05193}
\end{align}
\eqref{05192} holds since $j+l$ is computed mod $r$.
So condition $2$ is satisfied. Next we prove condition $3$.
There are $er^m$ unknowns and $er^m$ equations. Writing the equations in a matrix form we get $AY=b$, where $A$ is an $er^m \times er^m$ matrix. $Y,b$ are vectors of length $er^m$, and $Y=(y_1,...,y_{er^m})^T$ is the unknown vector. The matrix $A=(a_{i,j})$ is defined as $a_{i,j}=x_{i,j}$ if the unknown $y_j$ appears in the $i$-th equation, otherwise $a_{i,j}=0$. Hence we can solve the equations if and only if there is assignment for the indetermediates $\{x_{i,j}\}$ in the matrix $A$ such that $\det(A)\neq 0$.
By \eqref{05193}, accessing rows corresponding to any coset $X_j$ will give us equations where each unknown appears exactly once. Since $X$ is a union of $e$ cosets, each unknown appears $e$ times in the equations. Thus each column in $A$ contains $e$ indeterminates. Moreover, each equation contains one unknown from each erased node, thus any row in $A$ contains $e$ indeterminates.
Then by Hall's Marriage Theorem \cite{Hall} we conclude that there exists a permutation $f$ on the integers $[1,er^m]$ such that $$\prod_{i=1}^{er^m}a_{i,f(i)}\neq 0.$$ Hence the polynomial $\det(A)$ when viewed as a symbolic polynomial, is not the zero polynomial, i.e., $$\det(A)=\sum_{f\in S_{er^m}}\sgn(f)\prod_{i=1}^{er^m}a_{i,f(i)}\neq 0.$$ By Theorem \ref{polynomial-method} we conclude that there is an assignment from a field large enough for the indeterminates such that $\det(A)\neq 0$, and the equations are solvable. Note that this proof is for a specific set of erased nodes. However if \eqref{eq04132} is satisfied for any set of $e$ erasures, multiplication of all the nonzero polynomials $\det(A)$ derived for any set of erased nodes is again a nonzero polynomial and by the same argument there is an assignment over a field large enough such that any of the  matrices $A$ is invertible, and the result follows.
\end{IEEEproof}

In order to use Theorem \ref{thm0413}, we need to find a subspace $X_0$ as in \eqref{eq04132}. The following theorem shows that such a subspace always exists, moreover it gives an explicit construction of it.

\begin{thm}
\label{thm0414}
Suppose $1 \le e<r$ erasures occur.
Let $Z$ be defined by \eqref{eq04131} and $v_i-v_e \notin Z$ for any erased node $i \in [0,e-1]$. Then there exists $u\perp Z$ such that for any $i \in [0,e-1]$,
\begin{equation}\label{eq04144}
u \cdot (v_i-v_e) \neq 0.
\end{equation}
Moreover the orthogonal subspace $X_0=(u)^\perp$ satisfies \eqref{eq04132}.
\end{thm}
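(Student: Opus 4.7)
The plan is to use a simple counting/union-bound argument in the dual space $Z^\perp$, leveraging the hypothesis $e<r$.

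First, I would reformulate the required condition. Consider, for each erased node $i \in [0,e-1]$, the linear functional $\phi_i : Z^\perp \to \mathbb{F}_r$ defined by $\phi_i(u) = u \cdot (v_i - v_e)$. I claim each $\phi_i$ is \emph{nonzero} on $Z^\perp$. Indeed, if $\phi_i$ vanished on all of $Z^\perp$, that would mean $v_i - v_e \in (Z^\perp)^\perp = Z$, contradicting the hypothesis $v_i - v_e \notin Z$. Therefore $\ker(\phi_i) \cap Z^\perp$ is a proper subspace of $Z^\perp$, so it is a hyperplane of $Z^\perp$ of size $|Z^\perp|/r$.

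Next I would apply the union bound. The set of ``bad'' vectors in $Z^\perp$ — those violating \eqref{eq04144} for at least one erased $i$ — has size at most
\[
\sum_{i=0}^{e-1} |\ker(\phi_i)\cap Z^\perp| \;\le\; e\cdot \frac{|Z^\perp|}{r} \;<\; |Z^\perp|,
\]
using precisely the assumption $e < r$. Hence there exists $u \in Z^\perp$ with $u \cdot (v_i-v_e) \neq 0$ for every erased $i$, which is exactly \eqref{eq04144}. Note that necessarily $u \neq 0$.

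Finally I would verify that $X_0 = u^\perp$ works. Since $u \in Z^\perp$, every $z \in Z$ satisfies $u\cdot z = 0$, so $Z \subseteq X_0$ as required. Because $u \neq 0$, the hyperplane $X_0$ has dimension $m-1$. For any erased column $i \in [0,e-1]$, the condition $u\cdot(v_i-v_e)\neq 0$ means $v_i - v_e \notin X_0$, so $X_0 \cap \mathrm{span}\{v_i-v_e\} = \{0\}$. A dimension count $(m-1)+1 = m$ then gives
\[
X_0 \oplus \mathrm{span}\{v_i - v_e\} = \mathbb{F}_r^m,
\]
which is \eqref{eq04132}. The only subtle point — and really the heart of the argument — is step one, where one must correctly identify $(Z^\perp)^\perp = Z$ to convert ``$v_i - v_e \notin Z$'' into ``$\phi_i$ is not identically zero on $Z^\perp$''; everything else is a counting argument that depends critically on the strict inequality $e<r$.
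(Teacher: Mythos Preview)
Your proposal is correct and follows essentially the same approach as the paper: both show that for each erased $i$ the set $\{u\in Z^\perp : u\cdot(v_i-v_e)=0\}$ is a hyperplane of $Z^\perp$ (using $(Z^\perp)^\perp=Z$ and $v_i-v_e\notin Z$), apply the union bound with $e<r$ to find a good $u$, and then take $X_0=u^\perp$. The only cosmetic difference is that the paper fixes a basis $u_1,\dots,u_t$ of $Z^\perp$ and counts solutions of $\sum_j x_j u_j\cdot(v_i-v_e)=0$, whereas you phrase the same thing via the linear functionals $\phi_i$; you also make explicit the containment $Z\subseteq X_0$, which the paper leaves implicit.
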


\begin{IEEEproof}
First we will show that such vector $u$ exists. Let $u_1,...u_t$ be a basis for $(Z)^\perp$ the orthogonal subspace of $Z.$
Any vector $u$ in $(Z)^\perp$ can be written as $u = \sum_{j=1}^t x_j u_j$ for some $x_j$'s.
We claim that for any $i \in [0,e-1]$ there exists $j$ such that $u_j \cdot(v_i-v_e) \neq 0$. Because otherwise, $(Z)^{\perp}=\spun\{u_1,\dots,u_t\} \perp v_i-v_e$, which means $v_i-v_e \in Z$ and reaches a contradiction. Thus the number of solutions for the linear equation
$$\sum_{j=1}^{t}x_ju_j \cdot (v_i-v_e)=0$$
is $r^{t-1}$, which equals the number of $u$ such that $u\cdot (v_i-v_e)=0$.
 Hence by the union bound there are at most $er^{t-1}$ vectors $u$ in $(Z)^\perp$ such that $u\cdot (v_i-v_e)=0$ for some erased node $i\in [0,e-1]$. Since $|(Z)^\perp|=r^t>er^{t-1}$ there exists $u$ in $(Z)^\perp$ such that for any erased node $i\in [0,e-1]$, $$u\cdot (v_i-v_e)\neq 0.$$ Define $X_0=(u)^\perp$, and note that for any erased node $i\in [0,e-1],v_i-v_e\notin X_0$, since $u\cdot (v_i-v_e)\neq 0$ and $X_0$ is the orthogonal subspace of $u$. Moreover, since $X_0$ is a hyperplane we conclude that
$$X_0\oplus \spun\{v_i-v_e\}=\mathbb{F}_r^m,$$
and the result follows.
\end{IEEEproof}

Theorems \ref{thm0413} and \ref{thm0414} give us {\bf an algorithm to rebuild multiple erasures:}
\begin{enumerate}
\item
Find $Z$ by \eqref{eq04131} satisfying \eqref{eq03312}.
\item
Find $u \perp Z$ satisfying \eqref{eq04144}. Define $X_0=(u)^{\perp}$ and $X$ as a union of $e$ cosets of $X_0$.
\item
Access rows $f_e^l(X)$ in parity $l \in [0,r-1]$ and all the corresponding information elements.
\end{enumerate}
We know that under a proper selection of coefficients the rebuilding is possible.

In the following we give two examples of rebuilding using this algorithm.
The first example shows an optimal rebuilding for any set of $e$ node erasures. As mentioned above, the optimal rebuilding is achieved since \eqref{eq03312} is satisfied, i.e., $I=[e,k-1]$.
\begin{xmpl} \label{xmpl0331}
Let $T=\{v_0,v_1,\dots,v_m\}$ be a set of vectors that contains an orthonormal basis of $\mathbb{F}_r^m$ together with the zero vector. Suppose columns $[0,e-1]$ are erased. Note that in that case $I=[e,m]$ and $Z$ is defined as in \eqref{eq04131}. Define $$u=\sum_{j=e}^mv_j,$$
and $X_0=(u)^\perp$.
When $m=r$ and $e=r-1$, modify $u$ to be $$u=\sum_{i=1}^{m}v_i.$$
It is easy to check that $u\perp Z$ and for any erased column $i\in [0,e-1],u\cdot (v_i-v_e)=-1$. Therefore by Theorems \ref{thm0413} and \ref{thm0414} a set $X$ defined as a union of an arbitrary $e$ cosets of $X_0$ satisfies conditions $1,2$ and $3$, and optimal rebuilding is achieved.
\end{xmpl}

In the example of Figure \ref{fig5}, we know that the vectors generating the permutations are the standard basis (and thus are orthonormal basis) and the zero vector. When columns $C_0,C_1$ are erased, $u=e_2$ and $X_0=(u)^{\perp}=\spun\{e_1\}=\{0,3,6\}$. Take $X$ as the union of $X_0$ and its coset $\{1,4,7\}$, which is the same as Example \ref{xmpl0503}. One can check that each erased element appears exactly 3 times in the equations and the equations are solvable in $\mathbb{F}_7$. Similarly, the equations are solvable for other $2$ systematic erasures.

Before we proceed to the next example, we give an upper bound for the rebuilding ratio using Theorem \ref{thm0413} and a set of nodes $I.$
\begin{cor} \label{cor0401}
Theorem \ref{thm0413} requires rebuilding ratio at most
$$\frac{e}{r}+\frac{(r-e)(k-|I|-e)}{r(k+r-e)}$$
\end{cor}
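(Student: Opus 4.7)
The plan is to compute the rebuilding ratio directly by tallying, node by node, how many of the $r^m$ entries each surviving column must read, then dividing by the total number of entries $(k+r-e)r^m$ in the surviving array.

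First I would invoke Theorem \ref{thm0413} with the set $X$ equal to a union of $e$ cosets of the hyperplane $X_0$ provided there. The theorem guarantees $X$ has size $er^{m-1}$, that the $er^m$ associated equations are solvable over a large enough field, and that parity node $l$ rebuilds by accessing its zigzag elements in rows $f_e^l(X)$. Since $f_e^l$ is a bijection on $[0,r^m-1]$, every parity node accesses exactly $er^{m-1}$ elements, a fraction $e/r$ of its column. Similarly, because condition (\ref{eq03171}) is enforced only for the subgroup generated by $\{f_e^{-l}f_j^l\}_{j\in I}$, each surviving systematic node $j\in I$ accesses precisely its elements in rows $X$, again a fraction $e/r$. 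The $k-e-|I|$ surviving systematic nodes outside $I$ do not fit the stabilizer condition, so we conservatively let them contribute their full column, i.e.\ fraction $1$.

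Summing these contributions gives total accessed elements
\begin{equation*}
|I|\cdot\frac{e}{r}\,r^m \;+\; (k-e-|I|)\,r^m \;+\; r\cdot\frac{e}{r}\,r^m \;=\; \left(\frac{e|I|}{r}+k-|I|\right)r^m,
\end{equation*}
and dividing by $(k+r-e)r^m$ yields the ratio $\bigl(rk-(r-e)|I|\bigr)/\bigl(r(k+r-e)\bigr)$. A short algebraic rearrangement — writing $rk-(r-e)|I|=e(k+r-e)+(r-e)(k-|I|-e)$ — rewrites this as $\frac{e}{r}+\frac{(r-e)(k-|I|-e)}{r(k+r-e)}$, which is the claimed bound.

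There is no real obstacle here: the only subtle point is justifying that nodes outside $I$ really can be bounded by the trivial fraction $1$ (they are counted as if they access everything). This is an upper bound on the optimal rebuilding cost achievable by this particular algorithm, not a statement about what is information-theoretically necessary, so taking full access for those nodes is automatically valid. Everything else is bookkeeping once Theorem \ref{thm0413} has supplied the set $X$ and the access pattern.
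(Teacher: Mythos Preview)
Your proposal is correct and follows essentially the same approach as the paper: count $e/r$ of each column for the nodes in $I$ and the $r$ parity nodes, bound the remaining $k-|I|-e$ surviving systematic columns by the trivial fraction $1$, then simplify. The paper's proof is terser but makes the identical tally and algebraic rearrangement.
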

\begin{IEEEproof}
By Theorem \ref{thm0413}, the fraction of accessed elements in columns $I$ and the parity columns is $e/r$ of each column. Moreover, the accessed elements in the rest columns are at most an entire column. Therefore, the ratio is at most
$$\frac{\frac{e}{r}(|I|+r)+(k-|I|-e)}{k+r-e}
= \frac{e}{r}+\frac{(r-e)(k-|I|-e)}{r(k+r-e)}$$
and the result follows.
\end{IEEEproof}

Note that as expected when $|I|=k-e$ the rebuilding ratio is optimal, i.e. $e/r$.
In the following example the code has $O(m^2)$ columns. The set $I$ does not contain all the surviving systematic nodes, hence the rebuilding is not optimal but is at most $\frac{1}{2}+O(\frac{1}{m})$.

\begin{xmpl}
 \label{xmpl0401}
Suppose $2|m$. Let $T=\{v=(v_1,\dots,v_m):\|v\|_1=2,v_i=1,v_j=1, \text{ for some } i \in [1,m/2], j \in [m/2+1,m]\}\subset \mathbb{F}_2^m$ be the set of vectors generating the code with $r=2$ parities, hence the number of systematic nodes is $|T|=k=m^2/4$. Suppose column $w=(w_1,\dots,w_m)$, $w_1=w_{m/2+1}=1$ is erased. Define the set $I=\{v\in T: v_1=0\},$ and
$$Z=\spun\{v_i-v_e|i\in I\}$$
for some $e \in I$. Thus $|I|=m(m-2)/4$.
It can be seen that $Z$ defined by the set $I$ satisfies \eqref{eq03312}, i.e., $w-v_e\notin Z$  since the first coordinate of a vector in $Z$ is always $0$, as oppose to $1$ for the vector $w-v_e$. Define $u=(0,1,...,1)$ and $X_0=(u)^\perp.$ It is easy to check that $u\perp Z$ and   $u\cdot (w-v_e)= 1\neq 0.$ Hence, the conditions in Theorem \ref{thm0414} are satisfied and rebuilding can be done using $X_0$.
Moreover by Corollary \ref{cor0401} the rebuilding ratio is at most
$$\frac{1}{2}+\frac{1}{2}\frac{(m/2)-1}{(m^2/4)+1} \approx \frac{1}{2}+\frac{1}{m},$$
which is a little better than Theorem \ref{thm:k/3} in the constants. Note that by similar coefficients assignment of Construction \ref{k/3}, we can use a field of size $5$ or $8$ to assure the code will be an MDS code.
\end{xmpl}

\subsection{Minimum Number of Erasures with Optimal Rebuilding}
Next we want to point out a surprising phenomena.
We say that a set of vectors $S$ satisfies \emph{property} $e$ for $e\geq 1$ if for any subset $A\subseteq S $ of size $e$ and any $u\in A$,
$$u-v\notin \spun\{w-v:w\in S\backslash A\},$$
where $v\in S\backslash A$. Recall that by Theorem \ref{thm03311} any set of vectors that generates a code $\cC$ and can rebuild optimally any $e$ erasures, satisfies  property $e$. The following theorem shows that this property is monotonic, i.e., if $S$ satisfies property $e$ then it also satisfies property $a$ for any $e\leq a \leq |S|.$

\begin{thm}
Let $S$ be a set of vectors that satisfies property $e$, then it also satisfies  property $a$, for any $e \leq a \leq |S|$.
\end{thm}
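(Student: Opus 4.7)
\begin{IEEEproof}[Proof proposal]
The plan is to recast property $e$ into an affine-geometric form, after which the monotonicity reduces to a trivial set containment.

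First, I would invoke the remark appearing right before Theorem \ref{thm03311}, which states that for any set of vectors $S'$ and any two base points $v,v' \in S'$, $\spun\{w - v : w \in S'\} = \spun\{w - v' : w \in S'\}$. Applying this with $S' = S \setminus A$, the subspace $Z_A := \spun\{w - v : w \in S \setminus A\}$ is well-defined independently of the choice of $v \in S \setminus A$, and the clause ``$u - v \notin Z_A$ for $v \in S \setminus A$'' is equivalent to ``$u \notin v + Z_A$,'' i.e., $u$ does not lie in the affine hull of $S \setminus A$. Thus property $e$ is exactly the statement: for every $e$-subset $A \subseteq S$ and every $u \in A$, the vector $u$ is not an affine combination of the elements of $S \setminus A$.

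With this reformulation in hand, the argument is essentially a one-liner. Given $a \geq e$, take any $A' \subseteq S$ with $|A'| = a$ and any $u \in A'$. Since $a \geq e$ and $u \in A'$, I can extend $\{u\}$ to an $e$-subset $A$ satisfying $u \in A \subseteq A'$. Then $S \setminus A' \subseteq S \setminus A$, and consequently the affine hull of $S \setminus A'$ is contained in the affine hull of $S \setminus A$. Property $e$ applied to the pair $(A,u)$ yields $u \notin \mathrm{aff}(S \setminus A)$, and therefore $u \notin \mathrm{aff}(S \setminus A')$, which, translating back through the remark, is precisely property $a$ at $(A',u)$.

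There is no real obstacle here; the one thing to justify carefully is the equivalence between the algebraic wording in the definition of property $e$ and its affine-geometric reformulation, which is immediate from the base-point remark already established in the paper. The boundary case $a = |S|$, where $S \setminus A' = \emptyset$, is handled vacuously, since the defining condition quantifies over $v \in S \setminus A'$. Conceptually the result is natural: enlarging $A$ removes more vectors from $S$, which shrinks the affine hull of the leftover and hence makes it easier for elements of $A$ to stay outside of it.
\end{IEEEproof}
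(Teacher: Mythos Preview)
Your proposal is correct and takes essentially the same approach as the paper: both arguments shrink the erased set $A'$ of size $a$ down to an $e$-subset $A$ containing $u$, then use the containment $\spun\{w-v:w\in S\setminus A'\}\subseteq\spun\{w-v:w\in S\setminus A\}$ to conclude. The only cosmetic differences are that the paper argues by contradiction and goes one step at a time (from $e$ to $e{+}1$), whereas you reformulate via affine hulls and jump directly from $e$ to $a$.
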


\begin{IEEEproof}
Let $A\subseteq S,|A|=e+1$ and assume to the contrary that $u-v\in \spun\{w-v:w\in S\backslash A\}$ for some $u\in A$ and $v\in S\backslash A$. $|A|\geq 2$ hence there exists $x\in A\backslash u$. It is easy to verify that $u-v\in \spun\{w-v:w\in S\backslash A^*\}$, where $A^*=A\backslash x$ and $|A^*|=e$ which contradicts the property $e$ for the set $S$.
\end{IEEEproof}
Hence, from the previous theorem we conclude that a code $\cC$ that can rebuild optimally $e$ erasures, is able to rebuild optimally any number of erasures greater than $e$ as well. However, as pointed out  already there are codes with $r$ parities that can not rebuild optimally from some $e<r$ erasures. Therefore, one might expect to find a code $\cC$ with parameter $e^* \ge 1$ such that it can rebuild optimally $e$ erasures \emph{only} when $e^*\leq e\leq r$. For example, for $r=3,m=2$ let $\cC$ be the code constructed by the vectors $\{0,e_1,e_2,e_1+e_2\}$. We know that any code with more than $3$ systematic nodes can not rebuild one erasure optimally, since the size of a family of orthogonal permutations over the integers $[0,3^2-1]$ is at most $3$. However, one can check that for any two erased columns, the conditions in Theorem \ref{thm0413} are satisfied hence the code can rebuild optimally for any $e=2$ erasures and we conclude that  $e^*=2$ for this code.

The phenomena that some codes has a threshold parameter $e^*$, such that \emph{only} when the number of erasures $e$ is at least as the threshold $e^*$ then the code can rebuild optimally, is a bit counter intuitive and surprising. This phenomena gives rise to another question. We know that for a code constructed with vectors from $\mathbb{F}_r^m$, the maximum number of systematic columns for optimal rebuilding of $e=1$ erasures is $m+1$ (Theorem \ref{thm0519}). Can the number of systematic columns in a code with an optimal rebuilding of $e>1$ erasures be increased? The previous example shows a code with $4$ systematic columns can rebuild optimally any $e=2$ erasures.
But Theorem \ref{thm0519} shows that when $r=3,m=2$, optimal rebuilding for $1$ erasure implies no more than $3$ systematic columns.
 Hence the number of systematic columns is increased by at least $1$ compared to codes with $9$ rows and optimal rebuilding of $1$ erasure. The following theorem gives an upper bound for the maximum systematic columns in a code that rebuilds optimally any $e$ erasures.

\begin{thm}
Let $\cC$ be a code constructed by Construction \ref{cnstr5} and vectors from $\mathbb{F}_r^m$. If $\cC$ can rebuild optimally any $e$ erasures, for some $1\le e<r$, then  the number of systematic columns $k$ in the code satisfies
$$k \le m+e.$$
\end{thm}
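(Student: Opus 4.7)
\begin{IEEEproof}[Proof Proposal]
The plan is to reduce the hypothesis to a clean linear-algebraic statement about a single translated set of vectors, and then use a basis-extension argument. Let $T=\{v_0,\dots,v_{k-1}\}\subseteq \mathbb{F}_r^m$ be the set of vectors generating $\cC$. By Theorem \ref{thm03311}, since $\cC$ can rebuild optimally any $e$ erasures, $T$ satisfies property $e$: for every $A\subseteq T$ with $|A|=e$, every $u\in A$, and every $v\in T\setminus A$,
$$u-v\notin \spun\{w-v:w\in T\setminus A\}.$$

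First I would fix any $v\in T$ and translate. Set $T'=T-v$ (which contains $0$) and $T^{*}=T'\setminus\{0\}$, so $|T^{*}|=k-1$. The condition above, rewritten in terms of $T'$ and $T^{*}$, becomes the following: for every subset $B\subseteq T^{*}$ with $|B|=e$ and every $u\in B$,
$$u\notin \spun(T^{*}\setminus B),$$
where I use that $0$ lies in every linear span so $\spun(T'\setminus A')=\spun(T^{*}\setminus B)$ for $B=A'=(A-v)\subseteq T^{*}$. (As a sanity check, the case $e=1$ says that every element of $T^{*}$ is independent from all the others, i.e.\ $T^{*}$ is linearly independent, recovering Theorem \ref{thm0519}.)

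Next I would perform the basis extension. Let $V=\spun(T^{*})\subseteq \mathbb{F}_r^m$ and let $d=\dim V\le m$. Since $T^{*}$ spans $V$, I can extract a basis $I\subseteq T^{*}$ of $V$ with $|I|=d$. Suppose for contradiction that $|T^{*}|\ge m+e$, so $|T^{*}\setminus I|\ge m+e-d\ge e$. Then I can pick $B\subseteq T^{*}\setminus I$ with $|B|=e$. For such $B$ we have $T^{*}\setminus B\supseteq I$, hence
$$\spun(T^{*}\setminus B)\supseteq \spun(I)=V\supseteq T^{*}\supseteq B,$$
contradicting the translated property $e$ applied to this $B$. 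Therefore $|T^{*}|\le d+e-1\le m+e-1$, and consequently
$$k=|T^{*}|+1\le m+e,$$
as claimed.

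The only place that requires any care is the very first step, the translation of property $e$ into a statement about $T^{*}$, because one must check that choosing $v\in T$ as the translation origin is harmless (the hypothesis is symmetric in the roles of elements of $T\setminus A$, and only the span matters, which is translation-invariant once $0$ is present). After that, the basis-extension contradiction is immediate, so I do not anticipate a genuine obstacle; the main conceptual point is simply recognizing that property $e$ is exactly the matroid-theoretic statement that removing any $e$ elements of $T^{*}$ still leaves their complement with a span avoiding them.
\end{IEEEproof}
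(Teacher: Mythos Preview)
Your proof is correct and follows essentially the same approach as the paper: translate by a fixed $v\in T$ to obtain $T^{*}=\{v_i-v\}_{i\neq *}$, invoke Theorem \ref{thm03311} to get the linear-algebraic obstruction (property $e$), and then bound $|T^{*}|$ by $m+e-1$. The only difference is cosmetic---the paper derives the bound via a minimum dependent subset (circuit) of $T^{*}$ of size $s\le m+1$ and argues $k-e\le s-1$, whereas you extract a basis $I$ of $\spun(T^{*})$ and locate $e$ vectors outside it; both are standard ways to witness the same linear dependence, and your version is arguably a touch cleaner.
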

\begin{IEEEproof}
Consider a code with length $k$ and generated by vectors $v_0,v_1,\dots,v_{k-1}.$ If these vectors are linearly independent then $k\leq m$ and we are done. Otherwise they are dependent.
Suppose $e$ columns are erased,  $1 \le e<r$. Let $v_{e}$ be a surviving column. Consider a new set a of vectors: $T=\{v_i-v_{e}: i \in [0,k-1],i\neq e\}.$
We know that the code can rebuild optimally only if \eqref{eq03312} is satisfied for all possible $e$ erasures. Thus for any $i \neq e$, $i \in [0,k-1],$ if column $i$ is erased and column $e$ is not, we have
$v_i-v_e \notin Z$ and thus $v_i-v_e \neq 0$.
So every vector in $T$ is nonzero.
Let $s$ be the minimum number of dependent vectors in $T$, that is, the minimum number of vectors in $T$ such that they are dependent.
For nonzero vectors, we have $s \ge 2$.
Say $\{v_{e+1}-v_{e},v_{e+2}-v_e,\dots,v_{e+s}-v_e\}$ is a minimum dependent set of vector. Since any $m+1$ vectors are dependent in $\mathbb{F}_r^m$,
$$s \le m+1.$$
We are going to show $k-e \le s-1$.
Suppose to the contrary that the number of remaining columns satisfies $k-e \ge s$ and $e$ erasures occur. When column $v_{e+s}$ is erased and the $s$ columns $\{v_e,v_{e+1},\dots,v_{e+s-1}\}$ are not, we should be able to rebuild optimally.
However since we chose a dependent set of vectors, $v_{e+s}-v_e$ is a linear combination of $\{v_{e+1}-v_{e},v_{e+2}-v_e,\dots,v_{e+s-1}-v_e\}$, whose span is contained in $Z$ in \eqref{eq03312}. Hence \eqref{eq03312} is violated and we reach a contradiction. Therefore,
$$k-e \le s-1 \le m.$$
\end{IEEEproof}

Notice that this upper bound is tight. For $e=1$ we already gave codes with optimal rebuilding of $1$ erasure and $k=m+1$ systematic columns. Moreover, for $e=2$ the code already presented in this section and constructed by the vectors $0,e_1,e_2,e_1+e_2$, reaches the upper bound with $k=4$ systematic columns.

\subsection{Generalized Rebuilding Algorithms}

The rebuilding algorithms presented in Constructions \ref{cnstr1},\ref{cnstr5} and Theorem \ref{thm0413} all use a specific subspace and its cosets in the rebuilding process. This method of rebuilding can be generalized by using an arbitrary subspace as explained below.

Let $T=\{v_0,\dots,v_{k-1}\}$ be a set of vectors generating the code in Construction \ref{cnstr5} with $r^m$ rows and $r$ parities. Suppose $e$ columns $[0,e-1]$ are erased. Let $Z$ be a proper subspace of $\mathbb{F}_r^m$. In order to rebuild the erased nodes, in each parity column $l \in [0,r-1]$,  access the zigzag elements $z_i^l$ for $i\in X_l$, and $X_l$ is a union of cosets of $Z$.
In each surviving node,  access all the elements that are in the zigzag sets $X_l$ of parity $l$. More specifically,  access element $a_{i,j}$ in the surviving column $j\in [e,k-1]$ if $i+l v_j  \in X_l$.
Hence, in the surviving column $j$ and parity $l$, we access elements in rows $X_l-lv_j$. In order to make the rebuilding successful we impose the following conditions on the sets $X_0,...,X_l$. Since the number of equations needed is at least as the number of erased elements, we require
\begin{equation}\label{eq04033}
\sum_{l=0}^{r-1} |X_l|= er^{m}.
\end{equation}
Moreover, we want the equations to be solvable, hence for any erased column $i\in [0,e-1]$,
\begin{equation}\label{eq04032}
 \cup_{l=0}^{r-1}X_l-lv_i=[0,r^m-1] \text{ multiplicity } e,
\end{equation}
which means if the union is viewed as a multi-set, then each element in $[0,r^m-1]$ appears exactly $e$ times. This condition makes sure that the equations are solvable by Hall's theorem (see Theorem \ref{thm0413}). Under these conditions we would like to minimize the ratio, i.e., the number of accesses which is,
\begin{equation}\label{eq04031}
\min_{X_0,\dots,X_{r-1}} \sum_{j=e}^{k-1} |\cup_{l=0}^{r-1} (X_l-lv_j)|.
\end{equation}
In summary, for the {\bfseries generalized rebuilding algorithm} one first chooses a subspace $Z$, and then solves the minimization problem in \eqref{eq04031} subject to \eqref{eq04033} and \eqref{eq04032}.

The following example interprets the minimization problem for a specific case.
\begin{xmpl}
Let $r=2,e=1$, i.e., two parities and one erasure, then equations \eqref{eq04033},\eqref{eq04032} becomes
\begin{equation*}
|X_0|+|X_1|=2^m,\\
X_0\cup X_1+v_0=[0,2^m-1].
\end{equation*}
Therefore $X_1+v_0=\overline{X_0}.$ The objective function in \eqref{eq04031} becomes,
\begin{eqnarray*}
\min_{X_0,X_1} \sum_{j=1}^{k-1} |X_0 \cup X_1+v_j| = \min_{X_0} \sum_{j=1}^{k-1} |X_0 \cup(\overline{X_0}+v_0+v_j)|.
\end{eqnarray*}
Each $v_0+v_j$ defines a permutation $f_{v_0+v_j}$ on the cosets of $Z$ by $f_{v_0+v_j}(A)=A+v_0+v_j$ for a coset $A$ of $Z$. If $v_0+v_j\in Z$ then $f_{v_0+v_j}$  is the identity permutation and $|X_0 \cup(\overline{X_0}+v_0+v_j)|=2^m$, regardless of the choice of $X_0$. However, if $v_0+v_j\notin Z$, then $f_{v_0+v_j}$ is of order $2$, i.e., it's composed of disjoint cycles of length $2$. Note that if $f_{v_0+v_j}$ maps $A$ to $B$ and only one of the cosets  $A,B$ is contained in $X_0$, say $A$, then only $A$ is contained in $X_0 \cup(\overline{X_0}+v_0+v_j)$. On the other hand, if both $A,B\in X_0 \text{ or } A,B\notin X_0$ then, $$A,B\subseteq X_0 \cup(\overline{X_0}+v_0+v_j).$$ In other words, $(A,B)$ is a cycle in $f_{v_0+v_j}$ which is totally contained in $X_0\text{ or in } \overline{X_0}$.  Define $N_j^X$ as the number of cycles $(A,B)$ in the permutation $f_{v_0+v_j}$ that are totally contained in $X$ or in $\overline{X}$, where $X$ is a union of some cosets of
$Z$. It is easy to see that the minimization problem is equivalent to minimizing
\begin{equation} \label{0512}
\min_{X} \sum_{j=1}^{k-1}N_j^X.
\end{equation}
In other words, we want to find a set $X$ which is a union of cosets of $Z$, such that the number of totally contained or totally not contained cycles in the permutations defined by $v_j+v_0$, $j \in [1,k-1]$ is minimized.
\end{xmpl}
From the above example, we can see that given a non-optimal code with two parities and one erasure, finding the solution in \eqref{0512} requires minimizing for the sum of these $k-1$ permutations, which is an interesting combinatorial problem. Moreover, by choosing a different subspace $Z$  we might be able to get a better rebuilding algorithm than that in Construction \ref{cnstr1} or Theorem \ref{thm0413}.


\section{Concluding Remarks}
\label{summary}

In this paper, we described explicit constructions of the first known systematic $(n,k)$ MDS array codes with $n-k$ equal to some constant, such that the amount of information needed to rebuild an erased column equals to $1/(n-k)$, matching the information-theoretic lower bound.
While the codes are new and interesting from a theoretical perspective, they also provide an exciting practical solution, specifically, when $n-k=2$, our zigzag codes are the best known alternative to RAID-6 schemes. RAID-6 is the most prominent scheme in storage systems for combating disk failures\cite{Shuki-evenodd}-\cite{star-code}. Our new zigzag codes provide a RAID-6 scheme that has optimal update (important for write efficiency), small finite field size (important for computational efficiency) and optimal access of information for rebuilding - cutting the current rebuilding time by a factor of two.

We note that one can add redundancy for the sake of lowering the rebuilding ratio. For instance, one can use three parity nodes instead of two. The idea is that the third parity is not used for protecting data from erasures, since in practice, three concurrent failures are unlikely. However, with three parity nodes, we are able to rebuild a single failed node by accessing only $1/3$ of the remaining information (instead of $1/2$). An open problem is to construct codes that can be extended in a simple way, namely, codes with three parity nodes such that the first two nodes ensure a rebuilding ratio of $1/2$ and the third node further lowers the ratio to $1/3$. Hence, we can first construct an array with two parity nodes and when needed, extend the array by adding an additional parity node to obtain additional improvement in the rebuilding ratio.

Another future research direction is to consider the ratio of read accesses in the case of a write (update) operation.  For example, in an array code with two parity nodes, in order to update a single information element, one needs to read at least three elements and write three elements, because we need to know the values of the old information and old parities and compute the new parity elements (by subtracting the old information from the parity and adding the new information). However, an interesting observation, in our optimal code construction with two parity nodes, is if we update all the information in the first column and the rows in the first half of the array (see Figure \ref{fig2}), we do not need to read for computing the new parities, because we know the values of all the information elements needed for computing the parities. These information elements take about half the size of the entire array. So in a storage system we can cache the information to be written until most of these elements needs update (we could arrange the information in a way that these elements are often updated at the same time), hence, the ratio between the number of read operations and the number of new information elements is relatively very small. Clearly, we can use a similar approach for any other systematic column. In general, given $r$ parity nodes, we can avoid redundant read operations if we update about $1/r$ of the array.

%
%
\appendices
%
%
\section{Proof of Theorem \ref{zigzag-sets}}
\label{app1}
In order to prove Theorem \ref{zigzag-sets},
we use the well known Combinatorial Nullstellensatz by Alon \cite{Alon-polynomial-method}:
\begin{thm} (Combinatorial Nullstellensatz) \cite[Th 1.2]{Alon-polynomial-method}
\label{polynomial-method}
Let $\mathbb{F}$ be an arbitrary field, and let $f=f(x_1,...,x_q)$ be a polynomial in $\mathbb{F}[x_1,...,x_q]$. Suppose the degree of $f$ is $\deg(f)=\sum_{i=1}^q t_i$, where each $t_i$ is a nonnegative integer, and suppose the coefficient of $\prod_{i=1}^q x_i^{t_i}$ in $f$ is nonzero. Then, if $S_1,...,S_n$ are subsets of $\mathbb{F}$ with $|S_i| > t_i$, there are
$s_1\in S_1,s_2\in S_2,...,s_q\in S_q$ so that $$f(s_1,...,s_q)\neq 0.$$
\end{thm}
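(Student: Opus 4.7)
The plan is to prove the Combinatorial Nullstellensatz by contradiction, reducing the problem to a unique‑interpolation lemma over a product of finite sets. First I would establish the following preliminary lemma: if $g(x_1,\dots,x_q)\in\mathbb{F}[x_1,\dots,x_q]$ satisfies $\deg_{x_i} g < |S_i|$ for every $i$ and vanishes on the product $S_1\times\cdots\times S_q$, then $g$ is identically zero. This can be shown by induction on $q$: the base case $q=1$ is the standard fact that a univariate polynomial with more roots than its degree is the zero polynomial, and the inductive step follows by writing $g=\sum_{j=0}^{|S_q|-1} g_j(x_1,\dots,x_{q-1})\,x_q^{\,j}$, fixing $(x_1,\dots,x_{q-1})$ at an arbitrary point of $S_1\times\cdots\times S_{q-1}$, invoking the univariate case to force every $g_j$ to vanish on $S_1\times\cdots\times S_{q-1}$, and then applying the induction hypothesis to each $g_j$.

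Next I would proceed by contradiction, assuming $f(s_1,\dots,s_q)=0$ for every $(s_1,\dots,s_q)\in S_1\times\cdots\times S_q$. For each $i$ define $\prod_{s\in S_i}(x_i-s)=x_i^{|S_i|}-p_i(x_i)$ with $\deg p_i<|S_i|$, so that the identity $x_i^{|S_i|}=p_i(x_i)$ holds on $S_i$. I would then produce a polynomial $\tilde f$ by iteratively replacing, inside any monomial whose $x_i$‑degree is at least $|S_i|$, one factor $x_i^{|S_i|}$ by $p_i(x_i)$, continuing until $\deg_{x_i}\tilde f<|S_i|$ for all $i$. By construction $\tilde f$ agrees with $f$ at every point of $S_1\times\cdots\times S_q$, so $\tilde f$ vanishes there as well.

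The crux of the argument, and the only delicate point, is to show that the coefficient of the extremal monomial $\prod_i x_i^{t_i}$ is preserved under this reduction. The key observation is that each replacement $x_i^{|S_i|}\leftarrow p_i(x_i)$ applied to a monomial $x^{\alpha}$ strictly decreases its total degree, since $\deg p_i<|S_i|$. Consequently, any monomial of $f$ of total degree strictly less than $\sum_i t_i$ can only spawn, after reduction, monomials of degree strictly less than $\sum_i t_i$; and any monomial of total degree exactly $\sum_i t_i=\deg f$ that undergoes even one reduction step produces only monomials of strictly smaller degree. The only degree‑$\sum_i t_i$ monomial whose $x_i$‑exponent already satisfies $t_i<|S_i|$ for every $i$, and which therefore is never touched, is $\prod_i x_i^{t_i}$ itself. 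Hence the coefficient of $\prod_i x_i^{t_i}$ in $\tilde f$ equals its coefficient in $f$, which is nonzero by hypothesis. Thus $\tilde f$ is a nonzero polynomial with $\deg_{x_i}\tilde f<|S_i|$ that vanishes on $S_1\times\cdots\times S_q$, contradicting the preliminary lemma and completing the proof. The main obstacle is precisely the bookkeeping in this last step, namely verifying that no cascade of reductions can accidentally feed new mass into the extremal coefficient; the degree‑monotonicity argument above is what controls it.
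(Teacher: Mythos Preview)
The paper does not prove this theorem; it is quoted verbatim from Alon's paper \cite{Alon-polynomial-method} and used as a black box in Appendix~\ref{app1}. Your argument is correct and is essentially Alon's original proof: the preliminary interpolation lemma is his Lemma~2.1, and the reduction of $f$ modulo the polynomials $\prod_{s\in S_i}(x_i-s)$ together with the degree-tracking argument is exactly the proof of his Theorem~1.2.

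One minor wording issue: when you write ``the only degree-$\sum_i t_i$ monomial whose $x_i$-exponent already satisfies $t_i<|S_i|$ for every $i$ \ldots\ is $\prod_i x_i^{t_i}$ itself,'' this is not literally true, since other top-degree monomials may also have every exponent below $|S_i|$. What your degree-monotonicity argument actually establishes, and what suffices, is that among all possible contributions to the coefficient of the specific monomial $\prod_i x_i^{t_i}$ in $\tilde f$, only the unreduced copy of $\prod_i x_i^{t_i}$ from $f$ survives: reduced monomials drop in total degree and so cannot hit a top-degree term, and unreduced top-degree monomials other than $\prod_i x_i^{t_i}$ remain themselves and contribute nothing to that particular coefficient. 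With that clarification the proof is complete.
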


\begin{IEEEproof}[Proof of Theorem \ref{zigzag-sets}]
Assume the information of $A$ is given in a column vector $W$ of length $pk$, where column $i\in [0,k-1]$ of $A$ is in the row set $[(ip,(i+1)p-1]$ of $W$. Each systematic node $i$, $i \in [0,k-1]$, can be represented as $Q_iW$ where $Q_i=[0_{p\times pi},I_{p\times p},0_{p\times p(k-i-1)}]$. Moreover define $Q_{k}=[I_{p\times p}, I_{p\times p},...,I_{p\times p}],Q_{k+1}=[x_0P_0,x_1P_1,...,x_{k-1}P_{k-1}]$ where the $P_i$'s are permutation matrices (not necessarily distinct) of size $p\times p$, and the $x_i$'s are variables, such that $C_{k}=Q_{k}W,C_{k+1}=Q_{k+1}W$. The permutation matrix  $P_i=(p^{(i)}_{l,m})$ is defined as $p^{(i)}_{l,m}=1$ if and only if $a_{m,i}\in Z_l$. In order to show that there exists such MDS code, it is sufficient to show that there is an assignment for the intermediates $\{x_i\}$ in the field $\mathbb{F}$, such that for any set of integers $\{s_1,s_2,...,s_k\}\subseteq [0,k+1]$ the matrix $Q=[Q_{s_1}^T,Q_{s_1}^T,...,Q_{s_k}^T ]$ is of full rank. It is easy to see that if the parity column $C_{k+1}$ is erased i.e., $k+1 \notin \{s_1,s_2,...,s_k\}$ then $Q$ is of full rank. If $k\notin \{s_1,s_2,...,s_k\}\text{ and } k+1\in \{s_1,s_2,...,s_q\}$ then $Q$ is of full rank if none of the $x_i$'s equals to zero. The last case is when both $k,k+1\in \{s_1,s_2,...,s_k\}$, i.e.,  there are $0\leq i<j\leq k-1$ such that $i,j\notin \{s_1,s_2,...,s_k\}.$ It is easy to see that in that case $Q$ is of full rank if and only if the submatrix
\[B_{i,j}= \left( \begin{array}{cc}
x_iP_i &x_jP_j  \\
I_{p\times p} & I_{p\times p}  \end{array} \right)\]
is of full rank. This is equivalent to $ \det(B_{i,j})\neq 0$. Note that $\deg( \det(B_{i,j}))=p$ and the coefficient of $x_i^p$ is $\det(P_i)~\in \{1,-1\}$.
Define the polynomial $$T=T(x_0,x_1,...,x_{k-1})=\prod_{0\leq i<j\leq k-1}\det(B_{i,j}),$$ and the result follows if there are elements $a_0,a_1,..,a_{k-1}\in \mathbb{F}$ such that $T(a_0,a_1,...,a_{k-1})\neq 0.$ $T$ is of degree $p\binom{k}{2}$ and the coefficient of
$\prod_{i=0}^{k-1} x_i^{p(k-1-i)}$ is $\prod_{i=0}^{k-1} \det(P_i)^{k-1-i}\neq 0$. Set for any $i, S_i=\mathbb{F} \backslash 0$ in Theorem \ref{polynomial-method}, and the result follows.
\end{IEEEproof}

\section{Proof of Theorem \ref{monotone function}}
\label{app2}
In this section we prove Theorem \ref{monotone function}. We will need some definitions and theorems first.

For the rebuilding of node $i$ by row and zigzag sets $\mathbf{S}=\{S_0,\dots,S_{p-1}\}$, define the number of intersections by
$$I(i|\mathbf{S})=\sum_{S\in \mathbf{S}}|S|-|\cup_{S\in \mathbf{S}}S|=pk-|\cup_{S\in \mathbf{S}}S|.$$ Moreover define the number of total intersections in an MDS array code $\mathcal{C}$ as $$I(\mathcal{C})=\sum_{i=0}^{k-1}\max_{\mathbf{S}\text{ rebuilds } i}I(i|\mathbf{S}).$$ Now define $h(k)$ to be the maximal possible intersections over all $(k+2,k)$ MDS array codes, i.e., $$h(k)=\max_{\mathcal{C}}I(\mathcal{C}).$$
For example, in Figure \ref{fig:shapes} the rebuilding set for column $1$ is $\mathbf{S}=\{R_0,R_1,Z_0,Z_1\}$, the size in equation (\ref{eq:77}) is $8$, and $I(1|\mathbf{S})=4$.

The following theorem gives a recursive bound for the maximal number of intersections.

\begin{thm}
\label{th:inequality}
Let $q\leq k\leq p$ then $h(k)\leq \frac{k(k-1)h(q)}{q(q-1)}$.
\end{thm}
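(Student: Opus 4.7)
The plan is to prove this recursive bound by a double-counting argument that compares a code $\cC$ on $k$ systematic columns to all of its shortenings onto $q$-subsets of systematic columns. The key preliminary observation is a per-column decomposition of the intersection count: for a rebuild strategy $\mathbf{S}=\{S_0,\dots,S_{p-1}\}$ for column $i$ in $\cC$, let $m(e)$ denote the number of sets of $\mathbf{S}$ containing element $e$. Then
$$I(i|\mathbf{S}) = \sum_{S\in\mathbf{S}}|S| - \abs{\cup_{S\in\mathbf{S}} S} = \sum_{e:\, m(e)\ge 1}(m(e)-1).$$
Since every element of the erased column $i$ lies in exactly one set of $\mathbf{S}$, the contribution from column $i$ is zero, and I can write $I(i|\mathbf{S}) = \sum_{c\neq i} I_c(i|\mathbf{S})$, where $I_c(i|\mathbf{S})$ is the partial sum restricted to elements of column $c$.

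Next, for any $q$-subset $Q\subseteq [0,k-1]$ of systematic columns, I would shorten $\cC$ by setting the columns outside $Q$ to zero. Shortening preserves the MDS property, so this yields a $(q+2,q)$ MDS array code $\cC_Q$ whose row and zigzag sets are precisely the restrictions $\{S\cap Q : S\in R\cup Z\}$. A rebuild strategy $\mathbf{S}$ in $\cC$ for $i\in Q$ therefore restricts to a valid rebuild strategy $\mathbf{S}|_Q$ in $\cC_Q$, with intersection count
$$I_Q(i|\mathbf{S}) \;=\; \sum_{c\in Q,\, c\neq i} I_c(i|\mathbf{S}) \;\le\; I_i^*(\cC_Q),$$
where $I_i^*(\cC_Q)$ denotes the maximum over all rebuild strategies for $i$ in $\cC_Q$. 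This step is where the MDS shortening, and the fact that per-column contributions to $I(i|\mathbf{S})$ are unaffected by restriction, both enter.

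Now I would carry out the double counting. Fix an optimal strategy $\mathbf{S}^*$ for each column $i$ in $\cC$, so $I(i|\mathbf{S}^*)=I_i^*(\cC)$. Summing the per-column decomposition over all $q$-subsets containing $i$, and noting that a fixed $c\neq i$ belongs to exactly $\binom{k-2}{q-2}$ such subsets,
$$\sum_{Q\ni i,\,|Q|=q} I_Q(i|\mathbf{S}^*) \;=\; \binom{k-2}{q-2}\sum_{c\neq i} I_c(i|\mathbf{S}^*) \;=\; \binom{k-2}{q-2}\, I_i^*(\cC).$$
Combining with $I_Q(i|\mathbf{S}^*)\le I_i^*(\cC_Q)$, summing over $i$, and swapping the order of summation,
$$\binom{k-2}{q-2}\, I(\cC) \;\le\; \sum_Q \sum_{i\in Q} I_i^*(\cC_Q) \;\le\; \binom{k}{q} h(q),$$
since each $(q+2,q)$ sub-code $\cC_Q$ contributes at most $h(q)$. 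Dividing by $\binom{k-2}{q-2}$ yields $I(\cC)\le \frac{k(k-1)}{q(q-1)}h(q)$, and taking the maximum over $\cC$ gives the claim. The main obstacle is the step establishing that the restriction of an arbitrary rebuilding strategy in $\cC$ to $Q$ is still a legitimate rebuilding strategy in the shortened MDS code $\cC_Q$ with the expected intersection count; once the per-column decomposition is set up cleanly this becomes essentially automatic, and the rest of the argument is a routine binomial manipulation.
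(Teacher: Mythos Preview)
Your proof is correct and follows essentially the same route as the paper: a per-column decomposition $I(i|\mathbf{S})=\sum_{c\neq i}I_c(i|\mathbf{S})$, restriction of rebuilding sets to $q$-subsets (the paper writes this as $\mathbf{S}^i_T$ and notes $I(j,\mathbf{S}^i)=I(j,\mathbf{S}^i_T)$ for $i,j\in T$), and the identical double-count $\binom{k-2}{q-2}h(k)\le\binom{k}{q}h(q)$. The only cosmetic difference is that the paper phrases the per-column contribution directly as $I(j,\mathbf{S}^i)$ rather than via the multiplicity function $m(e)$, and it explicitly remarks that the restricted code is also optimal-update (not just MDS), which keeps it in the class over which $h(q)$ is defined.
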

\begin{IEEEproof}
Let $A$ be an information array of size $p \times k$. Construct a MDS array code $\cC$ by the row sets and the zigzag sets that reaches the maximum possible number of intersections, and suppose $\mathbf{S}^i$ achieves the maximal number of intersections for rebuilding column $i$, $i \in [0,k-1]$. Namely the zigzag sets $Z$ of the code $\cC$ and the rebuilding sets $\mathbf{S}^i$ satisfy that,
$$h(k)=I(\cC)=\sum_{i=0}^{k-1}\max_{\mathbf{S}\text{ rebuilds } i}I(i|\mathbf{S})=\sum_{i=0}^{k-1} I(i|\mathbf{S}^i).$$
For a subset of columns $T\subseteq [0,k-1]$ and a rebuilding set $\mathbf{S}^i=\{S_0,...,S_{p-1}\}$ we denote the restriction of $\mathbf{S}^i$ to $T$ by $\mathbf{S}^i_T=\{S_{0,T},...,S_{p-1,T}\}$, where
$S_{l,T}=\{a\in S_l:a\text{ is in columns } T\}$. Denote by
$$I(j,\mathbf{S}^i)=\sum_{l=0}^{p-1}|S_l\cap j|-|(\cup_{l=0}^{p-1}S_l)\cap j|$$
the number of intersections in column $j$ while rebuilding column $i$ by $\mathbf{S}^i$.
It is easy to see that
$$I(i|\mathbf{S}^i)=\sum_{j:j \neq i}I(j,\mathbf{S}^i)$$
and thus
$$h(k)=\sum_{i,j:j \neq i}I(j,\mathbf{S}^i).$$
Note also that if $i \neq j$ and $i,j \in T$, then
\begin{equation}
I(j,\mathbf{S}^i)=I(j,\mathbf{S}^i_T).
\label{eq:1.1}
\end{equation}
Hence
\begin{align}
\binom{k-2}{q-2}h(k)&=\binom{k-2}{q-2}\sum_{\substack{i,j:\\j\neq i}}I(j,\mathbf{S}^i)\nonumber\\
&=\sum_{\substack{i,j:\\j\neq i}}\quad\sum_{\substack{T\subseteq [0,k-1]:\\i,j\in T,|T|=q}}I(j,\mathbf{S}^i)\nonumber\\
&=\sum_{\substack{i,j:\\j\neq i}}\quad\sum_{\substack{T\subseteq [0,k-1]:\\i,j\in T,|T|=q}}I(j,\mathbf{S}^i_T)\nonumber\\
&=\sum_{\substack{T\subseteq [0,k-1]:\\|T|=q}}\quad\sum_{\substack{i,j\in T:\\i\neq j}}I(j,\mathbf{S}^i_T)\nonumber\\
&\leq\sum_{\substack{T\subseteq [0,k-1]:\\|T|=q}}h(q) \label{eq0429}\\
&=\binom{k}{q}h(q).\nonumber
\label{eq:34}
\end{align}
Inequality \eqref{eq0429} holds because the code restricted in columns $T$ is a $(q+2,q)$ MDS and optimal-update code, and $h(q)$ is the maximal intersections among such codes.
Hence,
$$h(k)\leq \frac{\binom{k}{q}h(q)}{\binom{k-2}{q-2}}=\frac{k(k-1)h(q)}{q(q-1)},$$
and the result follows.
\end{IEEEproof}

For a $(k+2,k)$ MDS code $\cC$ with $p$ rows the \emph{rebuilding ratio} $R(\mathcal{C})$ can be written as
$$R(\mathcal{C}) = \frac{k(p(k-1)-I(\mathcal{C})+p)}{p(k+1)k}=1-\frac{I(\mathcal{C})+pk}{p(k+1)k}.$$
Notice that in the two parity nodes, we access $p$ elements because each erased element must be rebuilt either by row or by zigzag. Thus we have the term $p$ in the above equation.
And the \emph{ratio function} for all $(k+2,k)$ MDS codes with $p$ rows is $$R(k)=\min_{\cC}{R(\cC)}=1-\frac{h(k)+pk}{p(k+1)k}.$$

\begin{IEEEproof}[Proof of Theorem \ref{monotone function}]
Consider a $(k+2,k)$ code with $p$ rows and assume a systematic node is erased.
In order to rebuild it, $p$ row and zigzag sets are accessed.
Let $x\text{ and } p-x$ be the number of elements that are accessed from the first and the second parity respectively. W.l.o.g we can assume that $x\geq \frac{p}{2}$, otherwise $p-x$ would satisfy it. Each element of these $x$ sets is a sum of a set of size $k$. Thus in order to rebuild the node, we need to access at least $x(k-1)\geq \frac{p(k-1)}{2}$ elements in the $k-1$ surviving systematic nodes, which is at least half of the size of these nodes. So the number of intersections is no more than $\frac{pk(k-1)}{2}$.
Thus
\begin{equation} \label{half}
h(k) \le \frac{pk(k-1)}{2}.
\end{equation}
and the ratio function satisfies
$$R(k)=1-\frac{h(k)+pk}{pk(k+1)}\geq 1-\frac{\frac{pk(k-1)}{2}+pk}{pk(k+1)}=\frac{1}{2}.$$
So the rebuilding ratio is no less than $1/2$.

From Theorem \ref{th:inequality} we get,
\begin{equation}
h(k+1)\leq \frac{(k+1)kh(k)}{k(k-1)}=\frac{(k+1)h(k)}{k-1}.
\label{eq:2.34}
\end{equation}
Hence,
\begin{align}
R(k+1)&=1-\frac{h(k+1)}{p(k+1)(k+2)}-\frac{1}{k+2} \nonumber\\
&\geq  1-\frac{h(k)}{p(k-1)(k+2)}-\frac{1}{k+2} \nonumber\\
& =  1 - \frac{h(k)+p(k-1)}{p(k-1)(k+2)} \nonumber\\
& \ge  1- \frac{h(k)+pk}{pk(k+1)} \label{eq05011}\\
& =  R(k), \nonumber
\end{align}
where \eqref{eq05011} follows from \eqref{half}.
Thus the ratio function is nondecreasing.
\end{IEEEproof}

The lower bound of $1/2$ in the theorem can be also derived from the repair bandwidth \eqref{eq:tradeoff}.

\end{document}